\newtheorem{theorem}{Theorem}
\newtheorem{lemma}{Lemma}
\newtheorem{corollary}{Corollary}
\newtheorem{remark}{Remark}
\newtheorem{definition}{Definition}
\numberwithin{equation}{section}
\newenvironment{proof}[1][Proof]{\noindent\textbf{#1.} }{\ \rule{0.5em}{0.5em}}
\renewcommand{\epsilon}{\varepsilon}
\newcommand{\ket}[1]{\mathop{\left|#1\right>}\nolimits}
\newcommand{\bk}[2]{\langle #1 | #2 \rangle}
\newcommand{\abs}[1]{\left\vert #1 \right\vert}
\newcommand{\kb}[2]{| #1\rangle\!\langle #2 |}
\newcommand{\Tr}[2]{\mathop{{\mathrm{Tr}}_{#1}} (#2) }
\newcommand{\dg}{\dagger}
\newcommand{\mumax}{{\mu_{\text{max}}}}
\newcommand{\fl}[1]{{\lfloor #1 \rfloor}}
\newcommand{\cl}[1]{{\lceil #1 \rceil}}
\def\U{\mathcal{U}}
\def\bE{\mathbb{E}}
\def\Pr{\mathrm{Pr}}
\def\QIC{\mathrm{QIC}}
\def\bbE{\mathbb{E}}
\def\ZO{\{0,1\}}
\def\DISJ{\mathrm{DISJ}}
\def\AND{\mathrm{AND}}
\def\OR{\mathrm{OR}}
\let\@copyrightspace\relax
\begin{document}

\title{Practical Quantum Appointment Scheduling}

\author{
Dave Touchette \thanks{Institute for Quantum Computing and Department of Combinatorics and Optimization,
University of Waterloo, and Perimeter Institute for Theoretical Physics.
} 
\and
Benjamin Lovitz \thanks{Institute for Quantum Computing and Department of Physics and Astronomy, University of Waterloo.} 
\and 
Norbert L\"{u}tkenhaus \thanks{Institute for Quantum Computing and Department of Physics and Astronomy, University of Waterloo, and Perimeter Institute for Theoretical Physics.}
}

\maketitle


\begin{abstract}
We propose a protocol based on coherent states and linear optics operations
for solving the appointment-scheduling problem.
Our main protocol leaks strictly less information about each party's input
than the optimal classical protocol, even when considering experimental
errors. Along with the ability to generate constant-amplitude coherent states
over two modes, this protocol requires the ability to transfer these modes
back-and-forth between the two parties multiple times with  low coupling
loss. The implementation requirements are thus still challenging.  Along the way, we develop new tools to study
quantum information cost of interactive protocols in the finite regime.
\end{abstract}

\section{Introduction}

In 2-party communication complexity, the main figure of merit is the minimum 
amount of communication required to perform a given distributed information-processing 
task. Say Alice is given some input $x$ and Bob is given some input $y$, then they wish to 
compute some relation $T$ evaluated at the joint input $(x,y)$, i.e.~they wish to 
both output an element of the set $T(x, y)$. Their goal is to minimize the amount of communication required to do so.  
If $x, y \in \{ 0 , 1 \}^n$ and $T(x, y) \subseteq \{0, 1\}^m$, then this complexity is at most $n+m$ bits: 
Alice can start by sending $n$ bits to Bob to communicate $x$, and then Bob can compute an element of $T(x, y)$
and transmit it back to Alice using $m$ bits.
Can they do significantly better?
In this work, we are interested in a variant of the communication complexity model, the information 
complexity model, which instead ask what is the minimum amount of \emph{information} Alice and 
Bob must \emph{leak} to each other about their inputs, irrespective of the amount of communication 
required to minimize this information leakage.

For both the communication and information complexity models, the complexity depends heavily on 
what resources are allowed and accounted for. Is communication done over classical or quantum channels? 
Are Alice and Bob allowed to flip random coins? Are they allowed to pre-share randomness or entanglement? 
In this work, we focus on how much advantage in terms of information leakage they can get by exchanging 
quantum rather than classical messages.
We study this in a quantum honest-but-curious type of model, in which we want the parties to exchange the correct messages, but they might collect as much information as possible about each other's input.
It is known that for some tailored problems, exponential savings 
are possible if Alice and Bob have access to perfect local quantum computers and perfect quantum 
communication channels~(see, for example, \cite{raz1999exponential, gavinsky2007exponential}). If we wish to limit Alice and Bob to quantum 
operations that should be experimentally accessible in the near future, can they still hope to 
achieve a quantum advantage in terms of information leakage?

We show that indeed they can. More precisely, we focus on quantum protocols requiring 
coherent state messages over two optical modes that are manipulated with linear optics 
operations and do not require any pre-shared entanglement or any quantum memory from 
honest participants. We compare such protocols with the best classical protocols for which we 
allow both local and shared randomness for free in order to minimize the information leakage. 
We also allow these classical resources to be used in our quantum protocols, appropriately accounting for them while quantifying information leakage. We find that indeed, 
with 
experimental parameters  that are challenging but should be reachable in the near future,
it is possible to obtain such a quantum advantage in terms of information leakage.
In fact, since we are mainly concerned with privacy here, Alice and Bob could be close 
to each other, in the same lab, and keep their inputs private but still have close-by set-ups 
which would perform much better than our data for clearly separated set-ups.

The problem we focus on is that of appointment scheduling: Alice and Bob each hold a 
calendar of their availabilities, and they wish to find a date of common availability, 
or agree that no such date exists. Viewing their inputs $x, y$ of available dates as subsets of a 
calendar $[n] = \{ 1, 2, \cdots, n \}$ on $n$ dates, they wish to output an element $i \in x \cap y$ if such an $i$ exists, or else 
output $\emptyset$ if $x \cap y = \emptyset$.
This problem, and in particular its binary variant, is one of the most well-studied problems in communication and information complexity.

It is known that quantum protocols can provide a quadratic 
speed-up in terms of information leakage for this problem~\cite{Buhrman:1998:QVC:276698.276713, HdW:2002, aaronson:2003}. 
It is also known that interaction is necessary to get an advantage over 
classical protocols~\cite{klauck2001interaction, JainRS03, braverman2015near}. As it turns out, for our protocols, 
interaction poses a challenge in a realistic experimental setting:
more interaction also implies more losses over the communication channels.
We show that there is nevertheless some regime 
for which we can obtain a quantum advantage.

Hence, our work is the first to propose an optical protocol that works with coherent states and maintains a quantum advantage in 
the more natural setting where Alice and Bob can directly interact.

\textbf{Related Works.} In Ref.~\cite{PhysRevA.89.062305}, Arrazola and L\"utkenhaus showed that a similar  practical quantum advantage 
was possible in terms of abstract cost of communication (the qubit size of the Hilbert space effectively used). 
The information complexity aspect of this protocol has been considered in Ref.~\cite{AT16}.
They studied a different communication model, the simultaneous message passing model, and a different problem, the equality function. 
In that model, Alice and Bob each send a simultaneous message to some referee who must then 
decide, using these messages only and no further information about Alice's and Bob's inputs, 
whether their inputs are equal. 
The advantage they show holds in the three party simultaneous message passing model (SMP) without 
shared randomness. However, considering the equality function in the direct interactive two-party model that
we focus on here, if we allow a single direct interaction between Alice and 
Bob (or even just a logarithmic length shared random string in the SMP model), they can solve the equality function at low cost.
A related SMP model coherent state protocol for evaluating the Euclidean distance between two real unit vectors was recently proposed in \cite{PhysRevA.95.032337}, and similar remarks apply.
Two other recently-proposed communication protocols which use coherent states are quantum retrieval games \cite{PhysRevA.93.062311} and quantum money schemes \cite{PhysRevA.95.062334}.

\textbf{Organization.} The remainder of the paper is structured as follows. In the next section we describe our practical 
quantum protocol for appointment scheduling, and analyse its 
behavior in an idealized setting. In the following section, we analyse it in a more 
realistic experimental setting, accounting for errors,  and explore the parameter 
space to find a reasonable regime in which our quantum protocol performs better 
than any classical protocol. We conclude by discussing our findings and opportunities 
for future work.

In  Appendix~\ref{leakage} we formally define the information leakage and introduce some properties which we use in Appendices~\ref{leakage_ideal} and~\ref{leakage_exp} to bound the information leakage of our protocol. In Appendix~\ref{mappingrev} we review a mapping proposed in~\cite{PhysRevA.95.032337} from pure state communication protocols to coherent state protocols, which we use in Appendices~\ref{app2} and~\ref{cohgrov} to develop two more coherent state appointment scheduling protocols.

\section{Coherent-state Protocol}\label{coh_prot}

In the idealized setting of quantum communication complexity, a protocol that achieves 
the quadratic quantum advantage, up to logarithmic terms, for appointment scheduling, 
is that of~\cite{Buhrman:1998:QVC:276698.276713}, essentially performing a distributed version of Grover 
search~\cite{grover1996fast, boyer1996tight}. Alice performs the ``inversion about the mean'' Grover 
iterations to find an intersecting date of availability, and she collaborates with Bob in order 
to implement the Grover ``oracle calls''.

For an $n$-date calendar, obtaining the full quadratic quantum advantage requires $\tilde{\Theta} (\sqrt{n})$ 
rounds of communication, while an improvement to $\tilde{\Theta} (\frac{n}{r})$ communication 
and information leakage requires $r$-round protocols~\cite{JainRS03, braverman2015near}, for $r \leq \sqrt{n}$.

In Ref.~\cite{PhysRevA.95.032337} a general mapping is proposed from any pure state quantum protocol to an analogous coherent state protocol (reviewed in Appendix~\ref{mappingrev}). In Appendix~\ref{cohgrov} we implement this mapping for the distributed Grover's search protocol to obtain essentially 
a quadratic quantum advantage in terms of  information leakage. 
Our implementation finds an efficient way to perform the “distributed oracle calls” for such a protocol.
Note that experimental implementations of Grover search using optics have already been performed, e.g., in Ref.~\cite{bhattacharya2002implementation}.
However, such an $\Theta (\sqrt{n})$-round protocol requires Alice to interfere $n$ modes 
together for each Grover iteration, and thus the experimental complexity of such a protocol 
grows very quickly with $n$.
Also taking into account experimental errors, such an 
approach quickly becomes impractical.

We instead focus on an alternative approach in which interfering two optical modes is 
always sufficient. The actual quantum part of the protocol focuses on a single date and wishes to determine 
whether this is an intersecting date. Viewing Alice's and Bob's input for the quantum part 
as single bits, $a$ and $b$, respectively, they thus wish to compute $AND (a, b)$. Let us call this 
quantum subroutine $\widetilde{\Pi}_A$, which can either output ``$0$'', ``$1$'' or ``Inconclusive''.

The approach we have taken to {ensure} that our protocol has low information leakage is similar in spirit to the one taken in Ref.~\cite{braverman2015near}: first subsample many dates to {ensure} that there are not too many intersections, and then run {a date-wise $AND$ protocol} that is only guaranteed to have low information leakage when the probability to find an intersection is low. Note that Jain, Radhakrishnan and Sen had proposed such a low information protocol for $AND$, which we review in Appendix~\ref{app2}. We also consider the coherent state mapping applied to this protocol in Appendix~\ref{app2}, but find that our protocol $\widetilde{\Pi}_A$ has lower information leakage, and in addition, it appears to be simpler from an experimental point-of-view. It is also easily extendable to a multi-party setting, by having the other parties act similarly to Bob.

To describe the protocol $\widetilde{\Pi}_A$, recall the mathematical definition of a two 
mode coherent state $\ket{\alpha, \beta} = \ket{\alpha} \otimes \ket{\beta}$, with
\begin{align*}
	\ket{\alpha} = \exp(-|\alpha|^2/2) \sum_{k=1}^\infty \frac{\alpha^k}{\sqrt{k !}} \ket{k},
\end{align*}
for $\alpha \in \mathbb{C}$ (we only make use of $\alpha \in \mathbb{R}$), as well as 
the action of the beam-splitter $R_\theta$ at angle $\theta$ on such a state:
\begin{align*}
	R_\theta \ket{\alpha, \beta} = \ket{\cos (\theta) \alpha - \sin (\theta) \beta , \sin (\theta) \alpha  + \cos (\theta) \beta  }.
\end{align*}
Depending on the parameter $r$, corresponding to the number of rounds of interaction, the angle
of the beam-splitter in the protocol is $\theta_r = \frac{\pi}{2r}$, so that $r \theta_r = \frac{\pi}{2}$.

\begin{framed}

\textbf{Protocol $\widetilde{\Pi}_A$} on inputs $a , b \in \{0, 1 \}$:

In the initialization phase, Alice prepares a two-mode register $C$ in state $\ket{\alpha, 0}$.
Then,  for rounds $i=1$ to $r$:

\begin{enumerate}

\item On $a=0$, Alice applies the identity map to register $C$ 
and sends the transformed state to Bob. On $a=1$, Alice instead  passes the two modes of register $C$ through the beamsplitter $R_{\theta_r}$ and then sends register $C$ to Bob.

\item On $b=0$,  Bob discards the state of register $C$, replaces it with a fresh 
copy of $\ket{\alpha, 0}$ and sends it to Alice. On $b=1$, Bob applies the identity map to register $C$ and returns it to Alice.
\end{enumerate}

After $r$ rounds, Alice measures each mode of register $C$ with single photon threshold detectors and 
communicates the result to Bob.
They generate their output as follows:
\begin{itemize}
\item If only the first mode clicks, they output ``$0$''.
\item If only the second mode clicks, they output  ``$1$'' .
\item If neither mode clicks, they output ``Inconclusive''.
\end{itemize}

\end{framed}

With no  losses, the amplitude $\alpha$ of the re-injected states can stay the same throughout.
Re-injecting states with decreasing amplitudes $\alpha_i$ is however useful when coherent states are transmitted back-and-forth over lossy channels, as studied in the next section.

In the ideal setting, this protocol evolves as follows on the different inputs.

\begin{samepage}
\textbf{Evolution of $\widetilde{\Pi}_A$ for different inputs:}

\begin{flushleft}
\textbf{On (0, 0):}
\begin{tabular}{l l l l}
$\ket{\alpha, 0} $ & $\rightarrow_A \ket{\alpha, 0}$ &$\rightarrow_B \ket{\alpha, 0} $ & $\rightarrow_A \cdots $
\end{tabular}\\
\textbf{On (0,1):}
\begin{tabular}{l l l l}
$\ket{\alpha, 0} $ & $\rightarrow_A \ket{\alpha, 0} $ & $\rightarrow_B \ket{\alpha, 0} $ & $\rightarrow_A \cdots $
\end{tabular}\\
\textbf{On (1, 0): }
\begin{tabular}{l l}
$\ket{\alpha, 0} \rightarrow_A \ket{\cos (\theta) \alpha, \sin (\theta ) \alpha}  \rightarrow_B \ket{\alpha, 0} \rightarrow_A \ket{\cos (\theta) \alpha, \sin (\theta ) \alpha} \rightarrow_B \ket{\alpha, 0} \rightarrow \cdots$  \\
\hspace{.2in}		$\cdots \rightarrow \ket{\alpha, 0}$ &
\end{tabular}\\
\textbf{On (1, 1):}\\
\begin{tabular}{l l}
$\ket{\alpha, 0} \rightarrow_A \ket{\cos (\theta) \alpha, \sin (\theta ) \alpha} \rightarrow_B \ket{\cos (\theta) \alpha, \sin (\theta ) \alpha} $ \\
\hspace{.2in}		 $\rightarrow_A \ket{\cos (2 \theta) \alpha, \sin (2 \theta ) \alpha} \rightarrow_B \ket{\cos (2 \theta) \alpha, \sin (2 \theta ) \alpha} $ \\
 \hspace{.3in} \vdots  \\ 
\hspace{.2in}$\rightarrow_B \ket{\cos (\frac{\pi}{2}) \alpha , \sin (\frac{\pi}{2}) \alpha} = \ket{0, \alpha} $.
\end{tabular}
\end{flushleft}
\end{samepage}

On (0,0) and (0,1) Alice and Bob's manipulations leave the state unchanged. On (1,0) Alice rotates the state and then Bob replaces it with $\ket{\alpha,0}$ in each round. On (1,1) Alice and Bob's manipulations bring the state to $\ket{0,\alpha}$ after $r$ rounds.

Using $\widetilde{\Pi}_A$, we recursively define a conclusive protocol $\Pi_A$ for $AND$ that only 
outputs ``$1$'' after a classical verification that the date indeed intersects.

\begin{framed}

\textbf{Protocol} $\Pi_A$ on inputs $ a , b \in \{0, 1 \}$:

\begin{enumerate}
\item Run Protocol $\widetilde{\Pi}_A$.
\item If $\widetilde{\Pi}_A$ returns ``0'', return output ``0''.
\item If $\widetilde{\Pi}_A$ returns ``1'', Alice and Bob exchange $a$ and $b$ and return $AND(a, b)$ as output.
\item If $\widetilde{\Pi}_A$ returns ``Inconclusive,'' restart $\Pi_A$.
\end{enumerate}

\end{framed}

Finally, we describe a protocol $\Pi_D$ for appointment scheduling on $n$-dates that works by running as 
a subroutine protocol $\Pi_A$ for determining if a single date intersects. It either outputs a 
date of intersection ``$i$'' $ \in [n]$ or, if they believe no such date exists, ``$\emptyset$''.
We abuse notation and write Alice's input set $x$ as an n-bit indicator variable, with $x_i=1$ if and only if $i \in x$, and similarly for Bob's $y$.

\begin{framed}

\textbf{Protocol} {$\Pi_D$} on inputs $x, y \in \{0, 1 \}^n$:

\begin{enumerate}
\item Using shared randomness, publicly sample $s$ dates with replacement. Denote this 
date set by $S$.
\item Alice sends $x_i$ to Bob for each $i \in S$.
\item If Bob find any  $i \in S$ with $x_i = y_i = 1$, he sends the smallest such $i$ to Alice, and both 
output this $i$. Else, they continue.
\item Run date-wise  the $\Pi_A$ protocol for all dates outside of $S$. 
\item If they find any $i$ such that $AND(x_i, y_i) = 1$, both output the smallest such~$i$. 
\item If they do not find any such $i$, output ``$\emptyset$''.
\end{enumerate}

\end{framed}

This protocol clearly solves the appointment scheduling problem. 
While it does not guarantee to find the earliest intersecting date, its output is nevertheless biased towards such an early date.
Note that this protocol is parameterized by $s \in \mathbb{N}$, the size of a sample from the input that Alice and Bob exchange.
As the number of rounds $r$ increases, Alice is required to interfere 
these modes through a beam-splitter at decreasing angle $\theta_r = \frac{\pi}{2r}$. 
In the ideal scenario, this allows the information leakage per date to decrease to $\Theta (\frac{\log r}{r})$ if there is no intersection.
Although the number of signals exchanged is $\Omega (nr)$, we can still prove a 
guaranteed quantum advantage in terms of information leakage.
The quantum subroutine $\widetilde{\Pi}_A$ is independent of the calendar size $n$, so only the global routine $\Pi_D$ depends on the calendar size.
We are thus able to handle huge inputs and still 
obtain a quantum advantage.
In particular, this allows us 
to avoid some finite size effects of working with small values of $n$.

This protocol solves the appointment-scheduling problem with no 
errors whenever the overall optical set-up is ideal, in particular if the single photon threshold detectors are perfect.
We also prove the following about the information leakage  of this protocol when run over 
lossless channels with perfect detectors.
See Appendix~\ref{leakage} for precise definitions of the information leakage ($\mathrm{QIC}$, as defined in Ref.~\cite{touchette2015quantum, LauriereT:2016}).
Here and throughout, $h(\cdot)$ is the binary entropy function,  $\log$ is taken in base $2$  and the natural logarithm is denoted $\ln$.

\begin{theorem}\label{QIC_general_ideal}
The following holds for the protocol $\Pi_D$ when it is run in an ideal experimental set-up. 
The protocol never errs and the information leakage satisfies
\begin{align*}
\mathrm{QIC} (\Pi_D) \leq & \;s + \log s + 1 \\
	&+ \frac{n}{ 1 - \exp (- \abs{\alpha}^2)} \; \max \Bigg[ \frac{2(2r+1)}{n} \; , \\
&\hspace{0.5in}  h\left(\frac{1}{2} (1 - F(r, \alpha))\right)
 + 2(2r+1)  \; h\left(\frac{2  \ln n }{s } + \frac{1}{n}\right)\Bigg]
\end{align*}
for all values of $n$ and $s$ satisfying $n\geq 4$ and $8\ln(n)\leq s \leq n$,
and in which
\begin{align*}
F(r, \alpha) = \exp \left[- r \abs{\alpha}^2 \left[1 - \cos \left(\frac{\pi}{2r}\right)\right] \right].
\end{align*}

\end{theorem}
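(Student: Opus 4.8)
The plan is to bound $\mathrm{QIC}(\Pi_D)$ by splitting $\Pi_D$ into its classical preamble (the sampling and bit-exchange steps 1--3) and the quantum date-wise phase (steps 4--5), invoking the subadditivity of $\mathrm{QIC}$ from Appendix~\ref{leakage} to treat the two contributions separately. For the preamble I would bound the leakage of Alice's sampled bits $\{x_i : i \in S\}$ by $s$, and the leakage of Bob's reply---which identifies either one of the (at most) $s$ sampled dates or the absence of an intersection---by $\log s + 1$. This reproduces the first line $s + \log s + 1$ and is the routine part of the argument.

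The substance is the quantum phase, in which $\Pi_A$ is run date-wise for the at most $n$ dates outside $S$. First I would observe that each invocation of $\Pi_A$ repeats the subroutine $\widetilde{\Pi}_A$ until a conclusive outcome; since on every input the two-mode final state carries total intensity $\abs{\alpha}^2$ (it is $\ket{\alpha,0}$ on $(0,0),(0,1),(1,0)$ and $\ket{0,\alpha}$ on $(1,1)$), a threshold detector fails to click with probability exactly $\exp(-\abs{\alpha}^2)$, so the number of repetitions is geometric with success probability $1 - \exp(-\abs{\alpha}^2)$ and expectation $1/(1 - \exp(-\abs{\alpha}^2))$. Combining this with the bound of $n$ on the number of dates yields the prefactor $\frac{n}{1 - \exp(-\abs{\alpha}^2)}$, after using the property from Appendix~\ref{leakage} that lets one pull the expected number of conclusive-run repetitions out as a multiplicative factor on the per-run leakage.

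It then remains to bound the leakage of a single run of $\widetilde{\Pi}_A$ on one date, which I would split according to whether the date intersects. For a non-intersecting date the two-mode states transmitted on the distinct single-bit inputs differ only by the beam-splitter rotation, and a direct coherent-state overlap computation gives the single-round inner product $\exp(-\abs{\alpha}^2(1 - \cos\theta_r))$ between $\ket{\alpha,0}$ and $\ket{\cos(\theta_r)\alpha,\sin(\theta_r)\alpha}$; taking the $r$-fold product over the rounds yields exactly $F(r,\alpha)$. Since two equiprobable pure states of real overlap $F$ carry Holevo information $h\!\left(\tfrac{1}{2}(1-F)\right)$ (the eigenvalues of their equal mixture being $\tfrac{1}{2}(1\pm F)$), this accounts for the term $h\!\left(\tfrac{1}{2}(1 - F(r,\alpha))\right)$. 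To control the extra leakage coming from the possibility that the date does intersect, I would use the Braverman-style subsampling argument: conditioned on the sample $S$ of size $s$ missing every intersecting date, a tail bound valid for $8\ln n \le s \le n$ shows that any fixed remaining date intersects with probability at most $\frac{2\ln n}{s} + \frac{1}{n}$; feeding this probability through the $2r+1$ messages of $\widetilde{\Pi}_A$, with the factor $2$ reflecting the two-way nature of the leakage (about Alice's bit and about Bob's bit), gives the additive term $2(2r+1)\,h\!\left(\frac{2\ln n}{s} + \frac{1}{n}\right)$. The outer maximum then combines this refined estimate with the elementary message-count bound $\frac{2(2r+1)}{n}$ per date, the latter guaranteeing validity in the degenerate regime where the overlap approaches $1$ and the binary entropies vanish.

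The main obstacle I anticipate is not the overlap computation but the interactive information-accounting: bounding the quantum information cost of the $r$-round subroutine $\widetilde{\Pi}_A$ requires a careful chain rule over rounds and directions (the source of the $2(2r+1)$ factor) together with a convexity argument that converts the small per-date intersection probability into the binary-entropy bound, all while correctly conditioning on the high-probability event that subsampling succeeded. Making the conditioning interact cleanly with the per-message leakage bounds---so that the tail-bound failure probability is absorbed rather than left dangling---is where I expect the technical care to concentrate, and it is precisely the role of the properties established in Appendix~\ref{leakage}.
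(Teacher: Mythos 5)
Your skeleton matches the paper's actual proof closely: the preamble bounded by its communication $s+\log s+1$, the rerunning-an-inconclusive-protocol lemma giving the $1/(1-\exp(-\abs{\alpha}^2))$ prefactor (valid precisely because the no-click probability is input-independent), the continuity-for-low-mass bound supplying the $2(2r+1)\,h(\cdot)$ term, and the two-pure-state entropy bound $h\left(\tfrac{1}{2}(1-F)\right)$ with $F$ the $r$-fold product overlap. The genuine gap is in the subsampling step. You assert unconditionally that, conditioned on the sample missing every intersection, a remaining date intersects with probability at most $\frac{2\ln n}{s}+\frac{1}{n}$, and you read the $\frac{2(2r+1)}{n}$ branch of the max as a backstop for ``the degenerate regime where the overlap approaches $1$.'' Neither is right. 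The conditional bound can fail badly: if $\mu$ is supported on input pairs having exactly $n/2$ intersecting dates, then conditioned on $S_A=0$ --- an event of probability $2^{-s}$, tiny but nonzero --- the average remaining date still intersects with conditional probability $1/2$, far above $\frac{2\ln n}{s}+\frac{1}{n}\le \frac14+\frac1n$. The paper's proof needs a dichotomy on $\Pr[S_A=0]$: if $\Pr[S_A=0]\le 1/n$, the entire quantum phase is weighted by the factor $\Pr[S_A=0]$ (a weighting your sketch drops when you simply multiply the per-date cost by $n$), and the dimension bound of $2(2r+1)$ per date then yields the branch $\frac{2(2r+1)}{n}$; only when $\Pr[S_A=0]>1/n$ does the tail bound go through, because its derivation divides by $\Pr[S_A=0]$. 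So the max is not a cosmetic safeguard against vanishing entropies; it is the union of two mutually exclusive cases, and without that case split your tail-bound step is false as stated (also, it holds for the \emph{average} remaining date, via concavity, not for ``any fixed remaining date'').

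Two smaller points. First, your bound $h\left(\tfrac{1}{2}(1-F(r,\alpha))\right)$ for a non-intersecting date is the right answer, but the justification (Holevo information of the full transcript, with overlaps multiplying over rounds) is not automatic: $\QIC$ is a sum of per-message conditional mutual informations, and bounding each message separately gives only $r\, h\left(\tfrac{1}{2}(1-F(1,\alpha))\right)$ --- exactly the weaker bound the paper's alternative protocol in Appendix~\ref{app2} suffers, by the inequality $\sum_i h\left(\tfrac12(1-F_i)\right)\ge h\left(\tfrac12(1-\prod_i F_i)\right)$. The collapse to a single entropy term works here only because, in the purified view, Bob's memory accumulates identical copies of Alice's message, so the per-message sum telescopes to $H(C_{2r-1}B_{2r-1}\,|\,Y=0)$; that telescoping is the substance of the paper's lemma and must be proved, not assumed. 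Second, your per-date accounting covers only $\widetilde{\Pi}_A$ and omits step 3 of $\Pi_A$ (the classical exchange of $a$ and $b$ upon output ``$1$''); in the ideal zero-error setting this costs nothing because that step is reached only when $a=b=1$, but the argument needs to say so.
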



This upper bound on the quantum information leakage of our protocol in the ideal case can be optimized by minimizing the expression over $s$, $r$, and $\alpha$.
We are interested in the regime where $\alpha \approx 1$, $s \approx n^{2/3}$ and $r \approx n^{1/3}$; in that regime, all terms are at most $n^{2/3}$ up to logarithmic factors. (Notice that the $h(1/2(1-F(r, \alpha)))$ term inside the bracket scales like $1/r$ up to a logarithmic factor. Hence, the choice for $r$ is motivated by the desire to keep $n \; h(1/2(1-F(r, \alpha)))$ scaling as $n^{2/3}$, while a similar motivation for the next term motivate the choice for $s$.) We get the upper bound  $\QIC(\Pi_D) \in \widetilde{\mathcal{O}}(n^{2/3})$.  In contrast, the information leakage of any classical protocol is known from Ref.~\cite{braverman2013information}, and it is at least 0.48n for all n.  We plot this asymptotic improvement in Figure~\ref{fig:ideal_qu_adv}. 

We do not get the full quadratic speed-up that qubit-based quantum protocols can achieve. This restriction comes from the fact that we perform classical subsampling of size $s$. If we were guaranteed that there was at most a logarithmic number of intersections, then we could avoid subsampling (i.e., pick $s = 0$ in our protocol) and obtain the full quadratic speed-up, up to logarithmic terms, by choosing $r \approx n^{1/2}$. This follows since the argument of $h$ in the last term in the bracket would then be guaranteed by assumption (rather than by the subsampling, as done currently) to be essentially $1/n$ up to logarithmic factors. Another approach, similar to what was done in Ref.~\cite{braverman2015near}, would be to take the subsampling set to be quasi-linear in size, up to logarithmic terms, and then look for an intersection using a quantum protocol with low information leakage, say $\sqrt{n}$. But then, for the problem we wish to solve, one might just as well use this low information protocol to solve the appointment scheduling problem itself. We propose such a protocol in Appendix~\ref{cohgrov}, achieving information leakage at most $\sqrt{n}$ up to logarithmic factors. However, that protocol requires interfering $n$ optical modes together in a beamsplitter.
Such an approach quickly becomes impractical,
hence our choice of protocol with the classical subsampling.

\section{Accounting for Experimental Errors}
\label{sec:account_exp_error}

Our developments so far have assumed an ideal experimental setup. However, if we wish to 
model any practical implementation, we must take into account experimental errors. We now 
show how to slightly modify our protocol so that it is robust against these errors. 
We focus on three main sources of errors: losses incurred from limited channel transmissivity and fiber coupling, the efficiency of the threshold detectors, and the dark count probability of the threshold detectors.

The effect of the loss, characterized by the transmissivity parameter $\eta$, is the 
transformation $\ket{\alpha, \beta} \rightarrow \ket{\sqrt{\eta} \alpha, \sqrt{\eta} \beta}$ 
for one message in the protocol, and takes into account both transmission loss (which depends on the physical distance between Alice and Bob), and coupling loss (which is independent of the distance between Alice and Bob). This can be 
compensated in $\widetilde{\Pi}_A$ by taking 
larger values of $\alpha$ for the initial state as well as the reinjected state,
at the cost of an increase in information leakage.

The effect of detector efficiency, characterized by the efficiency parameter $\eta_{det}$, is similar to a loss when inputting the signals into the detectors, and is the transformation $\ket{\alpha, \beta} \rightarrow \ket{\sqrt{\eta_{det}} \alpha, \sqrt{\eta_{det}} \beta}$.

The effect of dark counts is to make the ideal protocol for $AND$ prone to errors. 
Indeed, the ideal protocol might be inconclusive, if no photon is detected, but 
whenever it returns output ``$0$'' or ``$1$'', this output can be trusted. It is no 
longer the case if there are dark counts: it is possible to get output ``$0$'' if $AND(a, b) = 1$, 
and vice-versa. 

Here is our modified protocol $\widetilde{\Pi}_A^\prime$; $\Pi_A$ and $\Pi_D$ stay unchanged, apart from now running $\widetilde{\Pi}_A^\prime$ rather than $\widetilde{\Pi}_A$ as a subroutine.

\begin{framed}

\textbf{Protocol $\widetilde{\Pi}_A^\prime$} on inputs $a , b \in \{0, 1 \}$:
Given $\alpha_{out}$ and $\eta$, let $\alpha_0=\frac{\alpha_{out}}{\eta^r}$ and $\alpha_1=\frac{\alpha_{out}}{\eta^{r-1/2}},\alpha_2=\frac{\alpha_{out}}{{\eta^{r-3/2}}}, \dots, \alpha_{r-1}=\frac{\alpha_{out}}{{\eta^{3/2}}}, \alpha_r=\frac{\alpha_{out}}{{\eta^{1/2}}}$.

In the initialization phase, Alice prepares a two-mode register $C$ in state $\ket{\alpha_0, 0}$.
Then, for rounds $i=1$ to $r$:

\begin{enumerate}

\item On $a=0$, Alice applies the identity map to register $C$ 
and sends the transformed state to Bob. On $a=1$, Alice instead passes the two-mode of register $C$ through the beamsplitter $R_{\theta_r}$ and then sends register $C$ to Bob.

\item On $b=0$, for round $i$, Bob discards the state of register $C$, replaces it with a fresh 
copy of $\ket{\alpha_i, 0}$ and sends it to Alice. On $b=1$, Bob applies the identity map to register $C$ and returns it to Alice.
\end{enumerate}

After $r$ rounds, Alice measures each mode of register $C$ with single photon threshold detectors and 
communicates the result to Bob.
They generate their output as follows:
\begin{itemize}
\item If only the first mode clicks, they output ``$0$''.
\item If only the second mode clicks, they output  ``$1$'' .
\item If neither mode clicks or both modes click, they output ``Inconclusive''.
\end{itemize}
\end{framed}

The extra classical verification in $\Pi_A$ in case of a ``$1$'' output of $\widetilde{\Pi}_A^\prime$ 
is to ensure that at $\Pi_A$ and then at $\Pi_D$ level, errors can only be ``one-sided'': it is possible that an 
intersecting date is not detected as such, but a non-intersecting date is never thought to be intersecting. 
This property of the error at the date level severely limits propagation of errors. We show 
in Appendix~\ref{leakage_exp} that, for dark count probability $p_{dark}$ for the measurement at the end of each execution of $\widetilde{\Pi}_A^\prime$, the overall error probability 
of $\Pi_D$ is $p_{dark}$ (rather than a bound $ \approx n \cdot p_{dark}$ obtained using the union bound when this extra check is not performed).

In particular, and in contrast to the practical fingerprinting protocol of Ref.~\cite{PhysRevA.89.062305}, the 
input size for which we can achieve a quantum advantage is not limited by $p_{dark}$ for us. It is rather the loss parameter $\eta$ which has 
a much bigger impact here, since quantum advantage for appointment scheduling requires interaction, and for $r$ 
rounds of interaction, the global effect of the loss is essentially $\eta^{2r}$. The following theorem provides 
bounds on the information leakage when the protocol is run while taking such experimental errors into account.

\begin{theorem}\label{QIC_general_exp}
The following holds for the protocol $\Pi_D$ when run with loss parameter $\eta$, dark count probability $p_{dark}$, and detector efficiency $\eta_{det}$.
With  $\alpha_0=\frac{\alpha_{out}}{{\eta^r}}$ and $\alpha_1=\frac{\alpha_{out}}{{\eta^{r-1/2}}},\alpha_2=\frac{\alpha_{out}}{{\eta^{r-3/2}}}, \dots, \alpha_{r-1}=\frac{\alpha_{out}}{{\eta^{3/2}}}, \alpha_r=\frac{\alpha_{out}}{{\eta^{1/2}}}$ in $\widetilde{\Pi}_A^\prime$,
the protocol $\Pi_D$ never outputs a date which is not intersecting, and the probability that the output is $\emptyset$ when there is an intersecting date is at most $p_{dark}$. The information leakage satisfies
\begin{align*}
\QIC (\Pi_D) \leq & \; s + \log s + 1 + \frac{2n}{ 1 - p} \; p_{dark} \\
		&+ \frac{n}{ 1 - p} \max \Bigg[\frac{2(2r+3)}{n}, \\
&\quad \quad \quad h\left(\frac{1}{2} (1 - \tilde{F} (r, \alpha_{out}, \eta))\right)+ 2(2r+3)  \; h\left(\frac{2  \ln n }{s } + \frac{1}{n}\right)\Bigg]\\
\end{align*}
for all values of $n$ and $s$ satisfying $n\geq 4$ and $8\ln(n)\leq s \leq n$,
and in which
\begin{align*}
\tilde{F} (r, \alpha_{out}, \eta) &= \exp \left[\frac{-(\eta^{-2r} - 1)}{(1- \eta^2)} \abs{\alpha_{out}}^2 \left[1 - \cos \left(\frac{\pi}{2r}\right)\right] \right],\\
p&=e^{- \eta_{det} \abs{\alpha_{out}}^2}(1-p_{dark})^2 +(1-e^{-\eta_{det} \abs{\alpha_{out}}^2}+e^{- \eta_{det} \abs{\alpha_{out}}^2}p_{dark})p_{dark}.
\end{align*}
($p$ is the probability of an inconclusive outcome.)
\end{theorem}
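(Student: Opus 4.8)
The plan is to reduce the analysis to the ideal case of Theorem~\ref{QIC_general_ideal} by tracking how the three error sources---loss $\eta$, detector efficiency $\eta_{det}$, and dark counts $p_{dark}$---perturb each quantity entering that bound, and then re-running the same information-theoretic decomposition of $\QIC(\Pi_D)$ from Appendix~\ref{leakage}. I would first settle correctness. The classical verification in step~3 of $\Pi_A$ forces a date to be declared intersecting only after Alice and Bob exchange $a,b$ and confirm $AND(a,b)=1$, so a non-intersecting date is never output, independently of dark counts. For the $\emptyset$-error, note that the reinjection amplitudes $\alpha_i$ are chosen so that the lossy channel together with detector efficiency deliver effective amplitude $\alpha_{out}$ at Alice's detectors; hence on an intersecting date the ideal measured state is $\ket{0,\alpha_{out}}$, whose vacuum first mode makes a wrong ``$0$'' (or a missed ``$1$'') possible only through a dark-count event. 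Using one-sidedness to prevent the union bound from inflating this to $\approx n\cdot p_{dark}$, I would obtain the stated $p_{dark}$ bound on outputting $\emptyset$ despite an intersection.

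For the leakage I would use the $\QIC$ properties of Appendix~\ref{leakage} to separate a classical and a quantum contribution. Steps~1--3 are classical: shared randomness is free, Alice leaks at most the $s$ bits $\{x_i\}_{i\in S}$, and Bob leaks at most the $\log s+1$ bits needed to return an intersecting index, giving the $s+\log s+1$ term. Each remaining date runs $\Pi_A$, which restarts $\widetilde{\Pi}_A^\prime$ until a conclusive outcome; since each execution is inconclusive with probability $p$ (the displayed quantity, combining the no-click probability under detector efficiency $\eta_{det}$ with the dark-count corrections), the expected number of executions per date is $1/(1-p)$. Subadditivity of $\QIC$ over the at most $n$ dates and over the geometric restart loop then produces the prefactor $\tfrac{n}{1-p}$ multiplying a single-execution leakage bound, while the extra additive $\tfrac{2n}{1-p}\,p_{dark}$ accounts separately for the leakage incurred when a spurious dark count on a non-intersecting date triggers the classical exchange of $a,b$ in step~3 of $\Pi_A$ (up to two bits), an event of probability $O(p_{dark})$ per execution.

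It remains to bound the leakage of a single execution of $\widetilde{\Pi}_A^\prime$, the $\max[\cdot]$ expression. The crude alternative $\tfrac{2(2r+3)}{n}$ just counts the $2r$ quantum messages and the constant number of classical measurement-result messages, each contributing at most one (qu)bit. For the refined bound I would treat the two input bits separately. The term $h\!\left(\tfrac12(1-\tilde F(r,\alpha_{out},\eta))\right)$ bounds the leakage of the $AND$ value: the two global pure states encoding $AND=0$ versus $AND=1$ have real overlap $\tilde F$, and the Holevo information of an equal mixture of two pure states of overlap $\tilde F$ is exactly $h(\tfrac12(1-\tilde F))$. Tracking the coherent-state overlaps round by round through the lossy channel---with $\alpha_i$ compensating the loss so that the round-$k$ amplitude scales like $\alpha_{out}\eta^{-k}$---the per-round overlaps multiply and their exponents sum to $\abs{\alpha_{out}}^2\left(1-\cos\tfrac{\pi}{2r}\right)\sum_{k=1}^r\eta^{-2k}$; the identity $\sum_{k=1}^r\eta^{-2k}=\tfrac{\eta^{-2r}-1}{1-\eta^2}$ then yields the stated $\tilde F$, which reduces to $F$ as $\eta\to1$. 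The term $2(2r+3)\,h\!\left(\tfrac{2\ln n}{s}+\tfrac1n\right)$ bounds the per-message leakage about the individual bits $a,b$: the key input is the subsampling guarantee that, conditioned on no intersection being found in $S$, with probability at least $1-\tfrac1n$ at most $\tfrac{n\ln n}{s}$ intersecting dates remain, so any fixed date is ``active'' with probability at most $\tfrac{2\ln n}{s}+\tfrac1n$, and feeding this small activation probability into a low-probability-implies-low-information estimate gives one binary-entropy factor per message.

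The main obstacle is the fidelity computation for $\tilde F$ once loss is combined with finite detector efficiency and nonzero dark counts: I must verify that the relevant encoding states remain effectively pure with a clean product overlap despite Bob's reinjection of fresh amplitudes $\alpha_i$, and that the per-message low-probability bound survives the perturbation of the ideal click model by $\eta_{det}$ and $p_{dark}$---in particular that the new ``both modes click'' outcome is correctly folded into the inconclusive probability $p$ rather than into the leakage. The remaining steps are a careful but routine re-run of the bookkeeping behind Theorem~\ref{QIC_general_ideal}.
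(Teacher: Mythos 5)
Your overall route is the paper's route: reduce to the ideal-case machinery of Lemma~\ref{QIC_general} (the $s+\log s+1$ classical term, the case analysis producing the $\max$, the rerun factor $1/(1-p)$ with $p$ checked to be input-independent, and the continuity term $2M\,h(\cdot)$ with $M=2r+3$ messages), add a two-bit correction weighted by $p_{dark}$ for the dark-count-triggered classical verification (the paper's $\QIC_0(\hat{\Pi}_A)\leq \QIC_0(\widetilde{\Pi}_A^\prime)+2p_{dark}$), and bound the single-execution leakage by a two-pure-state entropy whose overlap is computed round by round, with the loss-compensating amplitudes turning the exponent sum into the geometric series $\sum_{k=1}^r\eta^{-2k}=(\eta^{-2r}-1)/(1-\eta^2)$. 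The error analysis (one-sidedness from the classical verification, $p_{dark}$ bound on the $\emptyset$ error avoiding the union bound) also matches the paper.

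Two points are glossed over, and one is the technical heart of the bound. First, your label on the fidelity term is wrong: $h\left(\frac{1}{2}(1-\tilde{F})\right)$ does not bound the leakage of "$AND=0$ versus $AND=1$". It bounds $\QIC_0(\widetilde{\Pi}_A^\prime)$, which is evaluated on distributions $\mu_0$ with zero mass on $(1,1)$; there both transcripts being compared have $AND=0$, and $\tilde{F}$ is the overlap between the $a=0$ and $a=1$ transcripts conditioned on $b=0$, including the loss modes and the copies Bob retains when he resets the register. Your per-round computation (single-rotation overlaps with exponent $|\alpha_{out}|^2\eta^{-2k}(1-\cos\frac{\pi}{2r})$) is exactly this comparison, so your formula is right; but the literal $AND=0$-vs-$AND=1$ overlap, where the rotation accumulates as $k\theta$ across rounds, would give a different and incorrect expression, so the hypothesis pair must be fixed. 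Second, $\QIC$ is by definition a sum of per-message CQMI terms, and it is not generic that this sum is bounded by a \emph{single} two-state entropy. The paper gets this from a telescoping cancellation specific to this protocol: conditioned on $Y=0$, Bob's accumulated register $B_i$ coincides with $C_{i-2}B_{i-2}$, so the sum $\sum_{i\,\mathrm{odd}}\left[H(C_iB_i|Y=0)-H(B_i|Y=0)\right]$ collapses to $H(C_{2r-1}B_{2r-1}|Y=0)$, and only then does Lemma~\ref{ranktwofid} apply to the final accumulated state (together with the separate argument that the $H(C_i|XA_i)$ direction and Bob's even-round messages contribute zero). Without this step the same bookkeeping yields $r\cdot h(\cdot)$ rather than $h(\cdot)$ --- exactly what happens for the protocol $\widetilde{\Pi}_A^{\prime\prime}$ of Appendix~\ref{app2} --- so your assertion that the leakage equals the Holevo information of "global" states needs the telescoping argument to be a proof rather than a statement of the answer.
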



\begin{figure}[!t]
\centering
\includegraphics[width=400pt]{./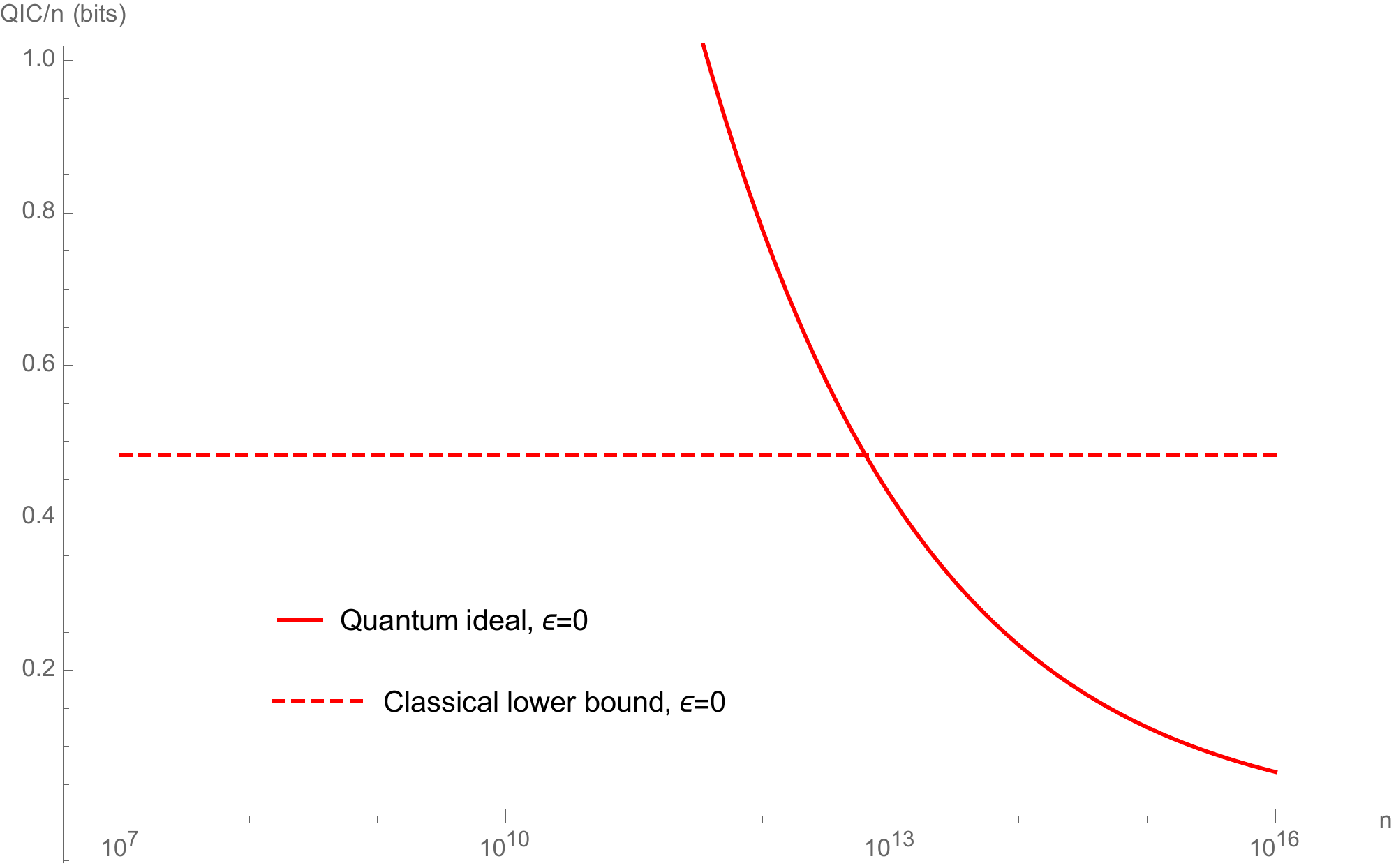}
\caption{This figure shows the $\mathcal{O}(n^{2/3})$ limiting behaviour of our quantum protocol in comparison with the $\Omega(n)$ classical lower bound in the ideal setting for zero-error. We have chosen the number of rounds $r = n^{1/3}$, the coherent state amplitude $\alpha =  1$ and subsample size $s = n^{2/3}$. The information leakage ($\QIC$) measured in bits divided by the input size $n$ is plotted on the y-axis, and the input size $n$ on the $x$-axis.}
\label{fig:ideal_qu_adv}
\end{figure}

\begin{figure}[!t]
\centering
\includegraphics[width=400pt]{./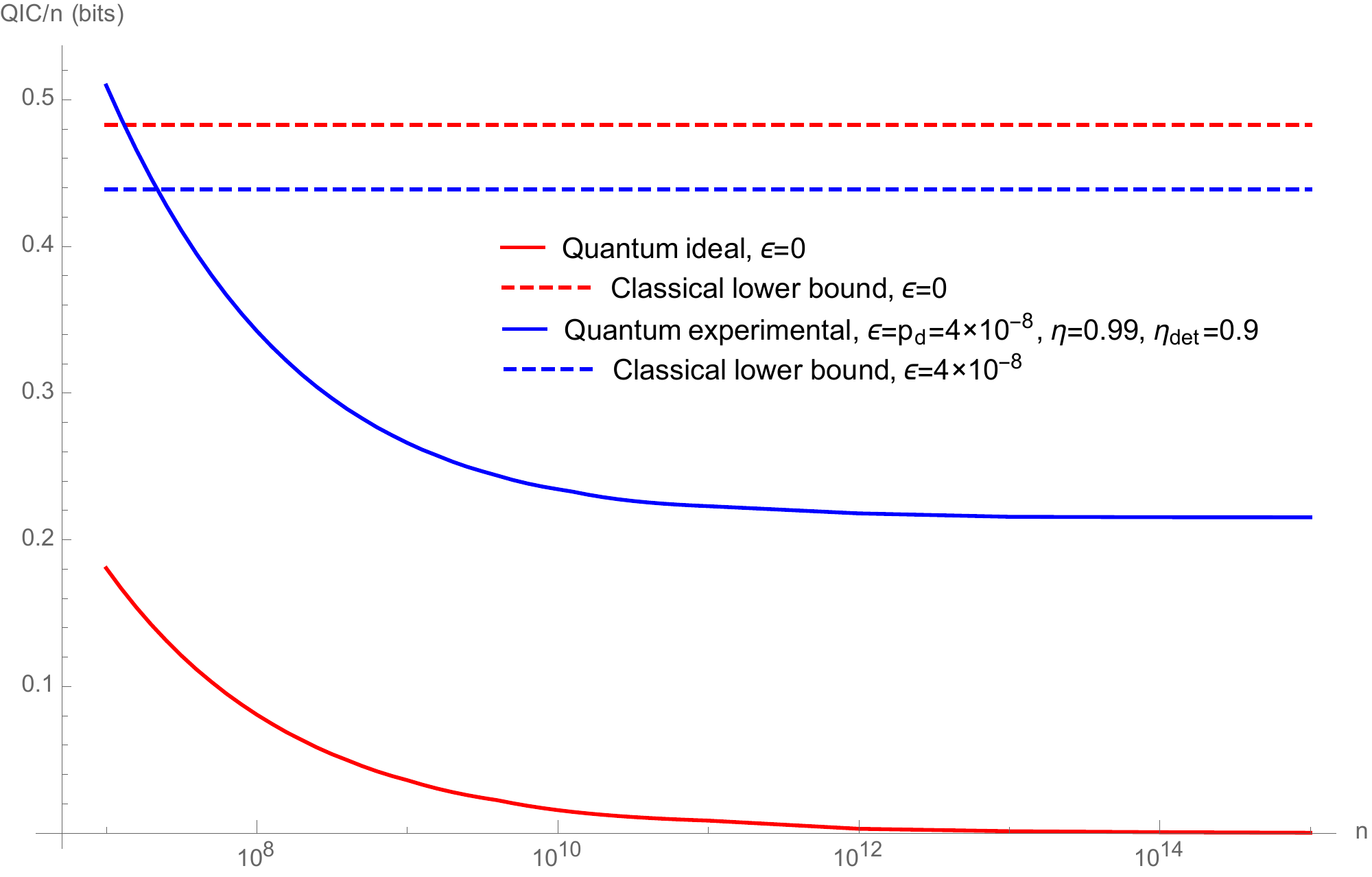}
\caption{This figure depicts the quantum advantage in terms of information leakage ($\QIC$) measured in bits, and compares both the classical lower bounds for zero-error and $\epsilon$-error protocols, to both the quantum upper bound for an ideal experimental set-up and for set-ups accounting for experimental errors with the following parameters: transmissivity $\eta = 0.99$, dark count probability $p_{dark} = \epsilon =4 \times 10^{-8}$, and detection efficiency $\eta_{det}=0.9$.
At each point we have optimized over $s$, $\alpha$, and $r$. The value of $s$ obtained by our optimization decreases from around $s=0.1 n$ to $s=0.001n$ as $n$ increases from $10^{7}$ to $10^{11}$. The optimized value of $r$ increases with $n$ from around $r=30$ to around $r=100$, and the optimized value of $\alpha$ remains near $\alpha=1$. The information leakage divided by the input size $n$ is plotted on the y-axis, and the input size $n$ on the $x$-axis. Note that the classical information leakage lower bound with non-zero error could probably be made much closer to the one at zero error by a careful analysis of Refs~\cite{dagan2016trading, braverman2013information} in the finite regime.}
\label{fig:pract_qu_adv}
\end{figure}

For appropriately chosen $s$ and small $p_{dark}$, the term $h\left(\frac{1}{2} (1 - \tilde{F}(r, \alpha_{out}, \eta) )\right)$ is the limiting one. If $\eta<1$, this term as a function of $r$ is limited by a trade-off between $\approx \eta^{-2r}$ and $\approx 1/r^2$.
For constant $\eta$, there is an optimal value of $r$ for this term, independent of $n$ and $\alpha$. Hence, for constant $\eta$ and $\alpha$, this term, even optimized over $r$, does not decrease as $n$ increases, and we can at best expect quantum advantage by a constant multiplicative factor in terms of information leakage. Taking $s \approx n^{2/3}$, this term as well as the term $2n \; p_{dark} / (1-p)$ are the only two growing linearly with $n$, hence they are the limiting ones for large $n$.

In Figure~\ref{fig:opt_r_and_alpha} we plot $r$ and $\alpha$ as a function of $\eta$, optimized to minimize the information leakage given in Theorem~\ref{QIC_general_exp}, under the parameters $n=10^{15}$, $p_{dark}=4\times 10^{-8}$, and $\eta_{det}=0.9$. Note that the tail-ends of these plots do not continue the behaviour established at lower values of $\eta$, which can be intuited from the fact that as $\eta \rightarrow 1$ the optimal $r$ for the term $h\left(\frac{1}{2} (1 - \tilde{F}(r, \alpha_{out}, \eta) )\right)$ grows large, so the other terms in Theorem~\ref{QIC_general_exp} become significant and must now be taken into account.

If $\eta \approx 1 - \frac{1}{r}$ but $p_{dark} > 0$, then we could improve on the quantum advantage to $\widetilde{\Theta}(\frac{n}{r})$ (for $r \leq n^{1/3})$ while keeping the same quantum subroutine but modifying the global classical processing.
See Appendix~\ref{sec:modif-one-over-r} for details.

{
In Figure~\ref{fig:pract_qu_adv} we plot the information leakage of our quantum protocol under the experimental imperfections {$\eta =0.99$}, $p_{dark} = 4 \times 10^{-8}$, and $\eta_{det}=0.9$ versus the classical lower bound. In contrast to the plot of Figure~\ref{fig:ideal_qu_adv}, the optimized number of rounds remains small so that the total loss does not increase too much with $n$ and completely degrade the system. For this reason, asymptotically, $r$ must scale constant in $n$. Unfortunately, this implies that for any fixed $\eta<1$ our protocol has asymptotic information leakage $\Theta(n)$ (just like the classical lower bound). However, we see in Figure~\ref{fig:pract_qu_adv} that it still gives rise to a quantum advantage by a factor of 2.} We also plot an optimized quantum advantage obtained by our protocol as a function of $\eta$ in Figure~\ref{fig:adv_vs_eta}. In particular, we note that by optimizing $r$ and $\alpha$ and picking appropriate subsampling size $s$, we can get a quantum advantage starting around $\eta \approx 0.975$ for $\eta_{det}=0.9$ and $p_{dark}=4\times 10^{-8}$. Also, for $\eta = 0.999$, $\eta_{det}=0.9$, and $p_{dark}=4\times 10^{-8}$ we get an improvement by a factor of more than 14!
\begin{figure}
\centering
\begin{subfigure}{.5\textwidth}
  \centering
  \includegraphics[width=1.0\linewidth]{./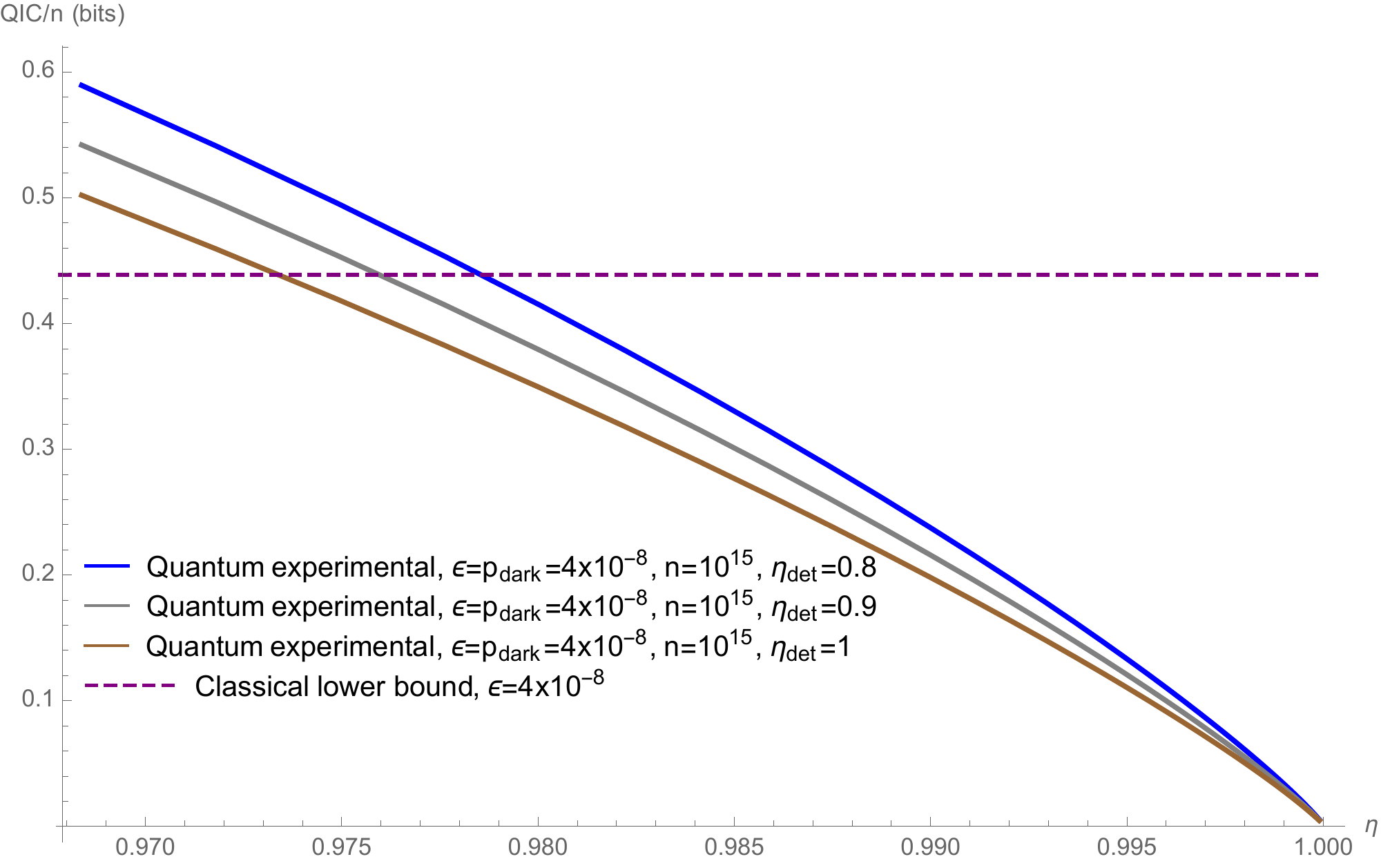}
\end{subfigure}%
\begin{subfigure}{.5\textwidth}
  \centering
  \includegraphics[width=1.0\linewidth]{./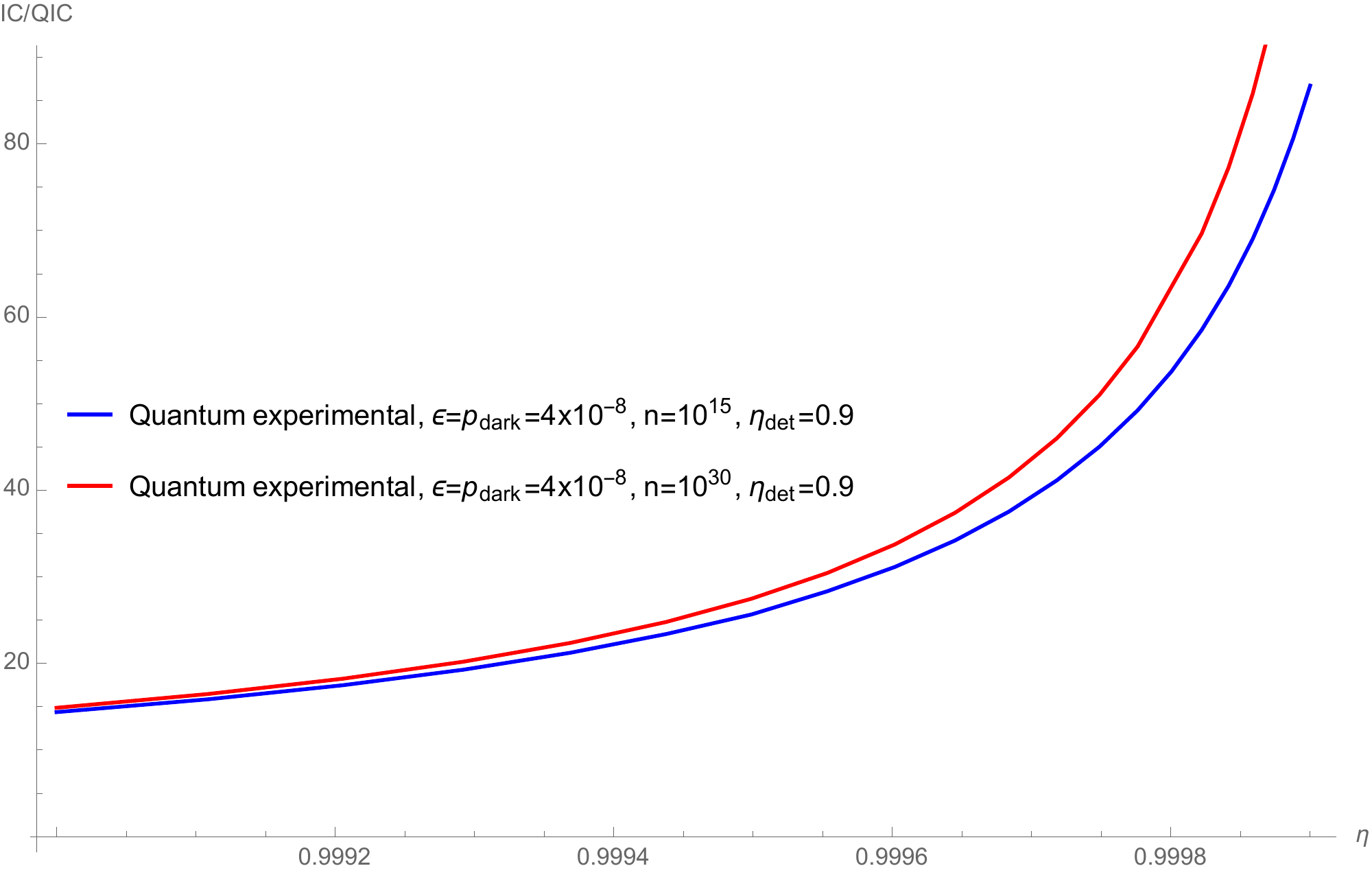}
\end{subfigure}
\caption{These figures depict the quantum advantage in terms of information leakage ($\QIC$) measured in bits as a function of the loss parameter $\eta$. On the left, it makes a similar comparison to the classical protocols as in Figure~\ref{fig:pract_qu_adv}, but for different values of $\eta_{det}$, and on the right it plots the classical over quantum ratio for $n=10^{15}$ and $n=10^{30}$. We have used $s=0.001 n$, which we have found performs very well for this range of $\eta$. We can see that for $p_{dark}=4\times 10^{-8}$ and $\eta_{det}=0.9$, we begin to get a quantum advantage for $\eta \approx 0.975$, and the advantage grows as $\eta$ goes to one. In particular, for $\eta = 0.999$, we get an improvement by a factor of more than 14! Note that we have implicitly optimized $r$ and $\alpha$ as functions of $\eta$ here to obtain the greatest advantage.}
\label{fig:adv_vs_eta}
\end{figure}
\begin{figure}
\centering
\begin{subfigure}{.5\textwidth}
  \centering
  \includegraphics[width=1.0\linewidth]{./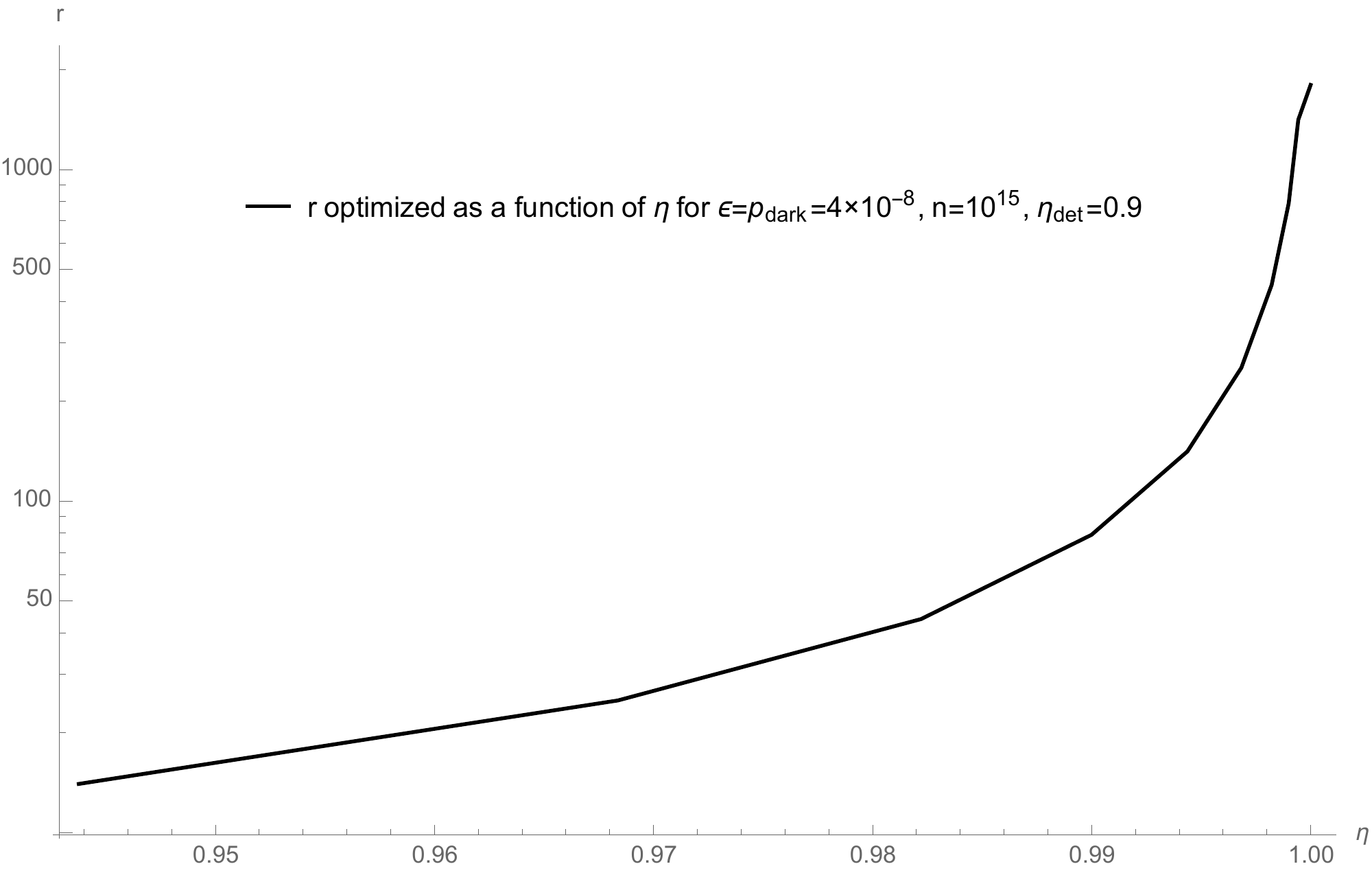}
\end{subfigure}%
\begin{subfigure}{.5\textwidth}
  \centering
  \includegraphics[width=1.0\linewidth]{./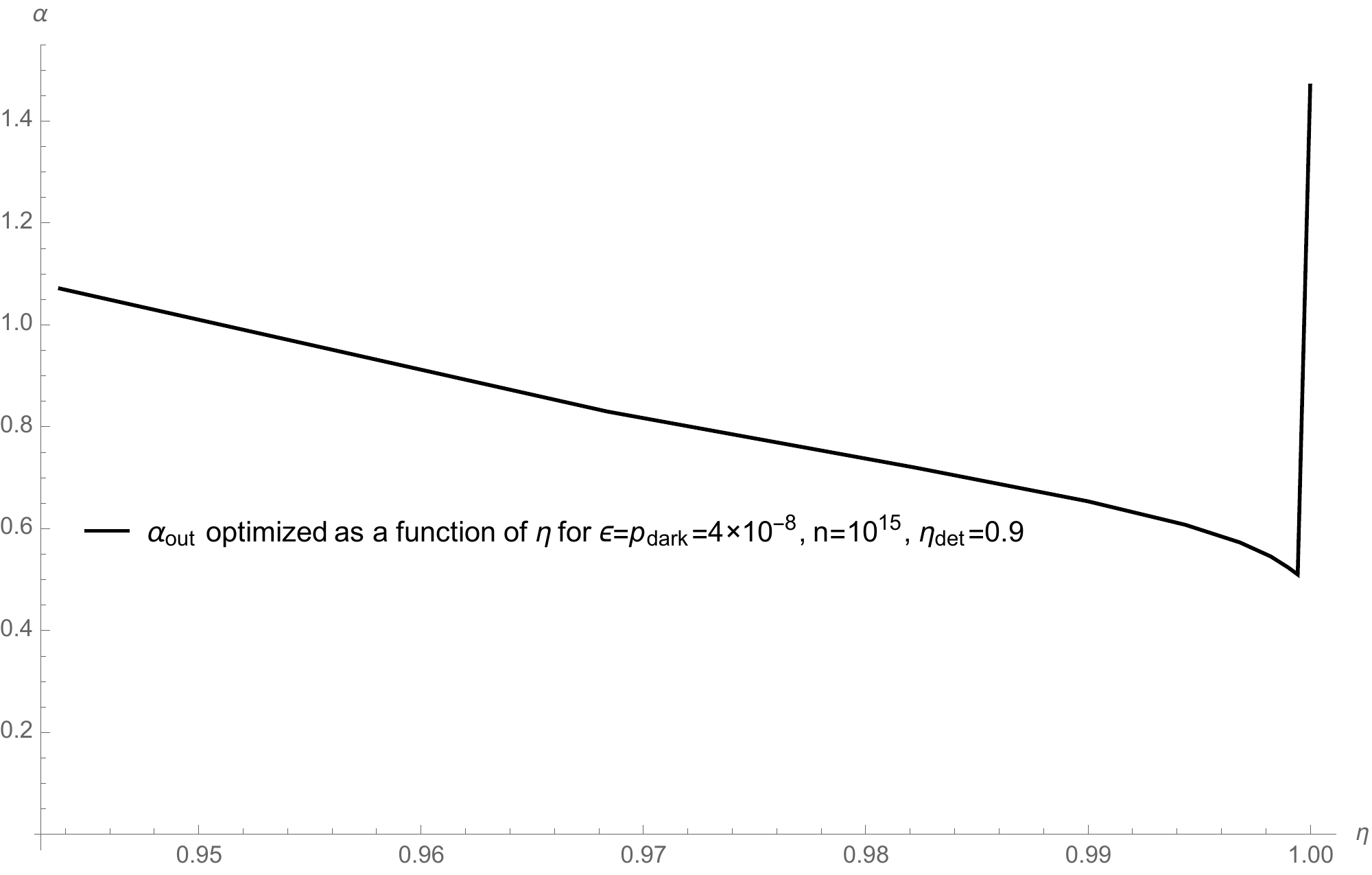}
\end{subfigure}
\caption{These figures depict $r$ on the left, and $\alpha$ on the right, optimized to minimize the information leakage, as functions of the transmissivity $\eta$ for fixed $n = 10^{15}$, $p_{dark}=4\times 10^{-8}$, and $\eta_{det}=0.9$, under the choice $s=0.001 n$. At $\eta=1$ the optimal values are $\alpha=1.47$ and  $r=1798$.}
\label{fig:opt_r_and_alpha}
\end{figure}
These parameters are 
challenging, but nevertheless seem achievable in {the} near future, especially 
taking into account that transmission distance is not crucial to the parties performing these protocols: the parties can bring the data close to each other before running the protocol at close proximity.


\section{Discussion}

We have proposed a quantum appointment scheduling protocol that requires only coherent states 
over {two modes} and basic linear optics operations over these two modes, {along with} classical processing. 
Our protocol 
shows 
that this important task can be 
realized in principle with such technology while providing a quantum advantage over any classical protocols 
in terms of information leakage.
%
The experimental parameters required to obtain a quantum 
advantage are challenging and have not yet been realized in the lab. However, we believe that these could 
be achieved in the near future, especially when taking into consideration that both parties can be in same 
lab, not far apart, for these private computations.

We found that the most limiting experimental parameter is the loss parameter~$\eta$. 
In Figure~\ref{fig:adv_vs_eta}, we plot the quantum advantage in terms of information leakage as a function of $\eta$.
We find that with constant loss, we have $\Omega (n)$ for the information leakage, with some constant information advantage ratio in a suitable parameter regime.


More generally, for $n$-bit inputs and $r$ rounds of interactions, for $r \leq n^{1/3}$, if $\eta \approx 1- \frac{1}{r}$, by slightly adapting the protocol we can get the information leakage as low as $\widetilde{O} (\frac{n}{r})$, leading to an asymptotic quantum advantage of  $\widetilde{O} (n^{2/3})$ vs. the classical $\Omega( n)$ lower bound. Our protocol in the ideal setting (for $\eta = 1$ and $p_{dark} =  0$) already achieves this asymptotic advantage, which is displayed in Figure~\ref{fig:ideal_qu_adv}.

Our work also opens up multiple interesting avenues of research into practical interactive communication. 
First, it will be interesting to see if the experimental 
parameters required to achieve a quantum advantage with our protocol can be achieved in the near 
future, and then whether the different components can be put together to obtain such a quantum advantage for 
the task of appointment scheduling. It will also be interesting to see how much further it is possible 
to improve practical protocols implementing the appointment scheduling task that we consider. 
We also hope that the tools that we develop in this work will serve to develop practical protocols 
with quantum advantage for other important distributed tasks.

\textbf{Acknowledgements.}

This research was supported in part by
 NSERC, Industry Canada and ARL CDQI program. The Institute for Quantum Computing and the Perimeter Institute for Theoretical Physics are supported in part by the Government of Canada and the Province of
Ontario.

\appendix
\section*{Appendix}

%
%

\section{Information Leakage}\label{leakage}

In this section we formally define the information leakage, and develop some properties which we will use to bound the information leakage of our appointment scheduling protocols. 
We  define quantum information complexity and discuss its link to privacy and some of its properties. We then specialize the discussion to the case of pure state protocols, further specialize to protocols with one-bit inputs, and then with no pre-shared entanglement.

\subsection{Quantum Information Complexity}

\subsubsection{Definition}

		\begin{figure}
		\begin{overpic}[width=1\textwidth]{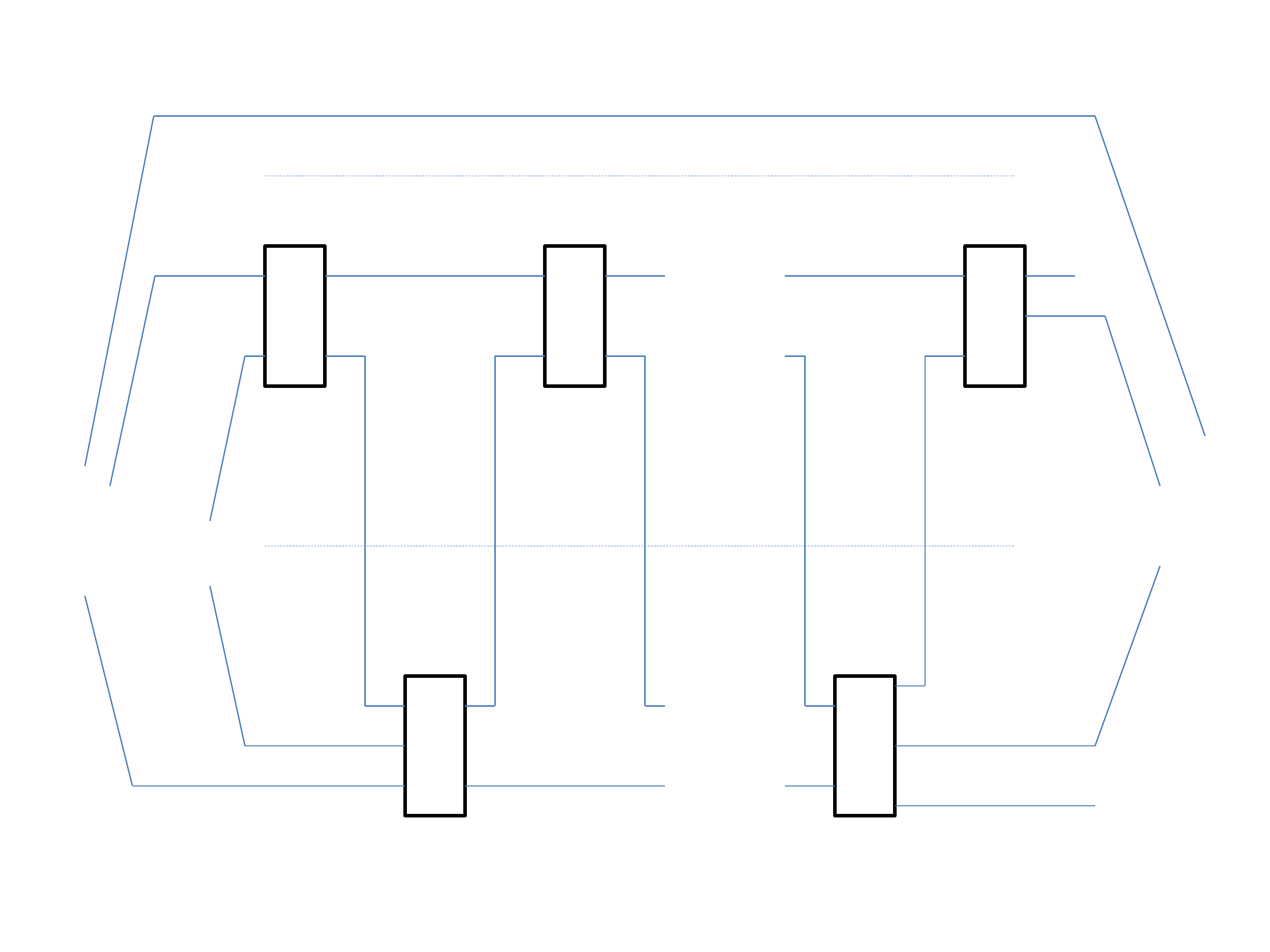}
		  \put(0,65){Ray}
		  \put(0,45){Alice}
		  \put(0,22){Bob}
		  \put(6,31){\footnotesize $\ket{\rho}$}
		  \put(15,66.5){\footnotesize$R$}
		  \put(15,54){\footnotesize$A_{in}$}
		  \put(15,13.5){\footnotesize$B_{in}$}
		  \put(15,44){\footnotesize$T_A^{in}$}
		  \put(15,19){\footnotesize$T_B^{in}$}
		  \put(22.1,49.5){\footnotesize$U_1$}
		  \put(15,31){\footnotesize{$\ket{\psi}$}}
		  \put(26.2,54){\footnotesize$A_1$}
		  \put(26.2,48){\footnotesize$C_1$}
		  \put(33,15.5){\footnotesize$U_2$}
		  \put(37.2,54){\footnotesize$A_2$}
		  \put(37.2,17.4){\footnotesize$C_2$}
		  \put(37.2,13.7){\footnotesize$B_2$}
		  \put(44.2,49.5){\footnotesize$U_3$}
		  \put(48.2,54){\footnotesize$A_3$}
		  \put(48.2,48){\footnotesize$C_3$}
		  \put(48.2,13.7){\footnotesize$B_3$}
		  \put(55,33){\footnotesize$\cdots$}
		  \put(59.5,54){\footnotesize$A_{M-1}$}
		  \put(59.5,48){\footnotesize$C_{M-1}$}
		  \put(58.5,13.7){\footnotesize$B_{M-1}$}
		  \put(66.2,15.5){\footnotesize$U_{M}$}
		  \put(71.5,54){\footnotesize$A_M$}
		  \put(73.5,23){\footnotesize$C_M$}
		  \put(73.5,16.4){\footnotesize$B_{out}$}
		  \put(73.5,11.7){\footnotesize$B^{\prime}$}
		  \put(77.3,49.5){\footnotesize$U_{f}$}
		  \put(81.3,54){\footnotesize$A^\prime$}
		  \put(81.3,50.3){\footnotesize$A_{out}$}
		  \put(92,33){\footnotesize$\Pi (\rho)$}
		\end{overpic}
		  \caption{Depiction of a quantum protocol in the interactive model, adapted from the long version of~\cite[Figure 1]{touchette2015quantum}. More details about the interactive model of quantum communication can be found there.}
		  \label{fig:int_mod}
		\end{figure}

A quantity of interest in this work is the quantum information cost (or quantum information leakage, as written in the main text), as introduced in~\cite{touchette2015quantum}. We use an equivalent characterization given in~\cite{LauriereT:2016}. In quantum communication protocols, there is no clear notion of a transcript, so this definition quantifies how much information is exchanged in each round. In the sequel, we denote the von~Neumann entropy by $H$, and for a tripartite state $\rho^{ABC}$, we denote  the conditional entropy $H(A|B) = H(AB) - H(B)$ and the conditional quantum mutual information (CQMI) between $A$ and $B$ conditioned on $C$ by $I(A:B|C) = H(A | C)   - H(A | B,C)$.
Based on CQMI, the definition of quantum information cost of a protocol is as follows. The registers refer to those in Figure~\ref{fig:int_mod}.
\begin{definition}
\label{prelim:def:QICprotocol}
For a protocol $\Pi$ and an input distribution $\mu$, 
we define the \emph{quantum information cost} of the $i$th message of $\Pi$ on input distribution $\mu $ as
\begin{align*}
	\QIC_i (\Pi, \mu) &=  I(C_i; X | Y B_{i}) +   I(C_i; Y | X A_{i}),
\end{align*}
the \emph{quantum information cost} of $\Pi$ on input distribution $\mu $ as
\begin{align*}
	\QIC (\Pi, \mu) &= \sum_{i} \QIC_i (\Pi, \mu),
\end{align*}
and the (prior-free) \emph{quantum information cost} of $\Pi$ as
\begin{align*}
	\QIC (\Pi) &= \max_\mu \QIC (\Pi, \mu).
\end{align*}
\end{definition}

We discuss some of these properties as well as the connection to privacy in the next sections.
As discussed in Ref.~\cite{LauriereT:2016}, this is also well defined for classical protocols and then the definition agree with the one used in the classical literature~\cite{Barak:2010:CIC:1806689.1806701, Braverman2011, braverman2015interactive}.

Given a relation $T$, we can define the the quantum information complexity of $T$ by optimizing the quantum information cost over all protocols that compute~$T$.
We are interested in a prior-free notion of quantum information complexity, as introduced in~\cite{braverman2015near}.
\begin{definition}
\label{prelim:def:QICrelation}
Given a relation $T$ and an error parameter $\epsilon \geq 0$, we define the 
\emph{quantum information complexity} of $T$ at $\epsilon$ error as
\begin{align*}
	\QIC (T, \epsilon) &= \inf_\Pi \QIC (\Pi),
\end{align*}
in which the infimum is taken over all protocols that compute $T$ with error at most $\epsilon$ on all inputs.
\end{definition}

\subsubsection{Privacy}


One motivation to take $\QIC$ as a complexity measure for our quantum protocol is to have a fair comparison to classical notions. Both these notions characterize amortized communication complexity to solve multiple instances of the same problem in parallel, or equivalently, how much the messages arising from multiple copies of the protocol can be compressed. 
They also share many other important properties, and as such $\QIC$ is viewed as the quantum analogue to the classical notion of information leakage~\cite{braverman2015interactive}.
Moreover, these notions agree for classical protocols.

Another motivation was in terms of privacy concerns in a quantum honest-but-curious type of model, in which we want the parties to exchange the correct messages, but they might collect as much information as possible about each other's input. We want to bound how much information such parties might have at the end of protocols. However, we know from prior work that one must be really careful how to define quantum information, that unexpected behavior can arise~\cite{cleve1999quantum, baumeler2015quantum, chailloux2017information}. Depending on the situation, many different definitions have been put forward~\cite{JainRS09, klauck2002quantum, salvail2009power, touchette2015quantum, kerenidis2016information}. The link between many of these was studied in Ref.~\cite{LauriereT:2016}, and it was found that the notion of $\QIC$ we use is an upper bound on all of these. Hence, with this upper bound on the quantum information leakage, the difference between it and the classical bound will be smaller than for some other choices, but it will be robust.

\subsubsection{Properties}

We make use of many properties of conditional entropy and CQMI, among which the followings. Note that all $\log$ are base two.
\begin{lemma}
\label{prelim:lem:CQMIfacts}
	If $\rho = \rho^{ABCD} = \sum_{c} p(c) \kb{c}{c} \otimes \rho_c^{ABD} $ is a classical-quantum state with classical register $C$, then (Conditioning on a classical register is taking the average)
	$$
		H(A|CD)_{\rho} = \bbE_c\left[ H(A|D)_{\rho_c} \right],
	$$
	$$
		I(A:B|CD)_{\rho} = \bbE_c\left[ I(A:B|D)_{\rho_c} \right],
	$$
	and also (Dimension bound) 
	$$
		I(A:C|D)_{\rho} \leq \log \dim (C),
	$$
	$$
		I(A:B|D)_{\rho}  \leq 2 \log \dim (B),
	$$
	$$
		H(A)\leq \log \dim(A).
	$$
	If $\rho = \rho^{AB} = \kb{\psi}{\psi}^A \otimes \rho^B$ is pure on system $A$, then (Pure states have no entropy)
	$$
		H(A | B)_{\rho} = 0.
	$$
\end{lemma}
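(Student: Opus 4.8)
The plan is to treat each identity and inequality separately, relying on the definition of conditional entropy $H(A\mid D) = H(AD) - H(D)$ together with a small number of standard facts about the von Neumann entropy.

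For the two averaging identities, I would first establish the grouping formula for a classical-quantum state: if $\rho^{CE} = \sum_c p(c)\kb{c}{c}\otimes \rho_c^E$ has classical register $C$, then $H(CE) = H(p) + \bbE_c[H(E)_{\rho_c}]$, where $H(p)$ is the Shannon entropy of the distribution $p$. This follows because the operators $p(c)\kb{c}{c}\otimes\rho_c^E$ are supported on mutually orthogonal subspaces indexed by $c$, so the global spectrum is the disjoint union of the spectra of $p(c)\rho_c^E$. Applying this with $E = AD$ and with $E = D$ gives $H(ACD) = H(p) + \bbE_c[H(AD)_{\rho_c}]$ and $H(CD) = H(p) + \bbE_c[H(D)_{\rho_c}]$; subtracting yields $H(A\mid CD)_\rho = \bbE_c[H(A\mid D)_{\rho_c}]$. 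The CQMI identity then follows by writing $I(A:B\mid CD) = H(A\mid CD) - H(A\mid BCD)$ and applying the averaging formula to each of the two conditional entropies, where the relevant reduced state conditioned on $C=c$ is obtained from $\rho_c^{ABD}$, so that $I(A:B\mid CD)_\rho = \bbE_c[H(A\mid D)_{\rho_c} - H(A\mid BD)_{\rho_c}] = \bbE_c[I(A:B\mid D)_{\rho_c}]$.

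For the dimension bounds, $H(A)\leq\log\dim(A)$ is the standard maximum-entropy inequality (the maximally mixed state maximizes entropy). For $I(A:B\mid D) = H(B\mid D) - H(B\mid AD)$, I would use that a conditional von Neumann entropy satisfies $|H(B\mid \cdot)|\leq \log\dim(B)$, so the difference is at most $2\log\dim(B)$. The bound $I(A:C\mid D)\leq \log\dim(C)$ is the one place where the argument is not symmetric and is the step to watch: here $C$ is the classical register, so $H(C\mid AD)\geq 0$ (a classical system has non-negative conditional entropy), whence $I(A:C\mid D) = H(C\mid D) - H(C\mid AD)\leq H(C\mid D)\leq\log\dim(C)$, which saves the usual factor of two.

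Finally, for the product state $\rho^{AB} = \kb{\psi}{\psi}^A\otimes\rho^B$ with $A$ pure, $H(A\mid B) = H(AB) - H(B) = (H(A) + H(B)) - H(B) = H(A) = 0$, using additivity of entropy on product states and $H(\kb{\psi}{\psi}) = 0$. There is no real obstacle in this lemma, since every statement is a textbook fact; the only thing to be careful about is not to over-apply the generic factor-of-two mutual-information bound to the classical register $C$, where classicality gives the sharper single-$\log$ bound.
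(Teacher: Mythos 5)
Your proposal is correct: the grouping formula for cq-states, the symmetric form of CQMI with the two-sided bound $|H(B|\cdot)|\leq\log\dim(B)$, the non-negativity of $H(C|AD)$ for classical $C$ (which is indeed what saves the factor of two in the single-$\log$ bound), and additivity on product states are exactly the standard arguments needed. The paper states this lemma without proof, treating these as textbook facts, so your write-up simply supplies the standard derivations the authors implicitly rely on.
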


\begin{lemma}[Data processing inequality]
\label{lem:dpi}
For any quantum state $\rho^{ABC}$,
\begin{align*}
H(A|BC)_\rho\leq H(A|B)_\rho.
\end{align*}
\end{lemma}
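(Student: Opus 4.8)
The plan is to recognize that this inequality is nothing but the non-negativity of the conditional quantum mutual information, and hence a restatement of the strong subadditivity of von Neumann entropy. Indeed, with the convention $I(A:C|B) = H(A|B) - H(A|BC)$ (obtained from the definition $I(A:B|C) = H(A|C) - H(A|B,C)$ given above by swapping the roles of $B$ and $C$), the claimed inequality $H(A|BC)_\rho \leq H(A|B)_\rho$ is exactly the assertion $I(A:C|B)_\rho \geq 0$. So it suffices to prove that the conditional mutual information is non-negative for every $\rho^{ABC}$.

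First I would reduce this to the monotonicity of quantum relative entropy under partial trace. Using the chain rule $I(A:C|B) = I(A:BC) - I(A:B)$ together with the identities $I(A:BC) = D(\rho^{ABC} \,\|\, \rho^{A} \otimes \rho^{BC})$ and $I(A:B) = D(\rho^{AB} \,\|\, \rho^{A} \otimes \rho^{B})$, the target $I(A:C|B)\geq 0$ becomes
$$
D(\rho^{ABC} \,\|\, \rho^{A} \otimes \rho^{BC}) \;\geq\; D(\rho^{AB} \,\|\, \rho^{A} \otimes \rho^{B}).
$$
This is a direct application of the data processing inequality for relative entropy to the partial-trace channel $\mathrm{Tr}_{C}$, which is completely positive and trace-preserving and sends $\rho^{ABC} \mapsto \rho^{AB}$ while $\rho^{A} \otimes \rho^{BC} \mapsto \rho^{A} \otimes \rho^{B}$. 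The remaining algebra — expanding each conditional entropy as a difference of joint entropies and verifying the chain rule and the relative-entropy rewritings — follows immediately from the definitions.

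The genuine content, and the step I expect to be the main obstacle, is the monotonicity of quantum relative entropy under completely positive trace-preserving maps, which is logically equivalent to strong subadditivity itself and rests on deep operator-convexity facts such as Lieb's concavity theorem (equivalently, the joint convexity of the relative entropy). I would not attempt to reprove this from scratch; instead I would invoke it as a standard result and cite Lieb--Ruskai or a textbook account. In other words, essentially all the difficulty is imported through this one deep ingredient, after which the lemma reduces to routine bookkeeping on entropies.
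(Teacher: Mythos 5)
Your proposal is correct: the reduction of $H(A|BC)_\rho \leq H(A|B)_\rho$ to $I(A{:}C|B)_\rho \geq 0$, i.e.~to strong subadditivity, and then to monotonicity of quantum relative entropy under the partial-trace channel, is the standard and valid argument, with the deep ingredient (Lieb--Ruskai) properly identified and cited rather than reproved. Note that the paper itself offers no proof of this lemma --- it is listed among standard entropic properties and invoked as known --- so your write-up supplies exactly the canonical justification the paper implicitly relies on.
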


\begin{lemma}[Isometric invariance]\label{isometry_entropy}
For any quantum state $\rho^A$ and any isometry $V \in \U(A,B)$,
\begin{align*}
H(A)_\rho=H(B)_{V \rho V^\dg}.
\end{align*}
\end{lemma}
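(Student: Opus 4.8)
The plan is to reduce the statement to the elementary fact that the von~Neumann entropy $H(A)_\rho = -\mathrm{Tr}(\rho \log \rho)$ is a function of the \emph{spectrum} of $\rho$ alone, together with the convention $0 \log 0 = 0$. Writing a spectral decomposition $\rho = \sum_i \lambda_i \kb{e_i}{e_i}$ with $\{\ket{e_i}\}$ an orthonormal eigenbasis of $A$ and $\lambda_i \geq 0$, we have $H(A)_\rho = -\sum_i \lambda_i \log \lambda_i$. The goal is then to show that $V \rho V^\dg$ has exactly the same list of nonzero eigenvalues $\{\lambda_i\}$, padded out by zeros.

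First I would push $V$ through the spectral decomposition to obtain $V \rho V^\dg = \sum_i \lambda_i \kb{V e_i}{V e_i}$. The key observation is that the vectors $V \ket{e_i}$ form an orthonormal set in $B$: using the defining property $V^\dg V = I_A$ of an isometry,
\begin{align*}
\bk{V e_i}{V e_j} = \bra{e_i} V^\dg V \ket{e_j} = \bk{e_i}{e_j} = \delta_{ij}.
\end{align*}
Hence $\sum_i \lambda_i \kb{V e_i}{V e_i}$ is already a valid eigendecomposition of $V \rho V^\dg$: its nonzero eigenvalues are precisely the $\lambda_i$, and all remaining eigenvalues (on the orthogonal complement in $B$ of the image of $V$) are zero.

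It then follows immediately that
\begin{align*}
H(B)_{V \rho V^\dg} = -\sum_i \lambda_i \log \lambda_i = H(A)_\rho,
\end{align*}
since the extra zero eigenvalues contribute nothing under the convention $0 \log 0 = 0$. The only point requiring any care --- and the reason the statement is phrased for isometries rather than appealing directly to unitary invariance --- is that $V$ need not be surjective, so that $\dim B \geq \dim A$ and $V \rho V^\dg$ genuinely lives on a proper subspace of $B$; this is handled precisely by the zero-eigenvalue bookkeeping above. Equivalently, one could complete $V$ to a unitary on an enlarged space and invoke unitary invariance of entropy directly, but the spectral argument avoids any such extension. I expect no substantive obstacle here: the entire content is the orthonormality computation for $\{V\ket{e_i}\}$, which is a one-line consequence of $V^\dg V = I_A$.
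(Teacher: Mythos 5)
Your proof is correct: the paper states this lemma without proof, treating it as a standard fact, and your spectral argument---pushing $V$ through the eigendecomposition, using $V^\dg V = I_A$ to show $\{V\ket{e_i}\}$ remains orthonormal, and noting the padded zero eigenvalues contribute nothing under $0\log 0 = 0$---is exactly the canonical justification. Your remark that surjectivity can fail, so the statement genuinely extends unitary invariance via zero-eigenvalue bookkeeping, is the right point of care and is handled correctly.
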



We also make use of many properties of $\QIC$. The proof of the properties in the following lemma can be found in~\cite{touchette2015quantum, braverman2015near}.
\begin{lemma} \label{lem:prop_QIC}
Let $\nu$ be a distribution over input distributions $\mu$, and denote $\mu \sim \nu$ the random distribution $\mu$ over inputs picked with probability $\nu (\mu)$. Denote $\bar{\mu} = \bbE_{\mu \sim \nu}[\mu]$. Then for any protocol $\Pi$ (Concavity in input distribution)
\begin{align*}
	\bbE_{\mu \sim \nu} [\QIC (\Pi, \mu)] \leq \QIC (\Pi, \bar{\mu}).
\end{align*}
For any $p \in [0, 1]$ and any two input distributions $\mu_1$ and $\mu_2$ on $XY$, let $\mu = p \mu_1 + (1-p) \mu_2$. The following then holds for any $r$-message protocol $\Pi$ (Quasi-convexity in input distribution)
\begin{align*}
	\QIC (\Pi, \mu) \leq p \QIC (\Pi, \mu_1) + (1 - p) \QIC (\Pi, \mu_2) + 2r \;  h(p).
\end{align*}
For any two protocols $\Pi_1$ and $\Pi_2$, the following holds for the protocol $\Pi_1 \otimes \Pi_2$, running them in parallel, and for any joint input $\mu_{12}$ (Subadditivity)
\begin{align*}
	\QIC (\Pi_1 \otimes \Pi_2, \mu_{12}) \leq  \QIC (\Pi_1, \mu_1) + \QIC (\Pi_2, \mu_2),
\end{align*}
with $\mu_1$ and $\mu_2$ the marginal of $\mu_{12}$ for the input to protocols $\Pi_1$ and $\Pi_2$, respectively.
\end{lemma}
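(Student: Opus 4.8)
All three statements are established by manipulating the conditional mutual information terms in the definition of $\QIC$, using only the chain rule, nonnegativity, the dimension bounds of Lemma~\ref{prelim:lem:CQMIfacts}, and the averaging identity for conditioning on a classical register. Throughout, $X,Y$ denote the input registers, $C_i$ the $i$th message, and $A_i,B_i$ Alice's and Bob's registers at round $i$. Since $\QIC_i(\Pi,\mu)=I(C_i;X\mid YB_i)+I(C_i;Y\mid XA_i)$, it suffices to bound each summand separately; the second summand is handled by the identical argument with the roles of Alice and Bob (and of $X,A_i$ versus $Y,B_i$) exchanged, so I only describe the first.

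For concavity, the plan is to adjoin a classical register $M$ recording the sampled distribution $\mu\sim\nu$ and to run $\Pi$ on an input drawn from the selected $\mu$, so that the marginal input distribution is $\bar\mu$. By the averaging identity, $\bbE_{\mu\sim\nu}[I(C_i;X\mid YB_i)_{\rho_\mu}]=I(C_i;X\mid YB_iM)$. The structural fact I exploit is that $M$ affects the protocol only through generating the inputs, so $M$ is independent of $(C_i,B_i)$ given $(X,Y)$, i.e. $I(C_i;M\mid XYB_i)=0$. Expanding $I(C_i;XM\mid YB_i)$ two ways via the chain rule and using this vanishing term gives $I(C_i;X\mid YB_iM)=I(C_i;X\mid YB_i)_{\bar\mu}-I(C_i;M\mid YB_i)\le I(C_i;X\mid YB_i)_{\bar\mu}$; summing over $i$ and both summands yields the claim. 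Quasi-convexity is similar but easier and needs no Markov property: I adjoin a bit $W$ that selects $\mu_1$ with probability $p$ and $\mu_2$ otherwise, so the marginal is $\mu$ and $\bbE_W[\QIC(\Pi,\mu_W)]=\sum_i[I(C_i;X\mid YB_iW)+I(C_i;Y\mid XA_iW)]$. Here the chain rule gives $I(C_i;X\mid YB_i)\le I(C_i;XW\mid YB_i)=I(C_i;W\mid YB_i)+I(C_i;X\mid YB_iW)$, and the dimension bound gives $I(C_i;W\mid YB_i)\le H(W)=h(p)$; thus each summand costs at most an extra $h(p)$, and summing the two summands over all $r$ rounds produces the penalty $2r\,h(p)$.

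For subadditivity, write $X=X_1X_2$, $Y=Y_1Y_2$, $C_i=C_i^1C_i^2$, $A_i=A_i^1A_i^2$, $B_i=B_i^1B_i^2$ for the parallel protocol. The key is that, conditioned on all four inputs, the state factorizes as a tensor product of the protocol-$1$ and protocol-$2$ registers, since the two protocols act on disjoint inputs with independent ancillas; this gives the Markov chains $(C_i^1A_i^1B_i^1)-(X_1Y_1)-(X_2Y_2C_i^2A_i^2B_i^2)$ and its $1\leftrightarrow 2$ mirror. The plan is to split $I(C_i^1C_i^2;X_1X_2\mid Y_1Y_2B_i^1B_i^2)$ by the chain rule on the message into a protocol-$1$ and a protocol-$2$ part. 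In each part the Markov chain forces the cross terms to vanish (for instance $I(C_i^1;X_2\mid X_1Y_1Y_2B_i^1B_i^2)=0$) and lets me remove the foreign conditioning registers $Y_2,B_i^2$ at the cost of a nonnegative, hence discardable, term, collapsing the bound to $I(C_i^1;X_1\mid Y_1B_i^1)+I(C_i^2;X_2\mid Y_2B_i^2)$. Treating the $Y$-summand symmetrically and summing over $i$ then gives $\QIC(\Pi_1\otimes\Pi_2,\mu_{12})\le\QIC(\Pi_1,\mu_1)+\QIC(\Pi_2,\mu_2)$.

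I expect the subadditivity step to be the main obstacle: one must set up the conditional-independence (tensor-factorization) structure of parallel quantum protocols carefully and then verify, for each of the four pieces, that the cross conditional mutual information genuinely vanishes and that the foreign conditioning register is dropped in the direction that yields an inequality rather than an equality. Concavity and quasi-convexity, by contrast, each reduce to a single chain-rule manipulation once the appropriate auxiliary classical register has been introduced.
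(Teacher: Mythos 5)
The paper gives no proof of Lemma~\ref{lem:prop_QIC} at all; it defers entirely to the proofs in \cite{touchette2015quantum, braverman2015near}. Your argument is correct and essentially reconstructs those standard proofs: concavity and quasi-convexity by adjoining a classical selector register and expanding the chain rule in two orders (with the Markov property $I(C_i;M\mid XYB_i)=0$ holding because, for classical inputs, the protocol registers conditioned on $(X,Y)$ are in a fixed state independent of $M$), and subadditivity from the conditional tensor-product structure of the parallel protocol, which indeed makes the cross terms vanish and lets the foreign conditioning registers be dropped with the inequality in the correct direction, e.g. $I(C_i^1;X_1\mid Y_1B_i^1Y_2B_i^2)=I(C_i^1;X_1\mid Y_1B_i^1)-I(C_i^1;Y_2B_i^2\mid Y_1B_i^1)$. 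One small correction: the step $I(C_i;W\mid YB_i)\le h(p)$ is not the dimension bound of Lemma~\ref{prelim:lem:CQMIfacts} (that would only give $\log\dim(W)=1$); since $W$ is classical you should instead write $I(C_i;W\mid YB_i)=H(W\mid YB_i)-H(W\mid C_iYB_i)\le H(W)=h(p)$, using nonnegativity of conditional entropy on a classical register and that conditioning reduces entropy. Note also that \cite{braverman2015near} proves a stronger subadditivity (allowing quantum inputs and side information), which the paper later invokes for Lemma~\ref{QICdiscard}; your classical-input version suffices for the statement as given here.
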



The following lemmata
 show that when running a protocol as a subroutine, classical side-information can be conditioned on, and quantum side-information can be safely discarded without increasing quantum information cost. We first introduce some notation.

\begin{definition}
\label{def:condqic}
Let $\rho_\mu^{A_{in} B_{in} O_A O_B}$ be a state with purification of the form 
\begin{align*}
	\ket{\rho_\mu}^{A_{in} B_{in} O_A O_B R} = \sum_o {\sqrt{p_O (o)}} \ket{o}^{O_A} \ket{o}^{O_B} \ket{\rho_{\mu_o}}^{A_{in} B_{in} R^o},
\end{align*}
for some distribution $p_O$ and input distributions $\mu_o$ satisfying $\mu = \sum_o p_O (o) \mu_o$, and where purification register $R^o = R_X^o R_Y^o$. Also let $\Pi$ be a protocol acting on input registers $A_{in} B_{in} = XY$. Then we define
the \emph{quantum information cost} of $\Pi$ on $\mu | O$ as
\begin{align*}
	\QIC (\Pi, \mu| O) = \sum_{i} I(C_i ; X | Y B_i O_B) +  I(C_i ; Y | X A_i O_A).
\end{align*}
\end{definition}

\begin{definition}
\label{def:sideqic}
Let $\Pi$ be a  protocol acting on input registers $A_{in} B_{in} = XY$, let $\rho_{\mu}^{A_{in} B_{in}}$ and $\sigma^{A_{in}^\prime B_{in}^\prime}$ be states, with $A_{in}^\prime = A_{in}  \tilde{A}$ and $B_{in}^\prime = B_{in}  \tilde{B}$ for some arbitrary finite dimensional registers $\tilde{A}, \tilde{B}$, and such that $\Tr{\tilde{A} \tilde{B}}{\sigma} = \rho_{\mu}$.
Then we define
the \emph{quantum information cost} of $\Pi$ on $\mu$ with \emph{side information} $\sigma$ as
\begin{align*}
	\QIC (\Pi, \mu| \sigma) = \sum_{i} I(C_i ; X | Y B_i \tilde{B}) +  I(C_i ; Y | X A_i \tilde{A}).
\end{align*}
\end{definition}

The next lemma follows from definitions.

\begin{lemma}
In the setting of Definition~\ref{def:condqic}, define $\sigma^{A_{in}^\prime B_{in}^\prime} = \rho^{A_{in} B_{in} O_A O_B}$, with $A_{in}^\prime = A_{in}  O_A$ and $B_{in}^\prime = B_{in}  O_B$, and such that $\Tr{O_{A} O_{B}}{\rho^{A_{in} B_{in} O_A O_B}} = \rho_{\mu}$. Then 
\begin{align*}
	\QIC (\Pi, \mu | \sigma) & = \QIC (\Pi, \mu | O).
\end{align*}

\end{lemma}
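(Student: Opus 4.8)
The plan is to prove the identity directly from the two definitions, since both sides are sums over $i$ of the same type of conditional quantum mutual information terms, so the claim reduces to recognizing that the conditioning registers coincide under the prescribed identification.

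First I would instantiate the side-information framework of Definition~\ref{def:sideqic} by choosing the ancilla registers to be $\tilde{A} = O_A$ and $\tilde{B} = O_B$. With this choice the extended input registers become $A_{in}^\prime = A_{in} O_A$ and $B_{in}^\prime = B_{in} O_B$, matching the assignment in the statement, and the side-information state is $\sigma^{A_{in}^\prime B_{in}^\prime} = \rho^{A_{in} B_{in} O_A O_B}$. I would then check that $\sigma$ is admissible for Definition~\ref{def:sideqic}: the registers $O_A, O_B$ are finite dimensional, and the required marginal condition $\Tr{O_A O_B}{\sigma} = \rho_\mu$ is exactly the hypothesis of the lemma. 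Hence $\QIC(\Pi, \mu \mid \sigma)$ is well defined.

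Finally I would substitute $\tilde{A} = O_A$ and $\tilde{B} = O_B$ into the defining expression of Definition~\ref{def:sideqic}, obtaining
$$\QIC(\Pi, \mu \mid \sigma) = \sum_i I(C_i ; X \mid Y B_i O_B) + I(C_i ; Y \mid X A_i O_A),$$
which is term-for-term identical to the defining expression for $\QIC(\Pi, \mu \mid O)$ in Definition~\ref{def:condqic}. This completes the proof.

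The main point requiring attention — rather than a genuine obstacle — is simply to confirm that the conditioning register $O_B$ (respectively $O_A$) plays precisely the role of the ancilla $\tilde{B}$ (respectively $\tilde{A}$), and that the admissibility requirement on $\sigma$, namely that tracing out these ancillas returns $\rho_\mu$, is supplied directly by the assumed trace condition. No entropic inequalities or structural features of the correlated classical registers $O_A, O_B$ are needed; the equality holds at the level of syntactically identical expressions, which is why the statement is flagged as following from the definitions.
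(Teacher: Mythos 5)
Your proposal is correct and takes the same route as the paper, which simply asserts that this lemma ``follows from definitions'': instantiating $\tilde{A} = O_A$, $\tilde{B} = O_B$ in Definition~\ref{def:sideqic} makes the defining sum coincide term-for-term with that of Definition~\ref{def:condqic}, with the trace condition supplying admissibility. Your write-up just makes this unfolding explicit, which is all there is to prove.
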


The next lemma follows directly from the definition and because conditioning on a classical register is taking the average (Lemma~\ref{prelim:lem:CQMIfacts}), noting that $O_A$ and $O_B$ are perfectly correlated and are classical once one of them is traced out.

\begin{lemma}[Conditioning on a common variable is taking average] \label{lem:qic_cond_av}
In the setting of Definition~\ref{def:condqic}, 
\begin{align*}
	\QIC(\Pi, \mu | O) = \sum_o p_O (o) \QIC (\Pi, \mu_o).
\end{align*}

\end{lemma}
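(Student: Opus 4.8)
The plan is to exploit the explicit form of the purification in Definition~\ref{def:condqic}, reducing the conditional quantity $\QIC(\Pi,\mu|O)$ to an average over $o$ by showing that the registers $O_A$ and $O_B$ act as classical registers in the relevant marginals, and then invoking the ``conditioning on a classical register is taking the average'' identity of Lemma~\ref{prelim:lem:CQMIfacts}.

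First I would record that $\Pi$ acts only on the input registers $A_{in}B_{in}=XY$ together with local ancillas, leaving $O_A$, $O_B$, and the reference $R$ untouched, with Alice keeping $O_A$ and Bob keeping $O_B$ throughout. Since the input purification is $\ket{\rho_\mu} = \sum_o \sqrt{p_O(o)}\,\ket{o}^{O_A}\ket{o}^{O_B}\ket{\rho_{\mu_o}}$ and $\Pi$ is a fixed, $o$-independent protocol, the global pure state at any point retains the form $\ket{\Psi} = \sum_o \sqrt{p_O(o)}\,\ket{o}^{O_A}\ket{o}^{O_B}\ket{\Psi_o}$, where $\ket{\Psi_o}$ is precisely the state obtained by running $\Pi$ on the conditional input $\ket{\rho_{\mu_o}}$. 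In particular, for each round $i$ the reduced state $\rho_o$ on $C_i X Y A_i B_i$ coming from $\ket{\Psi_o}$ is exactly the round-$i$ state of $\Pi$ executed on input distribution $\mu_o$, so that $I(C_i;X|YB_i)_{\rho_o}$ and $I(C_i;Y|XA_i)_{\rho_o}$ are the two contributions to $\QIC_i(\Pi,\mu_o)$ in Definition~\ref{prelim:def:QICprotocol}.

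The key step is the use of perfect correlation. In the term $I(C_i;X|YB_iO_B)$ the register $O_A$ does not appear and is therefore traced out; because $O_A$ and $O_B$ carry the same orthonormal label $\ket{o}$, tracing out $O_A$ annihilates all off-diagonal terms in $o$ and renders $O_B$ classical, $\Tr{O_A}{\kb{\Psi}{\Psi}} = \sum_o p_O(o)\,\kb{o}{o}^{O_B}\otimes \kb{\Psi_o}{\Psi_o}$. Restricting further to $C_i X Y B_i O_B$ preserves this block-diagonal structure and yields a classical--quantum state $\sum_o p_O(o)\,\kb{o}{o}^{O_B}\otimes \rho_o^{C_i X Y B_i}$ with $O_B$ classical. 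Applying Lemma~\ref{prelim:lem:CQMIfacts} with classical register $O_B$ then gives $I(C_i;X|YB_iO_B) = \sum_o p_O(o)\,I(C_i;X|YB_i)_{\rho_o}$, and symmetrically, tracing out $O_B$ to make $O_A$ classical, $I(C_i;Y|XA_iO_A) = \sum_o p_O(o)\,I(C_i;Y|XA_i)_{\rho_o}$.

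Finally I would sum these identities over all rounds $i$ and recognize that $\sum_i\bigl[I(C_i;X|YB_i)_{\rho_o}+I(C_i;Y|XA_i)_{\rho_o}\bigr] = \QIC(\Pi,\mu_o)$, which produces the claimed $\QIC(\Pi,\mu|O)=\sum_o p_O(o)\,\QIC(\Pi,\mu_o)$. The only point requiring care---the main, though mild, obstacle---is justifying that $O_B$ (resp.\ $O_A$) really is classical in the marginal relevant to each round; this rests exactly on the perfect correlation of $O_A$ and $O_B$ combined with the fact that $\Pi$ never acts on them, so the diagonal-in-$o$ structure is maintained through the entire execution and survives every additional partial trace.
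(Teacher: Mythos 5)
Your proposal is correct and follows essentially the same route as the paper, which justifies the lemma in one sentence by appealing to the perfect correlation of $O_A$ and $O_B$ (so that each becomes classical once the other is traced out) and then invoking the ``conditioning on a classical register is taking the average'' property of Lemma~\ref{prelim:lem:CQMIfacts}. Your write-up simply fills in the details of that argument---the invariance of the block-diagonal-in-$o$ structure under the protocol's action and under partial traces---which the paper leaves implicit.
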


The next lemma follows from subadditivity (a stronger version than the one stated here, which is proved in Ref.~\cite{braverman2015near} and also holds for quantum state inputs and is equivalent in the case of classical inputs with side information) and the fact that we can implement an identity channel with the trivial protocol that does not communicate at all.

\begin{lemma}[Safe discarding of side-information]\label{QICdiscard}

Let $\Pi$ be a  protocol acting on input registers $A_{in} B_{in} = XY$, $\rho_{\mu}^{A_{in} B_{in}}$ and $\sigma^{A_{in}^\prime B_{in}^\prime}$ be states, with $A_{in}^\prime = A_{in} \otimes \tilde{A}$ and $B_{in}^\prime = B_{in} \otimes \tilde{B}$ for some arbitrary finite dimensional registers $\tilde{A}, \tilde{B}$, and such that $\Tr{\tilde{A} \tilde{B}}{\sigma} = \rho_\mu$. Then 
\begin{align*}
	\QIC (\Pi, \mu | \sigma) & \leq \QIC (\Pi, \mu).
\end{align*}

\end{lemma}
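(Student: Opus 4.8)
The plan is to realize the side-information cost as the quantum information cost of a parallel composition in which the side registers are carried by a do-nothing protocol, and then to apply subadditivity against that cost-free protocol. A direct, term-by-term comparison is hopeless: adjoining $\tilde B$ to the conditioning in $I(C_i; X \mid Y B_i \tilde B)$ neither uniformly increases nor uniformly decreases the conditional mutual information relative to $I(C_i; X \mid Y B_i)$, so the inequality cannot be read off round-by-round. The gain must come from treating all rounds together through subadditivity.

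First I would introduce the trivial protocol $\Pi_{\mathrm{id}}$ that takes $\tilde A$ as Alice's input and $\tilde B$ as Bob's input, exchanges no messages, and returns these registers unchanged (the identity channel). Since $\Pi_{\mathrm{id}}$ sends nothing, its information-cost sum has no communicated registers, so $\QIC(\Pi_{\mathrm{id}}, \cdot) = 0$ on every input. I would then run $\Pi$ and $\Pi_{\mathrm{id}}$ in parallel on the joint input $\sigma$, whose $A_{in} B_{in}$-marginal is $\rho_\mu$ by hypothesis. In the composite $\Pi \otimes \Pi_{\mathrm{id}}$ the only messages are the $C_i$ of $\Pi$, while $\tilde A$ sits untouched in Alice's memory and $\tilde B$ in Bob's; hence the $i$-th contribution to $\QIC(\Pi \otimes \Pi_{\mathrm{id}}, \sigma)$ is $I(C_i; A_{in} \tilde A \mid B_{in} \tilde B B_i) + I(C_i; B_{in} \tilde B \mid A_{in} \tilde A A_i)$, carrying exactly the same conditioning registers as in Definition~\ref{def:sideqic} but with enlarged target registers $A_{in}\tilde A$ and $B_{in}\tilde B$.

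The two halves are then glued by nonnegativity of conditional quantum mutual information: dropping $\tilde A$ from the target only decreases the quantity, $I(C_i; A_{in} \mid B_{in}\tilde B B_i) \le I(C_i; A_{in}\tilde A \mid B_{in}\tilde B B_i)$, and symmetrically for the other term, so $\QIC(\Pi, \mu \mid \sigma) \le \QIC(\Pi \otimes \Pi_{\mathrm{id}}, \sigma)$. To bound the right-hand side I would invoke the stronger form of subadditivity from~\cite{braverman2015near} --- the one valid for arbitrary, possibly entangled, quantum state inputs rather than only the classical inputs of Lemma~\ref{lem:prop_QIC} --- which gives $\QIC(\Pi \otimes \Pi_{\mathrm{id}}, \sigma) \le \QIC(\Pi, \mu) + \QIC(\Pi_{\mathrm{id}}, \Tr{A_{in} B_{in}}{\sigma})$. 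The second term vanishes, yielding $\QIC(\Pi, \mu \mid \sigma) \le \QIC(\Pi, \mu)$.

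The main obstacle I expect is the register bookkeeping underlying the middle step, together with correctly invoking the quantum-input subadditivity: one must verify that in $\Pi \otimes \Pi_{\mathrm{id}}$ the side registers enter only as untouched memory (so that both the communicated registers and the conditioning match Definition~\ref{def:sideqic}), and that the subadditivity of~\cite{braverman2015near} really does apply when $\sigma$ correlates or entangles Alice's and Bob's side registers. Once these points are pinned down, the nonnegativity step and the vanishing cost of the trivial protocol make the chain of inequalities immediate; indeed, the remark that the stronger subadditivity is ``equivalent, for classical inputs with side information,'' to the present statement is a strong signal that this composition with $\Pi_{\mathrm{id}}$ is exactly the intended route.
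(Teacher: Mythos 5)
Your proposal is correct and takes essentially the same route as the paper: the paper's entire proof is the one-sentence remark preceding the lemma, which invokes exactly the two ingredients you use --- the stronger subadditivity of~\cite{braverman2015near} valid for quantum state inputs, and the fact that the identity channel on the side registers is implemented by a trivial zero-communication (hence zero-$\QIC$) protocol. Your write-up merely fills in the register bookkeeping and the chain-rule/nonnegativity step that the paper leaves implicit.
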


We will consider inconclusive protocols which compute a given relation but might also return output ``inconclusive'' with some probability.
A particularly important class of such protocols in our setting are recursively defined as follows, which generalizes our $AND$ protocol $\Pi_A$.

\begin{definition}
\label{def:recursive}

Given a protocol $\widetilde{\Pi}$ which can be inconclusive, and such that Alice and Bob always agree on whether a run was inconclusive or not, we recursively define $\Pi (\widetilde{\Pi})$ as follows:

Protocol $\Pi (\tilde{\Pi})$:
\begin{enumerate}
\item Run Protocol $\widetilde{\Pi}$.
\item If $\widetilde{\Pi}$ returns an output, return this output.
\item Else, if $\widetilde{\Pi}$ is inconclusive, rerun $\Pi (\widetilde{\Pi})$
\end{enumerate}

\end{definition}

In particular, we will be interested in protocols $\widetilde{\Pi}$ that  have the same probability $p$ to be inconclusive for all their inputs.
The following bound holds for such protocols.

\begin{lemma}[Rerunning an inconclusive protocol]\label{eq:rerun}

Let $\widetilde{\Pi}$ be a protocol that has uniform probability $p < 1$ of being inconclusive for all input pairs $(x, y)$, and let $\Pi (\widetilde{\Pi})$ be as in Definition~\ref{def:recursive}. Then for any input distribution $\mu$ it holds that

\begin{align*}
	\QIC (\Pi (\widetilde{\Pi}), \mu ) & \leq \frac{1}{1-p} \QIC (\widetilde{\Pi}, \mu).
\end{align*}

\end{lemma}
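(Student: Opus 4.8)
The plan is to unroll the recursion into finitely many stages and bound the quantum information cost stage by stage, exploiting the geometric decay of the weight attached to each stage. Concretely, I would first introduce a family of truncated protocols $\Pi_N$ that run $\widetilde{\Pi}$ at most $N$ times, outputting ``inconclusive'' if all $N$ runs are inconclusive, so that $\Pi_1$ has the same cost as $\widetilde{\Pi}$ and $\Pi(\widetilde{\Pi})$ is the limit of $\Pi_N$ as $N \to \infty$. The statement then reduces to the inductive bound
\begin{align*}
\QIC(\Pi_N, \mu) \leq \QIC(\widetilde{\Pi}, \mu) + p\,\QIC(\Pi_{N-1}, \mu),
\end{align*}
which solves to $\QIC(\Pi_N, \mu) \leq \frac{1 - p^N}{1-p}\,\QIC(\widetilde{\Pi}, \mu)$ and yields the claimed bound upon letting $N \to \infty$.

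For the inductive step I would partition the messages of $\Pi_N$ into those of its first run of $\widetilde{\Pi}$ and those of the conditional rerun of $\Pi_{N-1}$. Since $\QIC$ is a sum of per-message terms $I(C_i; X \mid Y B_i O_B) + I(C_i; Y \mid X A_i O_A)$ whose conditioning registers only involve what the parties hold up to message $i$, the first-run messages contribute exactly $\QIC(\widetilde{\Pi}, \mu)$, regardless of what follows. The rerun occurs only when the first run is inconclusive; because Alice and Bob always agree on whether a run was inconclusive, this is a common classical flag, so by Lemma~\ref{lem:qic_cond_av} the rerun's contribution is the average over the flag, namely weight $1-p$ on the conclusive branch (which triggers no further messages and contributes $0$) and weight $p$ on the inconclusive branch.

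The crucial point is that the inconclusive probability $p$ is \emph{uniform over all inputs}, so the flag is independent of $(x,y)$ and the posterior input distribution on the inconclusive branch remains $\mu$. Hence the inconclusive branch contributes $p$ times the cost of running $\Pi_{N-1}$ on $\mu$, but now in the presence of the quantum side-information left over from the aborted first run. By Lemma~\ref{QICdiscard}, that side-information can be safely discarded without increasing the cost, so the inconclusive branch is bounded by $p\,\QIC(\Pi_{N-1}, \mu)$, closing the recursion.

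I expect the main obstacle to be the bookkeeping that justifies the two lemma applications rather than any hard estimate. One must check that the ``inconclusive'' flag is genuinely a common classical register, perfectly correlated between the parties, so that conditioning on it is exactly averaging, and that it carries zero information about the inputs (this is precisely where uniformity of $p$ is used); and one must cast the residual quantum registers from the first run into the side-information template of Definition~\ref{def:sideqic} so that safe discarding applies verbatim. Once these hypotheses are verified, the remaining work — solving the linear recursion and taking the limit $N \to \infty$ — is routine.
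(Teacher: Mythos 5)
Your proposal is correct and rests on the same core ingredients as the paper's proof: the decomposition of $\Pi(\widetilde{\Pi})$ into the first run of $\widetilde{\Pi}$ plus a conditional rerun, the observation that uniformity of $p$ makes the inconclusive flag independent of $(x,y)$ so that the posterior input distribution on the rerun branch is still $\mu$, conditioning on the common classical flag as averaging (Lemma~\ref{lem:qic_cond_av}), and safe discarding of the leftover quantum registers from the aborted run (Lemma~\ref{QICdiscard}). The only genuine difference is how the recursion is resolved: the paper applies the decomposition directly to $\Pi(\widetilde{\Pi})$ itself, obtaining the self-referential inequality $\QIC(\Pi(\widetilde{\Pi}),\mu) \leq \QIC(\widetilde{\Pi},\mu) + p\,\QIC(\Pi(\widetilde{\Pi}),\mu)$ and rearranging, whereas you truncate to protocols $\Pi_N$ with at most $N$ runs, prove the bound by induction, and let $N \to \infty$. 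Your route is slightly more laborious but also slightly more careful: the paper's rearrangement implicitly assumes $\QIC(\Pi(\widetilde{\Pi}),\mu)$ is finite, an issue your truncation sidesteps --- though in exchange you owe the (easy) monotone-convergence argument that $\QIC(\Pi_N,\mu)$ converges to $\QIC(\Pi(\widetilde{\Pi}),\mu)$, which holds because the per-message information terms are nonnegative and coincide with those of $\Pi(\widetilde{\Pi})$ on the first $N$ runs.
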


\begin{proof}
In the setting of Definition~\ref{def:sideqic}, let $\widetilde{A} = O_A A_{left}$ and $\widetilde{B} = O_B B_{left}$ with $O_A$, $O_B$ indicator variables (as in the setting of Definition~\ref{def:condqic}) for whether the run was inconclusive and $A_{left}$, $B_{left}$ leftover registers when a run is inconclusive. Then, if $O_A$ and $O_B$ are set to zero, the protocol ends and the only cost incurred is that of $\widetilde{\Pi}$, and else $\Pi (\widetilde{\Pi})$ is rerun entirely. Notice that since the probability $p$ that the run is inconclusive is uniform for all inputs $(x, y)$, it holds that $\mu|O_A = 1$ is distributed as $\mu$. Denoting $\sigma_1$ the state on $XY A_{left} B_{left}$ when $O_A = 1$, we get

\begin{align*}
	\QIC (\Pi (\widetilde{\Pi}), \mu ) & =  \QIC (\widetilde{\Pi}, \mu) + \Pr[O_A = 1] \QIC (\Pi (\widetilde{\Pi}), \mu | \sigma_1 ) \\
					& \leq \QIC (\widetilde{\Pi}, \mu) + p \QIC (\Pi (\widetilde{\Pi}), \mu ),
\end{align*}

in which the inequality follows by safe discarding of quantum side-information (Lemma~\ref{QICdiscard}). The result follows by rearranging terms.

\end{proof}

\subsection{Quantum Information Cost of Pure State Protocols}

\subsubsection{General Pure State Protocols}

We consider protocols for which, conditional on fixed inputs $x, y$, pure states $\ket{\phi_i^{x, y}}^C$ are exchanged. For any such protocol $\Pi$ and any input distribution $\mu$, we can then rewrite for the $i$th term of the quantum information cost:

\begin{align}
\QIC_i (\Pi, \mu) & = I(X; C_i | Y B_i) + I(Y; C_i | X A_i) \\
	& = H(C_i | Y B_i) - H(C_i | X Y B_i) + H(C_i | X A_i) - H(C_i | X Y A_i) \\
	& = H(C_i | Y B_i) + H(C_i | X A_i)  \\
	& = \bE_{y \sim \mu_Y} [ H(C_i B_i | Y = y) - H(B_i | Y = y) ] \label{eq:purestateinfo} \\
	& \quad \quad 
+ \bE_{x \sim \mu_X} [ H(C_i A_i | X = x) - H(A_i | X = x)  ]. \nonumber
\end{align}

The third equality follows by expanding over conditioning registers $XY$ (Lemma~\ref{prelim:lem:CQMIfacts}) and then using that the messages in register $C$ are pure states (Lemma~\ref{prelim:lem:CQMIfacts}).
We denote by $\mu_X$ and $\mu_Y$ the marginals of $\mu$  on $X$ and $Y$, respectively.
This is a similar to the form of $\QIC$ for memoryless protocols studied in Ref.~\cite{chailloux2017information}.

\subsubsection{Protocols with One-bit Inputs}
\label{sec:ANDctny}

For computing bitwise $AND$, we can further use the fact that $X$ and $Y$ are single bits. 
We will be interested in distributions $\mu_w$ on $XY$ with very small mass $\mu_w (1, 1) =w$, 
hence we consider a distribution that is close to the extreme case $\mu_0^w$ given by
$\mu_0^w (1,1) = 0$ and $\mu_0^w (x,y) = \frac{1}{1-w} \mu_w (x,y)$ for $(x, y) \not= (1, 1)$. We handle the non-zero but small mass on $(1, 1)$ 
by quasi-convexity in the input distribution. Note that $\QIC(\Pi, \mu_1) = 0$ for $\mu_1$ such that $\mu_1 (1, 1) = 1$.
%
%
%
The following bound holds for any $M$ message protocol and was derived in Ref.~\cite{braverman2015near}.

\begin{lemma}[Continuity for low mass protocols]
\label{lem:ctny-low-mass}
For any $M$ message
protocol $\Pi$ and any input distribution $\mu_w$ as above, 
\begin{align*}
\QIC (\Pi, \mu_w)  &\leq \QIC (\Pi, \mu_{0}^w) + 2 M \; h(w).
\end{align*}
\end{lemma}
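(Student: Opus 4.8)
The plan is to express $\mu_w$ as an explicit convex combination of $\mu_0^w$ and the degenerate distribution $\mu_1$ supported entirely on $(1,1)$, and then to invoke the quasi-convexity of $\QIC$ in the input distribution recorded in Lemma~\ref{lem:prop_QIC}. This reduces the whole statement to a one-line mixing inequality together with the two simple facts already noted in the surrounding text.

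First I would verify the decomposition
\[
\mu_w = (1-w)\,\mu_0^w + w\,\mu_1 ,
\]
which is immediate from the definition of $\mu_0^w$: for $(x,y)\neq(1,1)$ the right-hand side equals $(1-w)\cdot\frac{1}{1-w}\mu_w(x,y)=\mu_w(x,y)$, while on $(1,1)$ it equals $w=\mu_w(1,1)$. Next I would apply quasi-convexity (Lemma~\ref{lem:prop_QIC}) with mixing weight $p=w$, the two input distributions taken to be $\mu_1$ and $\mu_0^w$, and the protocol being the given $M$-message protocol $\Pi$ (so that $r=M$ in the penalty term). This yields
\[
\QIC(\Pi,\mu_w) \leq w\,\QIC(\Pi,\mu_1) + (1-w)\,\QIC(\Pi,\mu_0^w) + 2M\,h(w).
\]

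Finally I would simplify using two observations stated earlier: $\QIC(\Pi,\mu_1)=0$, because $\mu_1$ is a point mass so $X$ and $Y$ carry no randomness and every conditional mutual information in Definition~\ref{prelim:def:QICprotocol} vanishes; and $(1-w)\,\QIC(\Pi,\mu_0^w)\leq \QIC(\Pi,\mu_0^w)$, since $\QIC\geq 0$ and $1-w\leq 1$. Combining these gives exactly the claimed inequality. There is essentially no serious obstacle here: the entire content lies in choosing the correct convex decomposition and quoting the already-established quasi-convexity property. The only points worth double-checking are that the message count $M$ in the penalty term $2M\,h(w)$ indeed matches the $r$ in the quasi-convexity statement (it does, as $\Pi$ is an $M$-message protocol) and that binary entropy is symmetric, so $h(w)=h(1-w)$ makes the ordering of the two mixed distributions irrelevant.
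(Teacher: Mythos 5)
Your proof is correct and takes essentially the same route the paper intends: the paper itself cites Ref.~\cite{braverman2015near} for this lemma rather than proving it, but the surrounding text sets up exactly your argument, namely the convex decomposition of $\mu_w$ into $\mu_0^w$ and the point mass $\mu_1$ on $(1,1)$, the quasi-convexity property of Lemma~\ref{lem:prop_QIC} with penalty $2M\,h(w)$, and the observation that $\QIC(\Pi,\mu_1)=0$. Your write-up just makes this one-line mixing inequality explicit, so nothing further is needed.
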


Hence, the set of input distributions $\mu_0$ with no mass on $(1, 1)$ will play a special role. We have the following definition.
\begin{definition}
For any protocol $\Pi$ for $AND$, we define 
\begin{align*}
\QIC_0 (\Pi)  &= \max_{\mu_0 : \mu_0 (1, 1)  =0} \QIC (\Pi, \mu_{0}).
\end{align*}
\end{definition}

\subsubsection{Protocols with no Pre-shared Entanglement}
\label{sec:protocol_no_ent}

We will be considering protocols with no pre-shared entanglement. For such protocols, the following remark can be seen to hold by an inductive argument.

\begin{remark}
For any protocol with pure state messages and no pre-shared entanglement, it holds that registers $A_i$, $B_i$, and $C_i$ are all pure, separable states conditional on $X,Y$.
\end{remark}

For protocols with one-bit inputs, we are thus left with computing the different entropies in $\QIC$, all over states corresponding to an ensemble of two pure states each with some \textit{a priori} distribution. Computing such entropies can be reduced to a function of the overlap $F$ between the two pure states and the probability $p$ of having the first of these states.

\begin{lemma}\label{ranktwofid}
For any two pure states $\ket{\psi}^A$ and $\ket{\phi}^A$ with overlap $F (\ket{\psi}, \ket{\phi}) = |\bk{\psi}{\phi}| = F$ and probability $p$ of having $\ket{\psi}^A$, the entropy $H(A)_\rho$ of the average state $\rho^A = p \kb{\psi}{\psi} + (1-p) \kb{\phi}{\phi}$ can be computed as a function of $F$ and p:

\begin{align*}
	H(A)_\rho & = h \left(\frac{1}{2} - \frac{1}{2} \sqrt{1 - 4p(1-p)(1-F^2)}\right) \\
			& \leq h \left(\frac{1}{2} (1-F)\right),
\end{align*}
with the binary entropy $h(\epsilon) = - \epsilon \log \epsilon - (1 - \epsilon) \log (1 - \epsilon)$ for any $\epsilon \in [0, 1]$.
\end{lemma}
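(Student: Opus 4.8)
The plan is to exploit the fact that $\rho$ is supported on the (at most) two-dimensional subspace $\mathrm{span}\{\ket{\psi},\ket{\phi}\}$, so it has at most two nonzero eigenvalues $\lambda$ and $1-\lambda$ and hence $H(A)_\rho = h(\lambda)$. Rather than diagonalizing $\rho$ directly, I would recover the two eigenvalues from their elementary symmetric functions: they sum to $\mathrm{Tr}\,\rho = 1$, and their product is pinned down by the purity $\mathrm{Tr}\,\rho^2$.

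First I would compute $\mathrm{Tr}\,\rho^2$. Expanding $\rho^2$ and tracing, the cross terms contribute $2p(1-p)\abs{\bk{\psi}{\phi}}^2 = 2p(1-p)F^2$, giving $\mathrm{Tr}\,\rho^2 = p^2 + (1-p)^2 + 2p(1-p)F^2$. A pleasant feature worth noting is that only the modulus $F = \abs{\bk{\psi}{\phi}}$ enters — any relative phase in the inner product cancels — which is precisely why the entropy is a function of $F$ alone. From $\lambda_1 + \lambda_2 = 1$ one gets $2\lambda_1\lambda_2 = 1 - \mathrm{Tr}\,\rho^2$, which simplifies to $\lambda_1\lambda_2 = p(1-p)(1-F^2)$. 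The two eigenvalues are then the roots of $t^2 - t + p(1-p)(1-F^2) = 0$, namely $\frac{1}{2}\bigl(1 \pm \sqrt{1 - 4p(1-p)(1-F^2)}\bigr)$. Since $h$ is symmetric about $\tfrac12$, evaluating it at the smaller root yields the claimed equality.

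For the inequality I would use that both $\frac{1}{2} - \frac{1}{2}\sqrt{1 - 4p(1-p)(1-F^2)}$ and $\frac{1}{2}(1-F)$ lie in $[0,\tfrac12]$, where $h$ is nondecreasing, so it suffices to compare the arguments. That reduces to showing $\sqrt{1 - 4p(1-p)(1-F^2)} \geq F$; squaring (both sides being nonnegative) and cancelling the common factor $1 - F^2 \geq 0$ leaves the elementary inequality $4p(1-p) \leq 1$, which holds for all $p \in [0,1]$ with equality at $p = \tfrac12$. The degenerate case $F = 1$ (where $\ket{\psi}$ and $\ket{\phi}$ are parallel and $\rho$ is pure) is handled separately and trivially, both sides being $h(0) = 0$.

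The computation is elementary throughout; the only place requiring slight care is this last step, where I must verify that both arguments of $h$ fall on the increasing branch $[0,\tfrac12]$ before invoking monotonicity, and that the factor $1 - F^2$ I divide out is strictly positive — which is guaranteed once the $F = 1$ case is set aside.
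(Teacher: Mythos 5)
Your proof is correct. It is worth noting how it sits relative to the paper: the paper gives essentially no proof of this lemma --- the eigenvalue formula (the equality) is stated as a standard fact about rank-two mixtures, and the inequality is justified in a single line by observing that $h\left(\frac{1}{2} - \frac{1}{2}\sqrt{1-4p(1-p)(1-F^2)}\right)$ is, for fixed $F$, maximized at $p = \frac{1}{2}$, where it evaluates to $h\left(\frac{1}{2}(1-F)\right)$. Your derivation of the equality via the purity $\mathrm{Tr}\,\rho^2$ and the elementary symmetric functions of the two eigenvalues is a clean, self-contained way to obtain what the paper takes for granted, and it correctly explains why only the modulus $F$ of the inner product matters. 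Your argument for the inequality is mathematically the same fact as the paper's one-liner --- both reduce to $4p(1-p) \leq 1$ with equality at $p = \frac{1}{2}$ --- but you make it rigorous by checking that both arguments of $h$ lie on the increasing branch $[0, \frac{1}{2}]$ and by isolating the degenerate case $F = 1$ before dividing by $1 - F^2$; these are exactly the details the paper's remark glosses over. In short: same underlying content, but your write-up supplies the two verifications (eigenvalue computation and monotonicity bookkeeping) that the paper leaves implicit.
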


The inequality follows 
since, for any fixed $F \in [0, 1)$, the binary entropy $h(\frac{1}{2} - \frac{1}{2} \sqrt{1 - 4p(1-p)(1-F^2)})$ is maximized at $p=\frac{1}{2}$.

\subsection{Appointment Scheduling and Disjointness}

The main problem that we study is the appointment scheduling problem.
In this problem, Alice and Bob each have a calendar and know for each date whether they are available or not to schedule an appointment.
The goal is for Alice and Bob to determine a date where they are both available.
In case where  at least one such date exists, we require that both Alice and Bob output the same date.
In case when no such date exists, we require that they both output that their calendars are non-intersecting.
More formally, we represent $n$-date calendars by input strings $x,y \in \ZO^n$, with a one at position $i$ indicating availability and a zero non-availability.
The goal of Alice and Bob is to both output the same date $i \in [n]$, with $[n] = \{ 1, 2, \ldots, n \}$, such that $x_i = y_i = 1$ if such a date exists, or else both output $``\emptyset"$ saying that their calendars are non-intersecting.

A closely related problem is the disjointness function for $n$-bit inputs, defined as: for all $x,y \in \ZO^n$,
\begin{align*}
\DISJ_n(x,y) = \neg \left(\OR_{i \in [n]} (x_i \, \AND \, y_i)\right). 
\end{align*}

In fact, any protocol solving the appointment scheduling problem can be converted into a protocol with output on both sides solving the disjointness function without changing the communication or the probability of error as follows: 
\begin{enumerate}
\item Alice and Bob run the protocol for appointment scheduling and each get an output.
\item If the output is that no intersection exists, they output that the sets are disjoint.
\item Else, if the output is some date of intersection, they output that the sets are not disjoint.
\end{enumerate}

The following bounds is proven in Ref.~\cite{braverman2013information} for computing the  Disjointness function with zero-error.
\begin{theorem}
Any zero-error classical protocol $\Pi_C$ for computing the Disjointness function on $n$-bit inputs has information leakage satisfying $IC (\Pi_C) \geq 0.48n$.
\end{theorem}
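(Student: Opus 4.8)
The plan is to prove the bound through the now-standard information-complexity direct-sum framework: reduce the lower bound for $\DISJ_n$ to a single-coordinate analysis of the two-bit $\AND$ function, and then pin down the constant by (nearly) exactly computing the information complexity of $\AND$. Since $IC(\Pi_C) = \max_\mu IC(\Pi_C, \mu)$ is prior-free, it suffices to exhibit one input distribution $\mu$ and lower bound $IC(\Pi_C, \mu)$.

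First I would fix a product hard distribution $\mu = \nu^{\otimes n}$, where each coordinate $(x_i, y_i)$ is drawn independently from a single-coordinate distribution $\nu$ on $\ZO \times \ZO$ supported on the three disjoint outcomes $(0,0)$, $(0,1)$, $(1,0)$, so that $\mu$ is supported entirely on disjoint pairs. The handling of the forbidden outcome $(1,1)$ is exactly analogous to the distributions $\mu_w$ and $\mu_0^w$ of Section~\ref{sec:ANDctny}: to respect the zero-error requirement one keeps a vanishing amount of mass on $(1,1)$ and recovers the bound for $\nu(1,1)=0$ by a continuity argument of the type in Lemma~\ref{lem:ctny-low-mass}.

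Second comes the direct-sum step. Observe that $\DISJ_n = \neg\OR_{i}(x_i \AND y_i)$, and that a zero-error protocol for $\DISJ_n$ must in particular detect a lone intersection: if coordinate $i$ intersects while all others are disjoint, then $\DISJ_n = 0$. Hence, fixing the remaining coordinates to disjoint values sampled from $\nu$ and handing them to Alice and Bob as private advice, the protocol becomes a correct protocol for single-coordinate $\AND$. Using the chain rule together with the fact that $\mu$ is a product distribution (so the auxiliary coordinates carry no information about coordinate $i$), I would lower bound the information revealed about coordinate $i$ by the conditional information complexity of $\AND$ under $\nu$, and sum over $i$ to get $IC(\Pi_C, \mu) \geq n \cdot IC_\nu(\AND)$. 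This is the classical analogue of subadditivity (Lemma~\ref{lem:prop_QIC}) used in the superadditive direction under a product distribution.

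Finally, and this is where the constant $0.48$ and the main difficulty reside, I would lower bound $IC_\nu(\AND)$ and optimize over $\nu$. The hard part is that the information complexity of the innocuous-looking $\AND$ function has no simple closed form: following Braverman--Garg--Pankratov--Weinstein one analyzes an optimal $\AND$ protocol as a continuous process on the probability simplex, derives the local (differential) behaviour of the information cost as a function of the running posterior, and solves the resulting variational problem to obtain the exact value $C_{\DISJ} = \max_\nu IC_\nu(\AND) \approx 0.4827$. Combining $0.48 \le C_{\DISJ}$ with the direct-sum bound yields $IC(\Pi_C) \ge IC(\Pi_C, \mu) \ge 0.48\, n$. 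The two delicate points are the zero-error handling of the measure-zero input $(1,1)$ and the variational computation of the $\AND$ constant, the latter being the genuine technical core of the argument.
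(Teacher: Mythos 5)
You should know at the outset that the paper does not prove this theorem itself: it is imported directly from Ref.~\cite{braverman2013information} (Braverman--Garg--Pankratov--Weinstein), and your proposal is essentially a reconstruction of that reference's argument --- a direct-sum reduction of $\DISJ_n$ to single-coordinate $\AND$, followed by the variational computation of the constant $C_{\DISJ} \approx 0.4827$. So the overall route is the right one, and it is the same route as the source the paper relies on.

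Two points in your sketch need repair, one of them substantive. The substantive one is the embedding step. Any single-coordinate distribution $\nu$ with $\nu(1,1)=0$ that is not degenerate is necessarily \emph{correlated}: a product distribution $p \otimes q$ avoids $(1,1)$ only if $p(1)=0$ or $q(1)=0$. Consequently the remaining $n-1$ coordinates, distributed as $\nu^{\otimes (n-1)}$, cannot be ``handed to Alice and Bob as private advice'' or sampled privately by the two parties, and your justification that ``the auxiliary coordinates carry no information about coordinate $i$'' conflates independence \emph{across} coordinates with independence \emph{within} a coordinate. The standard fix --- which is what the cited reference does, following Bar-Yossef--Jayram--Kumar--Sivakumar --- is to write $\nu$ as a mixture of product distributions conditioned on an auxiliary variable $D$ (e.g., $D$ selects which party's bit is forced to $0$ while the other party's bit is Bernoulli), sample $D_{-i}$ publicly, let each party privately complete its own input coordinates, and run the whole direct sum with the \emph{conditional} information cost $I(\cdot\, ; \Pi \mid \cdot\,, D_{-i})$; the single-coordinate quantity that comes out is the conditional information complexity of $\AND$, and it is this quantity whose supremum over admissible $(\nu, D)$ gives the constant. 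You do name conditional information complexity, so you are gesturing at the right tool, but as literally written the step fails. The second, minor point: keeping vanishing mass on $(1,1)$ ``to respect the zero-error requirement'' is unnecessary and backwards --- zero error is a property of the protocol (correctness on every input, including $(1,1)$), not of the prior, and it is exactly this property that makes the embedded protocol a correct $\AND$ protocol even though $\nu(1,1)=0$; no continuity argument is needed on the lower-bound side.
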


The result was extended to non-zero error protocols in Ref.~\cite{dagan2016trading}, who obtained the following bounds.
\begin{theorem}
Any classical protocol $\Pi_C$ for computing the Disjointness function on $n$-bit inputs with error at most $\epsilon>0$ has information leakage satisfying $IC (\Pi_C) \geq (0.48 - 16 \; h(\sqrt{\epsilon}))n$.
\end{theorem}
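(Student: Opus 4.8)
The plan is to establish this as a robust, $\epsilon$-error version of the zero-error bound $IC(\Pi_C)\geq 0.48n$ of Ref.~\cite{braverman2013information}, following the conditional-information-complexity direct-sum methodology of Bar-Yossef--Jayram--Kumar--Sivakumar as refined by Braverman et al. First I would fix a hard single-coordinate distribution $\mu$ on $\ZO\times\ZO$ supported on $\{(0,0),(0,1),(1,0)\}$, so that $\mu(1,1)=0$ and the product $\mu^{\otimes n}$ is concentrated on disjoint instances, together with an auxiliary splitting variable $D$ recording, for each coordinate, which party's bit is the free one. The central object is the symmetric conditional information cost $\CIC_\mu(\cdot\mid D)$ of a single-bit $\AND$ protocol, and the first step is the direct-sum inequality
\[
IC(\Pi_C)\;\geq\;\sum_{j=1}^{n}\CIC_{\mu}\bigl(\Pi_C^{(j)}\mid D\bigr),
\]
where $\Pi_C^{(j)}$ is the single-coordinate protocol obtained by embedding a hard $\AND$ instance into coordinate $j$ and filling the remaining coordinates with independent samples from $\mu$. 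This superadditivity is the standard consequence of the chain rule together with the product structure of $\mu^{\otimes n}$ and the independence of the embedded coordinate from the rest.

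Second, I would control the error of each induced coordinate protocol. Since $\DISJ_n=\neg\,\OR_{i\in[n]}(x_i\,\AND\,y_i)$ and $\Pi_C$ errs with probability at most $\epsilon$ over $\mu^{\otimes n}$, a Markov argument shows that all but a $\sqrt{\epsilon}$-fraction of coordinates carry induced $\AND$-error at most $\sqrt{\epsilon}$; on each such good coordinate $j$, the protocol $\Pi_C^{(j)}$ computes single-bit $\AND$ with error at most $\sqrt{\epsilon}$ under $\mu$.

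Third --- and this is the heart of the argument --- I would prove a continuity bound for the information complexity of $\AND$ in the error parameter, of the form
\[
\min_{\delta\text{-error }\Pi}\CIC_\mu(\Pi\mid D)\;\geq\;\min_{0\text{-error }\Pi}\CIC_\mu(\Pi\mid D)\;-\;c\,h(\sqrt{\delta}).
\]
The idea is that a low-error transcript distribution is close in statistical distance to a correct one; passing through Pinsker's inequality converts an error of $\delta$ into a $\sqrt{\delta}$-scale perturbation of the relevant mutual informations, and Fannes/Audenaert-type continuity of entropy contributes the $h(\sqrt{\delta})$ term. I expect this continuity step to be the main obstacle, since information cost is neither monotone nor obviously Lipschitz in the error, and one must argue the perturbation carefully at the level of the conditional-mutual-information terms defining $\CIC$ rather than at the level of the output bit alone.

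Finally I would assemble the constant. Invoking the exact computation of Ref.~\cite{braverman2013information} that $\max_\mu\min_{0\text{-error}}\CIC_\mu(\AND\mid D)\geq 0.48$, and summing the per-coordinate bound over the at-least-$(1-\sqrt{\epsilon})n$ good coordinates while discarding the at-most-$\sqrt{\epsilon}\,n$ bad ones, gives $IC(\Pi_C)\geq (1-\sqrt{\epsilon})(0.48-c\,h(\sqrt{\epsilon}))n$. Absorbing the $\sqrt{\epsilon}\cdot 0.48$ loss together with the continuity constant into a single coefficient --- using $\sqrt{\epsilon}\leq h(\sqrt{\epsilon})$ in the relevant small-$\epsilon$ regime --- yields the stated bound $IC(\Pi_C)\geq(0.48-16\,h(\sqrt{\epsilon}))n$, with the precise value $16$ emerging from bookkeeping the two symmetric terms of $\CIC$, the two-sided continuity loss, and the Markov fraction.
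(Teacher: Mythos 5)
You should first be aware that the paper does not prove this theorem at all: it is quoted verbatim from Ref.~\cite{dagan2016trading}, so the only meaningful comparison is against that reference, whose overall strategy (the zero-error bound of Ref.~\cite{braverman2013information} plus a continuity-in-error theorem for information complexity costing $O(h(\sqrt{\epsilon}))$ per coordinate) your outline does resemble. However, your proposal has two genuine gaps. First, step 2 is broken as written: you measure the error of the induced single-coordinate protocols \emph{under} $\mu$, a distribution with $\mu(1,1)=0$. Under such a distribution the constant-``disjoint'' protocol has zero error and zero information cost, so ``error at most $\sqrt{\epsilon}$ under $\mu$'' gives no leverage whatsoever -- the entire content of the $0.48$ bound is that the protocol must also be correct on the embedded input $(1,1)$, which has zero mass under $\mu$ and hence is completely uncontrolled by distributional error over $\mu^{\otimes n}$. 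The theorem presumes worst-case error, under which each induced coordinate protocol has error at most $\epsilon$ on \emph{all four} inputs directly, with no Markov argument; your Markov/$\sqrt{\epsilon}$-fraction step both introduces an unnecessary loss and, more importantly, controls the wrong quantity.

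Second, and more fundamentally, your step 3 -- the continuity lemma $\min_{\delta\text{-error}}\CIC_\mu(\Pi\mid D)\geq\min_{0\text{-error}}\CIC_\mu(\Pi\mid D)-c\,h(\sqrt{\delta})$ -- is asserted rather than proved, and it does not follow from Pinsker plus Fannes-type perturbation as you suggest. Information complexity is an infimum over protocols, not an entropy of a fixed state: an $\epsilon$-error protocol's transcript distribution need not be close to that of \emph{any} zero-error protocol, so there is no single object whose entropy you can perturb. Establishing this continuity (by constructing, from a low-error protocol, a related exactly-correct protocol whose information cost exceeds the original by only $O(h(\sqrt{\epsilon}))$ per coordinate) is precisely the main technical theorem of Ref.~\cite{dagan2016trading}; deferring it to ``Pinsker and Fannes'' is deferring the entire proof. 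Relatedly, the constant $16$ in the statement is never derived in your write-up -- it is the output of that construction, not of bookkeeping around the direct sum -- so even granting your lemma with an unspecified constant $c$, you have not established the stated bound.
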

It is also shown in Ref.~\cite{dagan2016trading} that the $16 \; h(\sqrt{\epsilon})$ term can be replaced by the potentially much smaller $O(h(\epsilon))$ for the Disjointness function. However, the constant is left unexplicit in that case.

\section{Information Leakage Analysis for our Appointment Scheduling protocol: Ideal setting}\label{leakage_ideal}

%


In this section we analyze the ideal setting information leakage of the appointment scheduling protocol developed in the main text. First we prove Lemma~\ref{QIC_general}, which holds for any appointment scheduling protocol using an arbitrary zero-error protocol $\Pi_A$ for AND. We then use this lemma to bound the information leakage of our protocol.

\subsection{$\mathrm{QIC}$ for $\Pi_D$ with a generic $\Pi_A$}

\begin{lemma}\label{QIC_general}
Given a zero-error protocol $\Pi_A$ for $AND$ built from protocol $\widetilde{\Pi}_A$ as in Definition~\ref{def:recursive} that has uniform probability $p$ of being inconclusive, the protocol $\Pi_D$ described in Section~\ref{coh_prot} satisfies
\begin{align*}
\QIC (\Pi_D) & \leq s + \log s + 1 + \frac{n}{ 1 - p} \max \Bigg[\frac{2(2r+1)}{n},  \\
		&\hspace{1.5in} \QIC_0 (\widetilde{\Pi}_A) + 2(2r +1) \; h\left(\frac{2  \ln n }{s } + \frac{1}{n}\right)\Bigg]
\end{align*}
for all values of $n$ and $s$ satisfying $n\geq 4$ and $8\ln(n)\leq s \leq n$.
\end{lemma}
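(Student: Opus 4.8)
The plan is to separate the information cost of $\Pi_D$ into the contribution of the classical subsampling phase (steps 1--3) and that of the date-wise phase (steps 4--5), and to bound each. For the subsampling phase, the shared randomness selecting $S$ is public and independent of the inputs, so by Lemma~\ref{lem:qic_cond_av} I may condition on $S$ and average over it at no cost. Alice then sends the $s$ classical bits $\{x_i\}_{i\in S}$; as each such message is a function of her input alone, $I(C;Y|XA)=0$ and $I(C;X|YB)\le 1$ per bit, so by the dimension bound of Lemma~\ref{prelim:lem:CQMIfacts} this costs at most $s$. Bob's single reply is an element of $[s]\cup\{\bot\}$, costing at most $\log s+1$. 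Together these give the leading $s+\log s+1$.

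For the date-wise phase I would first condition on the event that the subsampling did not already output an intersection (so that this phase is actually reached), using the side-information decomposition of Definition~\ref{def:sideqic} in the same manner as the proof of Lemma~\ref{eq:rerun}. Conditioned on this event and on $S$, the phase runs $\Pi_A$ on every date $i\notin S$, so by subadditivity (Lemma~\ref{lem:prop_QIC}) its cost is at most $\sum_{i\notin S}\QIC(\Pi_A,\mu_i)$, and by the rerun bound (Lemma~\ref{eq:rerun}) each summand is at most $\frac{1}{1-p}\QIC(\widetilde{\Pi}_A,\mu_i)$. For each date I bound $\QIC(\widetilde{\Pi}_A,\mu_i)$ in two ways: by the low-mass continuity estimate $\QIC_0(\widetilde{\Pi}_A)+2(2r+1)\,h(w_i)$ of Lemma~\ref{lem:ctny-low-mass}, where $w_i$ is the conditional probability that date $i$ intersects; and, since the inputs are single bits and $\widetilde{\Pi}_A$ has $2r+1$ messages each contributing at most $2$, by the trivial bound $2(2r+1)$. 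Retaining whichever is appropriate---continuity for the typical low-mass dates, the trivial bound for the few dates on which $w_i$ is too large for continuity to apply---produces the maximum in the statement, with the $\frac{2(2r+1)}{n}$ entry arising from the latter after the overall $\frac{n}{1-p}$ normalization.

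The crux is controlling the masses $w_i$ through the subsampling. The key point is that reaching the date-wise phase requires that no sampled date be an intersection, an event of probability $(1-k/n)^s\le e^{-ks/n}$ when the inputs have $k=|x\cap y|$ intersections; weighting the date-wise cost by this probability thus heavily penalizes inputs with many intersections. For a product worst-case distribution the per-date intersection probabilities $w_i$ have average $\bar w$, the reach probability is $(1-\bar w)^s$, and applying concavity of $h$ (Jensen) to aggregate $\sum_i h(w_i)\le n\,h(\bar w)$ collapses the date-wise estimate to $n(1-\bar w)^s[\QIC_0(\widetilde{\Pi}_A)+2(2r+1)h(\bar w)]$; for a general distribution the same reduction goes through after tracking only the number of intersections $k$. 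It then remains to establish the scalar analytic inequality $\max_{t\in[0,1]}(1-t)^s\,h(t)\le h\!\left(\frac{2\ln n}{s}+\frac1n\right)$ for the stated range $n\ge 4$, $8\ln n\le s\le n$, the summand $\frac1n$ absorbing the integrality of $k$ and the factor $2\ln n$ supplying the needed slack; bounding $(1-\bar w)^s\le 1$ on the $\QIC_0$ term and applying this inequality to the $h$ term yields exactly the bracketed expression.

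The step I expect to be the main obstacle is this scalar optimization, and more precisely verifying that the exponential reach penalty $e^{-ks/n}$ dominates both the growth of $h$ and the trivial $2(2r+1)$ contribution uniformly across all $k$ and all (possibly correlated) worst-case input distributions $\mu$: this is where the precise constants and the parameter restrictions $8\ln n\le s\le n$ are forced. The remaining bookkeeping---assembling the two phases and carrying the $\frac{1}{1-p}$ rerun factor through---is routine given the lemmas already in hand.
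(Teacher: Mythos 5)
Your skeleton matches the paper's proof step for step: the subsampling phase bounded by its communication ($s+\log s+1$), conditioning on the indicator $S_A$ of finding an intersection while subsampling (Lemmas~\ref{lem:qic_cond_av} and~\ref{QICdiscard}), subadditivity, the $\frac{1}{1-p}$ rerun factor from Lemma~\ref{eq:rerun}, and low-mass continuity (Lemma~\ref{lem:ctny-low-mass}). The genuine gap is the crux step, which you flag yourself but whose sketch is wrong in two ways. First, the identity ``the reach probability is $(1-\bar w)^s$'' is false: given an input pair with $N(x,y)=d$ intersections, all $s$ samples miss with probability $(1-d/n)^s$, so $\Pr[S_A=0]=\bE_\mu\left[(1-N/n)^s\right]\geq \left(1-\bE_\mu[N]/n\right)^s$ by convexity of $t\mapsto(1-t)^s$; Jensen points the wrong way, since you need an \emph{upper} bound on the product (reach probability)$\times$(cost). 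Second, the masses entering the continuity bound must be the \emph{posterior} masses $\nu_i(1,1)$ of $\mu$ conditioned on $S_A=0$, not prior ones, and relating the posterior average $m=\frac1n\sum_i\nu_i(1,1)$ to $q=\Pr[S_A=0]$ is exactly what is hidden in ``the same reduction goes through after tracking only the number of intersections $k$.'' That reduction is the proof, and it cannot be justified by concavity: the map from the distribution of $N$ to $q\,h(m)$ is concave (a perspective of $h$ composed with a linear map), and maxima of concave functionals over convex sets need not occur at extreme points, so the passage to the scalar problem $\max_t(1-t)^s h(t)$ is unsupported as stated. Relatedly, your reading of the $\frac{2(2r+1)}{n}$ entry is off: in the paper it is not a per-date trivial bound for high-mass dates, but one side of a global dichotomy, namely the case $\Pr[S_A=0]\leq 1/n$, in which the dimension bound $\QIC(\widetilde{\Pi}_A,\cdot)\leq 2(2r+1)$ applied once to the averaged input distribution kills the entire date-wise phase.

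The paper closes the gap with that dichotomy: when $q>1/n$, Bayes' rule gives $\Pr[N=d\,|\,S_A=0]\leq n\Pr[N=d](1-d/n)^s\leq n\Pr[N=d]e^{-ds/n}$, and splitting $\sum_d d\,\Pr[N=d|S_A=0]$ at $d=\lceil 2n\ln(n)/s\rceil$ yields $m\leq \frac{2\ln n}{s}+\frac1n=:x_0$, after which $q\leq 1$ and monotonicity of $h$ on $[0,\frac12]$ give the second entry of the max (here is where $n\geq4$ and $s\geq 8\ln n$ enter, via $x_0\leq\frac12$). If you prefer to stay close to your route, it is repairable without the dichotomy: Bayes gives the exact identity $qm=\bE_\mu\left[\frac{N}{n}\left(1-\frac{N}{n}\right)^s\right]\leq\max_t t(1-t)^s\leq\frac1{s+1}$; if $m\leq x_0$ then $q\,h(m)\leq h(x_0)$ outright, while if $m>x_0$ then $q\,h(m)=qm\cdot\frac{h(m)}{m}\leq\frac{1}{s+1}\cdot\frac{h(x_0)}{x_0}\leq h(x_0)$, using that $h(x)/x$ is decreasing and $(s+1)x_0\geq 2\ln n\geq 1$. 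One of these arguments must be supplied; as written, your proposal stops exactly where the lemma's difficulty lies.
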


Note that for zero error, running protocol $\Pi_D$ in terms of $\Pi_A$ and $\widetilde{\Pi}_A$ as in the main text or in terms of $\Pi_A (\widetilde{\Pi}_A)$ as per Definition~\ref{def:recursive} leads to the same $\QIC$. This follows since the extra information sent 
in step~3 of   $\Pi_A$ 
can be computed locally, because for zero-error protocols 
this extra information is sent only when both inputs are equal to ``1''.
For the proof below, we thus consider $\Pi_A (\widetilde{\Pi}_A)$ as per Definition~\ref{def:recursive}, and virtually avoid the transmission of this extra information in the ideal case.

We want to show that $\QIC (\Pi_D) = \max_\mu \QIC (\Pi_D, \mu)$ is small. Fix any such~$\mu$. 
The bitwise $AND$ protocol is only guaranteed to have low information leakage for low probability of intersection.
The classical subsampling part serves to ensure that, in the case where no intersecting date is found, we can update our knowledge about $\mu$, such that with high probability, it has few intersections in the remaining dates. 

Hence, an average date has low probability of being an intersecting date, and we can run our bitwise $AND$ protocol on each date without incurring too high an information leakage. We formalize this as follows.

\begin{proof}
We first upper bound the information leakage in the classical subsampling part by the total amount of communication arising in that phase (Lemma~\ref{prelim:lem:CQMIfacts} with classical communication register), $s$ for the compared bit values and $\log s  +1$ to indicate whether there is coincidence and if so, the position of coincidence within the subsampled set.
Let $S_A$ be an indicator random variable for whether an intersecting date was found while subsampling.
We first use the results that conditioning on a common variable is taking average (Lemma~\ref{lem:qic_cond_av}) and that we can safely discard side-information (Lemma~\ref{QICdiscard}) 
to notice that the information cost corresponding to running the bitwise $AND$ protocol $\Pi_A$  in parallel is upper bounded by $\QIC (\Pi_A^{\otimes n}, \nu)$, in which $\nu$ is the distribution resulting from conditioning the distribution $\mu$ on the observation $S_A=0$, denoted  $\mu | S_A = 0$. Denoting $\nu_i$ the marginal of $\nu$ in the $i$th date, we can then bound

\begin{align}
\QIC (\Pi_D, \mu) & \leq s + \log s + 1 + \Pr[S_A = 0] \QIC (\Pi_A^{\otimes n }, \nu) \\
			& \leq s + \log s + 1 +  \Pr [S_A = 0] \sum_{i \in [n]}  \QIC (\Pi_A,  \nu_i) \\
			& \leq s + \log s + 1 +  \Pr [S_A = 0] n \QIC( \Pi_A, \frac{1}{n} \sum_{i \in [n]} \nu_i) \\
			& \leq s + \log s + 1 +  \Pr [S_A = 0] \frac{n}{1 - p} \QIC (\tilde{\Pi}_A, \frac{1}{n} \sum_{i \in [n]} \nu_i), 
\end{align}
in which the second inequality follows from subadditivity (Lemma~\ref{lem:prop_QIC}), the third from concavity in the input distribution (Lemma~\ref{lem:prop_QIC}) and the fourth by  rerunning an inconclusive protocol (Lemma~\ref{eq:rerun}).

We consider two cases, either $\Pr[S_A = 0] \leq 1/n$ or $\Pr[S_A = 0] > 1/n$.

If $\Pr[S_A = 0] \leq 1/n$, then we can use a dimension bound (Lemma~\ref{prelim:lem:CQMIfacts} on each classical input register) to get, for  the $2r+1$ messages in $\tilde{\Pi}_A$,

\begin{align}
 \Pr [S_A = 0]  \QIC (\tilde{\Pi}_A, \frac{1}{n} \sum_{i \in [n]} \nu_i) \leq  \frac{2(2r+1)}{n} ,
\end{align}
which completes the proof in this case.

If $\Pr[S_A = 0] > 1/n$, we need only to show that the classical subsampling stage ensures the inequality
\begin{align}\label{inequality}
\frac{1}{n} \sum_{i \in [n]} \nu_i(1,1)\leq \frac{2 \ln(n)}{s}+\frac{1}{n}\leq 1/2
\end{align}
for all values of $n$ and $s$ satisfying $n\geq 4$ and $8 \ln (n)\leq s \leq n$. An application of the continuity bound for low mass protocols (Lemma~\ref{lem:ctny-low-mass}) will complete the proof in this case. The second inequality in \eqref{inequality} is straightforward. For the first, let $N(X,Y)$ be a random variable outputting the number of intersecting dates of $(x,y)$. Note that
\begin{align}
\sum_{i \in [n]} \nu_i(1,1)&=\bE_{\nu} N(X, Y)=\bE_{\mu|S_A=0} N(X, Y)\nonumber\\
							&=\sum_{1 \leq d \leq n} \Pr[N(X,Y) = d | S_A = 0] \; d
\end{align}
and
\begin{align}
\Pr[N(X,Y) = d | S_A = 0] & = \frac{\Pr[N(X,Y) = d ]}{\Pr [S_A = 0]} \cdot \Pr[S_A = 0 | N(X,Y) = d ] \nonumber\\
				& \leq n \Pr[N(X,Y) = d ]  (1 - d/n)^{s} \nonumber\\
				& \leq n \Pr[N(X,Y) = d ]  \exp (-ds/n), \label{expbound}
\end{align}
where the first inequality follows from the case assumption $\Pr[S_A=0]\geq 1/n$ and $\Pr[S_A = 0 | N(X,Y) = d ]\leq (1 - d/n)^{s}$.
Thus,
\begin{align}
\sum_{i \in [n]} \nu_i(1,1)&=\sum_{1 \leq d \leq \cl{\frac{2 n \ln(n)}{s}}} \Pr[N(X,Y) = d | S_A = 0] \; d\nonumber\\
&+\sum_{\cl{\frac{2 n \ln(n)}{s}} < d \leq n} \Pr[N(X,Y) = d | S_A = 0] \; d\nonumber\\
&\leq \sum_{1 \leq d \leq \cl{\frac{2 n \ln(n)}{s}}} \Pr[N(X,Y) = d | S_A = 0] \; d \nonumber\\
&+\sum_{\cl{\frac{2 n \ln(n)}{s}} < d \leq n} n \Pr[N(X,Y) = d ] \exp (-ds/n) n\nonumber\\
&\leq \sum_{1 \leq d \leq \cl{\frac{2 n \ln(n)}{s}}} \Pr[N(X,Y) = d | S_A = 0] \; d \nonumber\\
&+ 1 \nonumber\\
&\leq \frac{2n \ln(n)}{s}+2. \nonumber
\end{align}
The first inequality follows from \eqref{expbound} and the upper bound $d\leq n$ for all $d$ in 
the range of the second sum. The second inequality follows from $\exp(-ds/n)\leq 1/n^2$ 
(which results from $2\ln(n) n / s < d$) and the fact that the sum over all $d$ of $\Pr[N(X, Y) = d]$ is 
at most $1$. The third inequality follows from  the fact that the sum is upper bounded by a convex 
combination of $1 \leq d\leq \cl{\frac{2 n \ln(n)}{s}}$, which is upper bounded by the largest 
term $\cl{\frac{2 n \ln(n)}{s}} \leq \frac{2 n \ln(n)}{s} +1$. This completes the proof of the 
inequality \eqref{inequality}, and  the proof of Lemma~\ref{QIC_general} follows from the continuity for low mass protocols, Lemma~\ref{lem:ctny-low-mass}.
\end{proof}

\subsection{$\mathrm{QIC}$ for subroutine $AND$ protocol}\label{QIC_subroutine}

Here we complete the proof of Theorem~\ref{QIC_general_ideal} by bounding the information cost $\QIC_0(\widetilde{\Pi}_A) = \max_{\mu_0: \mu_0 (1,1 ) = 0} \QIC(\widetilde{\Pi}_A, \mu_0) $ of our bitwise-$AND$ protocol $\Pi_A$ in the ideal setting.

The following lemma, along with Lemma~\ref{QIC_general} and the fact that the uniform probability $p$ of $\widetilde{\Pi}_A$ to be inconclusive is $p = \exp (- \abs{\alpha}^2)$, completes the proof of Theorem~\ref{QIC_general_ideal}.
\begin{lemma}
\begin{align*}
\QIC_0 (\widetilde{\Pi}_A) \leq h\left(\frac{1}{2} (1 - F(r, \alpha))\right),
\end{align*}
in which
\begin{align*}
F(r, \alpha) = \exp \left[-r\abs{\alpha}^2 \left[1 - \cos \left(\frac{\pi}{2r}\right)\right] \right].
\end{align*}
\end{lemma}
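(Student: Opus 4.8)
The plan is to evaluate $\QIC(\widetilde{\Pi}_A,\mu_0)$ for an arbitrary $\mu_0$ with $\mu_0(1,1)=0$ using the pure-state, no-entanglement formula \eqref{eq:purestateinfo}. Since conditioning on both inputs leaves every register pure (the remark of Section~\ref{sec:protocol_no_ent}), each message contributes $\QIC_i=H(C_i\mid Y B_i)+H(C_i\mid X A_i)$. Because $\mu_0$ is supported on $\{(0,0),(0,1),(1,0)\}$, I would first read the message states off the ideal evolution: on every support point the Bob$\to$Alice (even-indexed) messages and the final classical measurement outcome are identically $\ket{\alpha,0}$ and ``$0$'', carrying no dependence on $(x,y)$, whereas the Alice$\to$Bob (odd-indexed) message is $\ket{\alpha,0}$ when $x=0$ and $\ket{\cos(\theta_r)\alpha,\sin(\theta_r)\alpha}$ when $x=1$, independent of $y$.

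From this, most terms vanish: conditioned on $X$ every message is pure, so all $H(C_i\mid X A_i)$ terms are $0$; the even messages and the classical message are constant, so their Bob-side terms are $0$; and conditioned on $Y=1$ the odd messages are pure. The only surviving contribution is $\sum_{\text{odd }i}H(C_i\mid Y B_i)$, the leakage of $X$ to Bob through Alice's $r$ messages when $y=0$. The crucial step is to combine these $r$ terms correctly. Here I would use that in the unitary interactive model of Figure~\ref{fig:int_mod} nothing is truly discarded: when Bob ``replaces'' the incoming mode by a fresh $\ket{\alpha_i,0}$ he in fact retains the received state inside $B_i$, so $B_{2k-1}$ contains $(Y,C_1,C_3,\dots,C_{2k-3})$ together with a pure uncorrelated ancilla that drops out of the CQMI. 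The chain rule then telescopes the surviving sum into $I(C_1C_3\cdots C_{2r-1};X\mid Y)=H(C_1C_3\cdots C_{2r-1}\mid Y)$, the last equality by purity of the joint message conditioned on $(X,Y)$.

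Finally I would evaluate this joint entropy with Lemma~\ref{ranktwofid}. Conditioned on $Y=0$, the register $C_1C_3\cdots C_{2r-1}$ is the rank-two ensemble of $\ket{\alpha,0}^{\otimes r}$ and $\ket{\cos(\theta_r)\alpha,\sin(\theta_r)\alpha}^{\otimes r}$, whose overlap is the $r$-th power of the single-round overlap $\abs{\bk{\alpha,0}{\cos(\theta_r)\alpha,\sin(\theta_r)\alpha}}=\exp(-\abs{\alpha}^2(1-\cos\theta_r))$, namely exactly $F(r,\alpha)=\exp[-r\abs{\alpha}^2(1-\cos(\pi/2r))]$. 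Averaging over $Y$ only multiplies by $\mu_Y(0)\le 1$, so Lemma~\ref{ranktwofid} gives $\QIC(\widetilde{\Pi}_A,\mu_0)\le h(\tfrac12(1-F(r,\alpha)))$ uniformly in $\mu_0$, and the claim follows upon taking the maximum.

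The main obstacle I anticipate is precisely the telescoping step: a naive per-message bound would yield $r$ independent copies of $h(\tfrac12(1-\exp(-\abs{\alpha}^2(1-\cos\theta_r))))$, which is strictly larger than the claimed bound. Recovering the single product-overlap $F(r,\alpha)$ hinges on justifying that Bob's accumulated register makes the sum of per-round conditional mutual informations collapse to the mutual information with the joint tensor-product state, i.e.\ that retaining rather than discarding the spent modes is without loss in the unitary model and that the adjoined ancillas are pure and uncorrelated, hence disappear from the conditioning.
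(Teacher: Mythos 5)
Your proposal is correct and follows essentially the same route as the paper: kill the Alice-side, even-round, and final classical-message terms under $\mu_0$; exploit Bob's isometric retention of received modes so the odd-round terms telescope (your chain-rule phrasing is the same manipulation as the paper's sum of entropy differences $H(C_iB_i|Y)-H(B_i|Y)$); and evaluate the resulting joint entropy via Lemma~\ref{ranktwofid} on the rank-two ensemble $\{\ket{\alpha,0}^{\otimes r},\ket{\cos(\theta_r)\alpha,\sin(\theta_r)\alpha}^{\otimes r}\}$ with overlap $F(r,\alpha)$. You also correctly identify the crux — that a per-message bound would only give $r\,h\bigl(\tfrac12(1-e^{-\abs{\alpha}^2(1-\cos\theta_r)})\bigr)$ — which is exactly the weaker bound the paper obtains for the alternative protocol in Appendix~\ref{app2}.
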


Now we prove this lemma. 
First, recall the evolution of the protocol on the different inputs.
In order to assess information leakage, Bob makes all his operations isometric.
Here, we explicitly record  in parentheses the state representing Bob's memory. Note that in practice, Bob can generate these states on the fly and need not keep any quantum memory.

\begin{samepage}
\textbf{Evolution of $\widetilde{\Pi}_A$ for different inputs:}

\begin{flushleft}
\textbf{On (0, 0):}
\begin{tabular}{l l l l}
$\ket{\alpha, 0} (\ket{\alpha, 0}^{\otimes r})$ & $\rightarrow_A \ket{\alpha, 0} (\ket{\alpha, 0}^{\otimes r})$ &$\rightarrow_B \ket{\alpha, 0} (\ket{\alpha, 0}^{\otimes r})$ & $\rightarrow_A \cdots $
\end{tabular}\\
\textbf{On (0,1):}
\begin{tabular}{l l l l}
$\ket{\alpha, 0} (\ket{\alpha, 0}^{\otimes r})$ & $\rightarrow_A \ket{\alpha, 0} (\ket{\alpha, 0}^{\otimes r})$ & $\rightarrow_B \ket{\alpha, 0} (\ket{\alpha, 0}^{\otimes r})$ & $\rightarrow_A \cdots $
\end{tabular}\\
\textbf{On (1, 0): }
\begin{tabular}{l l}
\hspace{-.2in}$\ket{\alpha, 0} (\ket{\alpha, 0}^{\otimes r})$ & \\
\hspace{-.2in}$\rightarrow_A \ket{\cos (\theta) \alpha, \sin (\theta ) \alpha} (\ket{\alpha, 0}^{\otimes r})$
&\hspace{-1.5in} $\rightarrow_B \ket{\alpha, 0} (\ket{\cos (\theta) \alpha, \sin (\theta ) \alpha} \ket{\alpha, 0}^{\otimes r-1})$\\
\hspace{-.2in}$\rightarrow_A \ket{\cos (\theta) \alpha, \sin (\theta ) \alpha} (\ket{\cos (\theta) \alpha, \sin (\theta ) \alpha} \ket{\alpha, 0}^{\otimes r-1})$ & \\ 
& \hspace{-1.5in}$\rightarrow_B \ket{\alpha, 0} (\ket{\cos (\theta) \alpha, \sin (\theta ) \alpha}^{\otimes 2} \ket{\alpha, 0}^{\otimes r-2})$ \\
\hspace{-.2in}$\rightarrow_A \ket{\cos (\theta) \alpha, \sin (\theta ) \alpha} (\ket{\cos (\theta) \alpha, \sin (\theta ) \alpha}^{\otimes 2} \ket{\alpha, 0}^{\otimes r-2})$
& \\
& \hspace{-1.5in} $\rightarrow_B \ket{\alpha, 0} (\ket{\cos (\theta) \alpha, \sin (\theta ) \alpha}^{\otimes 3} \ket{\alpha, 0}^{\otimes r-3})$ \\
$\vdots$ & $\vdots$ \\
\hspace{-.2in}$\rightarrow_A \ket{\cos (\theta) \alpha, \sin (\theta ) \alpha} (\ket{\cos (\theta) \alpha, \sin (\theta ) \alpha}^{\otimes r-1} \ket{\alpha, 0}^{\otimes 1})$
& \\
& \hspace{-1.5in}$\rightarrow_B \ket{\alpha, 0} (\ket{\cos (\theta) \alpha, \sin (\theta ) \alpha}^{\otimes r} )$\\
\hspace{-.2in}$= \ket{\alpha, 0} (\ket{\cos (\theta) \alpha, \sin (\theta ) \alpha}^{\otimes r} )$ &
\end{tabular}\\
\textbf{On (1, 1):}\\
\begin{tabular}{l l}
$\ket{\alpha, 0} (\ket{\alpha, 0}^{\otimes r})$ & \\
$\rightarrow_A \ket{\cos (\theta) \alpha, \sin (\theta ) \alpha} (\ket{\alpha, 0}^{\otimes r})$
& $\rightarrow_B \ket{\cos (\theta) \alpha, \sin (\theta ) \alpha} (\ket{\alpha, 0}^{\otimes r})$ \\
$\rightarrow_A \ket{\cos (2 \theta) \alpha, \sin (2 \theta ) \alpha} (\ket{\alpha, 0}^{\otimes r})$
& $\rightarrow_B \ket{\cos (2 \theta) \alpha, \sin (2 \theta ) \alpha} (\ket{\alpha, 0}^{\otimes r})$ \\
\vdots & \vdots \\
$\rightarrow_A \ket{\cos (r \theta) \alpha, \sin (r \theta ) \alpha} (\ket{\alpha, 0}^{\otimes r})$
& $\rightarrow_B \ket{\cos (\frac{\pi}{2}) \alpha , \sin (\frac{\pi}{2}) \alpha} (\ket{\alpha, 0}^{\otimes r})$ \\
$= \ket{0, \alpha} (\ket{ \alpha, 0}^{\otimes r} )$.
\end{tabular}
\end{flushleft}
\end{samepage}

We now upper bound $\QIC_0(\widetilde{\Pi}_A)$. 

We first handle the last message.
Under $\mu_0$, this last message is $0$ with probability $1- e^{-\abs{\alpha}^2}$ and inconclusive with probability $e^{-\abs{\alpha}^2}$. This holds independently of the input $(x, y)$ and of the content of $B_{2r+1}$. Thus, $\QIC_{2r+1} (\widetilde{\Pi}_A, \mu_0) = 0$.
This is the only non-pure message in the protocol.

The first $2r$ messages are pure, so we start from (\ref{eq:purestateinfo}), for which we have a trivial $A_i$ register here.
We first note that under $\mu_0$, if $X=1$ then $Y=0$, so $H(C_i | X =1) = 0$ because register $C_i$ is then in a pure state (Lemma~\ref{prelim:lem:CQMIfacts}). Similarly, if $Y=1$ then $X=0$ and $H(C_i | B_i, Y =1) = 0$ because register $C_i$ is then in a pure state.
Further,  on $X=0$, the pure state messages in $C_i$ is the same whether $Y=0$ or $Y=1$, so $H(C_i | X =0) = 0$ because register $C_i$ is then in a pure state. Similarly, on even $i$'s and for $Y=0$, the pure state messages in $C_i$ is the same whether $X=0$ or $X=1$, so  $H (C_i | B_i, Y=0) = 0$ for even $i$'s. 
Finally, for odd $i>1$'s, we notice that on $Y = 0$, the pure state held in register $B_i$ is the same as that in registers $C_{i-2} B_{i-2}$, so that the following holds:

\begin{align}
\QIC (\widetilde{\Pi}_A, \mu_0)  & = \sum_{i=1}^{2r} \QIC_i (\widetilde{\Pi}_A, \mu_0)  \\
			& \leq  \big( H(C_1 B_1 | Y = 0) - H( B_1 | Y = 0)  \\
			& \quad \quad +\sum_{i=3,~odd}^{2r-1} ( H(C_i B_i | Y = 0) - H(C_{i-2} B_{i-2} | Y = 0) \big) \\
			& =   H (C_{2r-1} B_{2r-1} | Y = 0 ) - H(B_1 | Y=0) \\
			& =   H (C_{2r-1} B_{2r-1} | Y = 0 ).
\end{align}

The content of the $C_{2r-1} B_{2r-1}$ registers at the end is $r$ copies or the same pure state, the same that was sent in message $C_1$, depending on $X$ for $Y= 0$. Denoting by $D_i$ each of these $r$ copies and using the results from Section~\ref{sec:protocol_no_ent}, we can obtain the following bound:

\begin{align}
\QIC (\widetilde{\Pi}_A, \mu_0)  & \leq  H (C_{2r-1} B_{2r-1} | Y = 0 )\\
				& =  H(D_1 \cdots D_r) \\ \label{eq:tensorH}
				& \leq h \left(\frac{1}{2} (1 - F(r, \alpha))\right),
\end{align}
with $F(r, \alpha)$ the overlap  between $r$ copies of the pure states sent in message $C_1$ corresponding to $X=0$ and $X=1$, respectively.
Recall that the overlap $F$ is $F= \exp (- \frac{1}{2} |\gamma - \delta|^2 )$ for coherent states $\ket{\gamma}$ and $\ket{\delta}$. 
We get $F(r, \alpha) = \exp (- r\frac{\abs{\alpha}^2}{2} [\sin^2 (\theta) + (1 - \cos (\theta))^2 ]) = \exp (- r\abs{\alpha}^2 [1 - \cos (\frac{\pi}{2r}) ])$. The result follows.

\section{Information Leakage Analysis for our Appointment Scheduling protocol: Experimental errors}\label{leakage_exp}

In this section we analyze the information leakage of our appointment scheduling protocol under experimental imperfections.
We also argue about the error the protocol makes.
Finally, we analyse a slight modification of the protocol in the regime $\eta = 1 - 1/r$ which allows us to achieve information leakage $\frac{n}{r}$ up to polylogarithmic terms, for $r \leq n^{1/3}$.

\subsection{Information leakage analysis}
Here we analyze the information leakage of our modified appointment scheduling protocol under experimental errors.
%

\subsubsection{$\mathrm{QIC}$ for $\Pi_D$ with experimental errors}

\begin{lemma}\label{PiD_lemma}

The protocol $\Pi_D$ run with subroutine $\Pi_A^\prime$ described in Section~\ref{sec:account_exp_error} satisfies
\begin{align*}
\QIC (\Pi_D) & \leq s + \log s + 1 + \frac{2n}{ 1 - p} p_{dark} \\
			&\quad \quad +  \frac{n}{ 1 - p} \max \Bigg[\frac{2(2r+3)}{n}, \\
&\hspace{1.5in}  \QIC_0 (\widetilde{\Pi}_A^\prime)+ 2(2r +3) \; h\left(\frac{2  \ln n }{s } + \frac{1}{n}\right)\Bigg]
\end{align*}
for all values of $n$ and $s$ satisfying $n\geq 4$ and $8\ln(n)\leq s \leq n$,
and where 
$p$ is the probability of an inconclusive outcome, i.e., the probability that either no click or a double click occur,
\begin{align*}
p&=e^{- \eta_{det}\abs{\alpha_{out}}^2}(1-p_{dark})^2+(1-e^{- \eta_{det}\abs{\alpha_{out}}^2}+e^{- \eta_{det}\abs{\alpha_{out}}^2}p_{dark})p_{dark}.
\end{align*}
\end{lemma}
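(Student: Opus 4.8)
The plan is to mirror the proof of Lemma~\ref{QIC_general} for the ideal setting, introducing three modifications to absorb the experimental imperfections: the subroutine is now $\widetilde{\Pi}_A^\prime$ rather than $\widetilde{\Pi}_A$, the uniform inconclusive probability is the experimental value $p$ stated in the lemma, and the classical verification step of $\Pi_A$ now genuinely leaks because dark counts allow the protocol to err. The new term $\frac{2n}{1-p}p_{dark}$ and the shift from $2r+1$ to $2r+3$ messages will both come from this verification step.

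First I would verify that $\widetilde{\Pi}_A^\prime$ has inconclusive probability independent of $(a,b)$, as required to invoke Lemma~\ref{eq:rerun}. The key observation is that on every input the ideal final state before measurement has all of its intensity in one mode — it is $\ket{\alpha_{out},0}$ on $(0,0),(0,1),(1,0)$ and $\ket{0,\alpha_{out}}$ on $(1,1)$ — so one detector sees intensity $\eta_{det}\abs{\alpha_{out}}^2$ and the other sees only dark counts. Computing the no-click probability $e^{-\eta_{det}\abs{\alpha_{out}}^2}(1-p_{dark})^2$ and the double-click probability $\bigl(1-e^{-\eta_{det}\abs{\alpha_{out}}^2}+e^{-\eta_{det}\abs{\alpha_{out}}^2}p_{dark}\bigr)p_{dark}$ then recovers exactly the stated $p$, with the value manifestly the same for all four inputs.

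The main new ingredient is isolating the verification leakage. I model the conclusive $AND$ protocol $\Pi_A^\prime$ as $\Pi(\widehat{\Pi})$ in the sense of Definition~\ref{def:recursive}, where $\widehat{\Pi}$ is $\widetilde{\Pi}_A^\prime$ augmented by the two-bit exchange of $(a,b)$ performed whenever the output is ``$1$''. Since this exchange strictly follows a conclusive run, $\widehat{\Pi}$ has the same uniform inconclusive probability $p$ and exactly $M=2r+3$ messages ($2r+1$ from the quantum subroutine, then the two verification bits), and the states at messages $1,\dots,2r+1$ are unaffected by the later exchange, so their $\QIC$ contributions still sum to $\QIC(\widetilde{\Pi}_A^\prime,\mu_0)$. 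Because whether the output equals ``$1$'' is a variable common to Alice and Bob, Lemma~\ref{lem:qic_cond_av} lets me evaluate the two verification messages as an average: their cost is $0$ when the output is not ``$1$'' and at most $2$ bits (dimension bound) when it is. On any distribution $\mu_0$ with no mass on $(1,1)$, the ideal second mode is vacuum, so a ``$1$'' output demands a dark count there and occurs with probability at most $p_{dark}$; hence the verification adds at most $2p_{dark}$, giving $\QIC(\widehat{\Pi},\mu_0)\le \QIC_0(\widetilde{\Pi}_A^\prime)+2p_{dark}$.

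From here the argument is verbatim the ideal one. I bound the subsampling phase by its communication $s+\log s+1$, condition on the event $S_A=0$ that no intersection is found while subsampling, and chain subadditivity and concavity (Lemma~\ref{lem:prop_QIC}) with the rerunning bound (Lemma~\ref{eq:rerun}) to reduce the remaining cost to $\frac{n}{1-p}\QIC(\widehat{\Pi},\bar\nu)$ for the averaged per-date marginal $\bar\nu$. Splitting on $\Pr[S_A=0]$: when it is at most $1/n$, the dimension bound over all $2r+3$ messages yields the $\frac{2(2r+3)}{n}$ branch; when it exceeds $1/n$, the subsampling estimate \eqref{inequality} (whose derivation is unchanged) gives $\bar\nu(1,1)\le \frac{2\ln n}{s}+\frac1n$, and continuity for low-mass protocols (Lemma~\ref{lem:ctny-low-mass}) with $M=2r+3$, combined with $\QIC(\widehat{\Pi},\mu_0)\le\QIC_0(\widetilde{\Pi}_A^\prime)+2p_{dark}$, gives the other branch; pulling the $2p_{dark}$ through the $\frac{n}{1-p}$ factor produces the separate additive term $\frac{2n}{1-p}p_{dark}$. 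The obstacle I expect is precisely this verification step — cleanly separating the two classical messages from the quantum subroutine and justifying, via the common-outcome conditioning, that their cost collapses to at most $2p_{dark}$ on the no-mass distributions; once that is established, collecting the two branches under the maximum delivers the claimed bound.
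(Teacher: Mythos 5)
Your proposal is correct and follows essentially the same route as the paper: both define the augmented subroutine (your $\widehat{\Pi}$, the paper's $\hat{\Pi}_A$) so that $\Pi_A$ with $\widetilde{\Pi}_A^\prime$ becomes $\Pi(\hat{\Pi}_A)$ in the sense of Definition~\ref{def:recursive}, rerun the proof of Lemma~\ref{QIC_general} with $2r+3$ messages, and bound $\QIC_0(\hat{\Pi}_A)\le\QIC_0(\widetilde{\Pi}_A^\prime)+2p_{dark}$ via the common-outcome conditioning, dimension bound, and the observation that a ``$1$'' outcome under $\mu_0$ requires a dark count, before pulling the non-negative $2p_{dark}$ out of the maximum. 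Your explicit verification that the inconclusive probability $p$ is uniform over all inputs is a detail the paper leaves implicit, but it does not change the argument.
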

In the next section we will show
\begin{align*}
\QIC_0 (\widetilde{\Pi}_A^\prime) \leq h (\frac{1}{2} (1 - \widetilde{F}(r, \alpha_{out}, \eta))),
\end{align*}
with
\begin{align*}
\tilde{F} (r, \alpha_{out}, \eta) &= \exp \left[\frac{-(\eta^{-2r} - 1)}{(1- \eta^2)} \abs{\alpha_{out}}^2 \left[1 - \cos \left(\frac{\pi}{2r}\right)\right] \right].
\end{align*}
Theorem~\ref{QIC_general_exp} follows.




First, we prove Lemma~\ref{PiD_lemma}. Define protocol $\hat{\Pi}_A$ as follows.

\begin{framed}

\textbf{Protocol} $\hat{\Pi}_A$ on inputs $ a , b \in \{0, 1 \}$:

\begin{enumerate}
\item Run Protocol $\widetilde{\Pi}_A^\prime$.
\item If $\widetilde{\Pi}_A^\prime$ returns ``0'', return output ``0''.
\item If $\widetilde{\Pi}_A^\prime$ returns ``1'', Alice and Bob exchange $a$ and $b$ and return $AND(a, b)$ as output.
\item If $\widetilde{\Pi}_A^\prime$ returns ``Inconclusive,'' return output ``Inconclusive,''.
\end{enumerate}

\end{framed}

Notice that building protocol $\Pi (\hat{\Pi}_A)$ as in Definition~\ref{def:recursive} gives the same protocol as $\Pi_A$ running subroutine $\widetilde{\Pi}_A^\prime$.
The proof of Lemma~\ref{QIC_general} applied to $\Pi(\hat{\Pi}_A)$ (and its $2r+3$ messages) yields, for $\Pi_D$ running $\widetilde{\Pi}_A^\prime$ ,
\begin{align*}
\QIC (\Pi_D) & \leq s + \log s + 1 + \frac{n}{ 1 - p} \max \Bigg[\frac{2(2r+3)}{n}, \\
&\hspace{1.5in} \QIC_0 (\hat{\Pi}_A) + 2(2r +3)  \; h \left(\frac{2  \ln n }{s } + \frac{1}{n}\right)\Bigg]
\end{align*}
for all values of $n$ and $s$ satisfying $n\geq 4$ and $8\ln(n)\leq s \leq n$.
The  result follows by noting that 
\begin{align*}
\QIC_0 (\hat{\Pi}_A) \leq \QIC_0 (\widetilde{\Pi}_A^\prime) + 2p_{dark},
\end{align*} 
and that $2p_{dark}$ is non-negative so we can take it out of the maximization to simplify the expression.
The inequality holds since the only extra information leaked by $\hat{\Pi}_A$ after running $\widetilde{\Pi}_A^\prime$ is the exchange of $a$ and $b$. Moreover,  on distribution $\mu_0$ as considered in $\QIC_0$, this can only occur if there has been a dark count on the detector corresponding to the ``1'' output, which occurs with probability at most $p_{dark}$. Using that conditioning on a common variable is taking average (Lemma~\ref{lem:qic_cond_av}) along with a dimension bound (Lemma~\ref{prelim:lem:CQMIfacts}) then limits the extra information leakage to two bits scaled by probability $p_{dark}$.

\subsubsection{Definition of $\QIC$ for lossy $AND$ protocol}
In this section we detail the framework we will use to calculate $\QIC_0(\widetilde{\Pi}_A^\prime)$. A lossy channel 
can be modeled by a beamsplitter with transmissivity $\eta$. We assume that channel loss resides in the communication register during transmission, and after transmission it resides in the receiving party's memory but he does not access it.

Recall the general expression for the information leakage
\begin{align}
\QIC(\widetilde{\Pi}_A^\prime,\mu_0)=\sum_{i=1}^{2r+1} I(X; C_i | Y B_i) + I(Y; C_i | X A_i).
\end{align}

{First we simplify this expression.} 
As with the ideal protocol, 
we first handle the last message.
Under $\mu_0$, this last message is $0$ with probability $1- e^{-\eta_{det}\abs{\alpha_{out}}^2}$ and inconclusive with probability $e^{-\eta_{det}\abs{\alpha_{out}}^2}$. This holds independently of the input $(x, y)$ and of the content of  $A_{r+1}$ and $B_{2r+1}$. Thus, $\QIC_{2r+1} (\widetilde{\Pi}_A, \mu_0) = 0$.
This is the only non-pure message in the protocol.

The first $2r$ messages are pure, so we start from (\ref{eq:purestateinfo}).
We first note that under $\mu_0$, if $X=1$ then $Y=0$, so $H(C_i | A_i, X =1) = 0$ because register $C_i$ is then in a pure state (Lemma~\ref{prelim:lem:CQMIfacts}). Similarly, if $Y=1$ then $X=0$ and $H(C_i | B_i, Y =1) = 0$ because register $C_i$ is then in a pure state.
Further,  on $X=0$, the pure state messages in $C_i$ is the same whether $Y=0$ or $Y=1$, so $H(C_i | A_i, X =0) = 0$ because register $C_i$ is then in a pure state. Similarly, on even $i$'s and for $Y=0$, the pure state messages in $C_i$ is the same whether $X=0$ or $X=1$, so  $H (C_i | B_i, Y=0) = 0$ for even $i$'s. 
This gives

\begin{align}\label{exp_qic_AND}
\QIC(\widetilde{\Pi}_A^\prime,\mu_0)\leq \sum_{i=1, \text{ odd}}^{2r} H(C_i B_i | Y=0)-H(B_i | Y=0).
\end{align}
{In the following section we apply the above bound to our subroutine $AND$ protocol $\widetilde{\Pi}_A^\prime$.}

\subsubsection{Analysis of $\QIC$ for lossy $AND$ protocol}

Here we prove the following bound on the information leakage $\QIC_0(\widetilde{\Pi}_A^\prime) = \max_{\mu_0: \mu_0 (1,1 ) = 0} \QIC(\widetilde{\Pi}_A^\prime, \mu_0) $ of the subroutine $\widetilde{\Pi}_A^\prime$ to the $AND$ protocol ${\Pi}_A$ (and $\hat{\Pi}_A$).
\begin{lemma}
\begin{align*}
\QIC_0 (\widetilde{\Pi}_A^\prime) \leq h(\frac{1}{2} (1 -\prod_{i=1}^r F_i)),
\end{align*}
in which
\begin{align}\label{pi2expfid}
F_i= \exp\left[-\frac{\abs{\alpha_i}^2}{\eta}\left[1-\cos \left(\frac{\pi}{2r}\right)\right] \right]
\end{align}
for all $i=1, 2, \dots, r$.
\end{lemma}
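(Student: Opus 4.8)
The plan is to mirror the proof of the ideal-case bound on $\QIC_0(\widetilde{\Pi}_A)$ given in Section~\ref{QIC_subroutine}, now carrying the loss beamsplitters through the analysis. I would start from the simplified expression \eqref{exp_qic_AND}, namely $\QIC(\widetilde{\Pi}_A^\prime,\mu_0)\leq \sum_{i=1,\text{ odd}}^{2r}[H(C_iB_i|Y=0)-H(B_i|Y=0)]$, so that only the $r$ Alice-to-Bob messages (the odd-indexed ones) contribute. As in the ideal case I would make all of Bob's operations isometric, so that Bob retains every state he receives and, crucially, also retains the loss component of each Alice-to-Bob transmission (which by the loss model resides in the receiving party's memory). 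The loss components of the Bob-to-Alice transmissions sit on Alice's side and are identical for $X=0$ and $X=1$, so they never enter these Bob-side terms.

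The first key step is the telescoping. On $Y=0$ Bob discards and re-injects a fresh, $X$-independent coherent state each round, so for odd $i>1$ the purified state held in $B_i$ coincides, up to the loss isometry (which preserves entropy by Lemma~\ref{isometry_entropy}), with the one in $C_{i-2}B_{i-2}$, while the freshly injected states are fixed pure states and contribute no conditional entropy. This yields $H(B_i|Y=0)=H(C_{i-2}B_{i-2}|Y=0)$, and since $B_1$ is trivial the sum collapses to $\QIC_0(\widetilde{\Pi}_A^\prime)\leq H(C_{2r-1}B_{2r-1}|Y=0)$.

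Next I would identify the structure of $C_{2r-1}B_{2r-1}$ conditioned on $(X,Y=0)$. Because a beamsplitter maps a coherent state into a product of coherent states, $\ket{\gamma}\ket{0}\mapsto\ket{\sqrt{\eta}\gamma}\ket{\sqrt{1-\eta}\gamma}$, the full content (transmitted plus lost) of each Alice-to-Bob message is a pure product state, and Bob holds both pieces; across the $r$ rounds these live on independent modes thanks to the fresh re-injections (Section~\ref{sec:protocol_no_ent}). Hence $C_{2r-1}B_{2r-1}$ is pure conditioned on $(X,Y=0)$ and factorizes as $D_1\otimes\cdots\otimes D_r$, where $D_i$ records the $i$th Alice-to-Bob message: on $X=0$ it is $\ket{\beta_i,0}$ and on $X=1$ it is $R_{\theta_r}\ket{\beta_i,0}$, with $\beta_1=\alpha_0$ and $\beta_i=\sqrt{\eta}\,\alpha_{i-1}$ for $i\geq 2$ (the amplitude Alice holds after receiving Bob's attenuated round-$(i-1)$ state). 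Using the coherent-state overlap $\exp(-\tfrac12\abs{\gamma-\delta}^2)$ together with $(1-\cos\theta_r)^2+\sin^2\theta_r=2(1-\cos\theta_r)$, the overlap of $D_i$ between $X=0$ and $X=1$ is $\exp(-\abs{\beta_i}^2(1-\cos(\pi/2r)))$; a direct check that $\abs{\beta_i}^2=\abs{\alpha_i}^2/\eta$ (using $\alpha_i=\alpha_{out}/\eta^{r-i+1/2}$ and $\alpha_0=\alpha_{out}/\eta^r$) shows this equals $F_i$.

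Finally, since the two conditional pure states are tensor products, their total overlap is $\prod_{i=1}^r F_i$, and applying the two-state entropy bound of Lemma~\ref{ranktwofid}, whose value $h(\tfrac12(1-F))$ is uniform over the prior on $X$, gives $H(C_{2r-1}B_{2r-1}|Y=0)\leq h(\tfrac12(1-\prod_{i=1}^r F_i))$, which is the claim. I expect the main obstacle to be the bookkeeping in the telescoping and purity steps: one must verify carefully that each Alice-to-Bob transmission (both its transmitted and its lost part) is fully retained on Bob's side while every $X$-independent or Alice-side register drops out, so that the conditional state is genuinely a product of $r$ pure coherent-state pairs and Lemma~\ref{ranktwofid} applies with overlap exactly $\prod_{i=1}^r F_i$.
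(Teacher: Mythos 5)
Your proposal is correct and takes essentially the same approach as the paper's proof: the same starting point \eqref{exp_qic_AND}, the same telescoping of $\sum_{i\,\mathrm{odd}}\left[H(C_iB_i|Y=0)-H(B_i|Y=0)\right]$ down to $H(C_{2r-1}B_{2r-1}|Y=0)$ (with Alice-side and $X$-independent loss registers dropped), the same identification of the two conditional states as products of $r$ pure coherent-state pairs (transmitted plus lost components of each Alice-to-Bob message), and the same final application of the two-pure-state entropy bound with overlap $\prod_{i=1}^r F_i$. Your amplitude bookkeeping ($\beta_1=\alpha_0$, $\beta_i=\sqrt{\eta}\,\alpha_{i-1}$, hence $\abs{\beta_i}^2=\abs{\alpha_i}^2/\eta$) reproduces exactly the states the paper writes out explicitly.
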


%
%

The result stated in the preceding section follows by simplying the product of $F_i$'s, the corresponding sum of $|\alpha_i|^2$ being a geometic series.

From expression \eqref{exp_qic_AND} it is clear that any content of $B_i$ which produces an uncorrellated pure state when conditioned on $Y=0$ can be safely discarded without changing the information cost. Therefore, we assume $B_i$ contains only elements which do not produce an uncorellated pure state when conditioned on $Y=0$.
Under this assumption, the state of the registers $C_i B_i$ for odd $i$ in the $(0,0), (0,1), (1,0)$ cases are as follows:\\
\begin{samepage}
\textbf{State of registers $C_i B_i$ for odd $i$ on different inputs for protocol $\widetilde{\Pi}_A$:}

\textbf{On (0,0):}\\
\indent $i$ odd $(A \rightarrow B)$:
\begin{align}
\left(\ket{\frac{1}{\eta^{r-i/2}}\alpha,0}\ket{\frac{\sqrt{1-\eta}}{\eta^{r-(i-1)/2}}\alpha,0}\right)
\bigotimes_{l=1,\mathrm{odd}}^{i-2}\left(\ket{\frac{1}{\eta^{r-l/2}}\alpha,0}\ket{\frac{\sqrt{1-\eta}}{\eta^{r-(l-1)/2}}\alpha,0}\right)\nonumber
\end{align}
\textbf{On (0,1):} Identical to (0,0).\\
\textbf{On (1,0):}\\
\indent $i$ odd $(A \rightarrow B)$:
\begin{align}
&\left(\ket{\frac{1}{\eta^{r-i/2}}\alpha\cos 2\theta,\frac{1}{\eta^{r-i/2}}\alpha\sin 2\theta}\ket{\frac{\sqrt{1-\eta}}{\eta^{r-(i-1)/2}}\alpha\cos2\theta,\frac{\sqrt{1-\eta}}{\eta^{r-(i-1)/2}}\alpha\sin2\theta}\right)
\nonumber\\
&\bigotimes_{k=1,\mathrm{odd}}^{i-2}\left(\ket{\frac{1}{\eta^{r-k/2}}\alpha\cos 2\theta,\frac{1}{\eta^{r-k/2}}\alpha\sin 2\theta}\ket{\frac{\sqrt{1-\eta}}{\eta^{r-(k-1)/2}}\alpha\cos2\theta,\frac{\sqrt{1-\eta}}{\eta^{r-(k-1)/2}}\alpha\sin2\theta}\right)\nonumber
\end{align}
\end{samepage}
Where the first two modes are contained in register $C_i$ and the rest are contained in register $B_i$. As before, the content of registers $C_{i-2} B_{i-2}$ is identical to that of register $B_i$. By nearly identical arguments to that of Appendix~\ref{QIC_subroutine},
\begin{align}
\QIC_0(\widetilde{\Pi}_A(\eta))  &\leq  h (C_{2r-1} B_{2r-1} | Y = 0 )\nonumber\\
						&\leq h \left(\frac{1}{2} (1 - \prod_{i=1}^r F_i)\right),
\end{align}
where $\prod_{i=1}^r F_i$ is the overlap of the two possible states of $C_{2r-1} B_{2r-1}$ when $Y=0$ and 
is given by Eqn.~\eqref{pi2expfid}.

\subsection{Error of $\Pi_D$ with experimental errors}

We analyse the error in $\Pi_D$ run with experimental errors. First, whenever an index $i$ is output, there is never any error since such an output can only come after a classical verification that $x_i = y_i =1$. Hence, the only potential error arise if $\Pi_D$ outputs $\emptyset$, and we show that the protocol errs with probability at most $p_{dark}$ on that output. 

Assume that  there is one or more intersection, so that $\Pi_D$ should not output $\emptyset$. What is the probability to still have that output? In particular, considering the first such intersection, there must have been a dark count when $\Pi_A$ was run for that date, together with other independent events. Hence, the probability to output $\emptyset$ is at most $p_{dark}$.

\subsection{Information leakage in the $\eta = 1 - 1/r$ regime}\label{sec:modif-one-over-r}

We now prove an upper bound of $n/r$ up to log terms on quantum information leakage for number of round $r$ in $\widetilde{\Pi}_A^\prime$ satisfying $r \leq n^{1/3}$ as $n$ grows, and with $\eta = 1-1/r$. First, note that $\eta^r \rightarrow e^{-1}$ as $r$ increases, so that $h (1/2 (1-F(r, \alpha_{out}, \eta)))$ scales as $1/r$ up to log terms, as in the ideal case. We now handle the $\frac{2n}{1-p} p_{dark}$ term. For this, we modify $\Pi_A$ as follows.

\begin{framed}

Protocol $\Pi_A^{\prime \prime \prime}$ on inputs $a, b \in \{ 0, 1 \}$:

\begin{enumerate}
\item Run Protocol $\widetilde{\Pi}_A$ $O(\log n)$ times.
\item If get majority of ``$1$'', exchange $a$, $b$ and output $AND(a, b)$.
\item Else, output ``$0$''.
\end{enumerate}

\end{framed}

Choose the constant in $O(\log n)$ according to experimental parameters and such that the probability to get output ``$1$'' when $AND(a,b) = 0$ is less than $1/n^2$. It follows from subadditivity (Lemma~\ref{lem:prop_QIC}), from the dimension bound on two-bit inputs (Lemma~\ref{prelim:lem:CQMIfacts}) and the fact that conditioning on common variable is taking average (Lemma~\ref{lem:qic_cond_av}) that

\begin{align}
QIC_0 (\Pi_A^{\prime \prime \prime}) \leq O(\log n) \; QIC_0 (\widetilde{\Pi}_A^\prime) + 2/n^2 .
\end{align}

Then $QIC(\Pi_D) \leq n/r$ up to logarithmic terms. This follows from an analysis similar to Lemma~\ref{QIC_general} (avoiding the “rerun an inconclusive protocol” part, since here $\Pi_A^{\prime \prime \prime}$ is explicitly run $O(\log n)$ times).

\section{Review of coherent state mapping}\label{mappingrev}

Arrazola and L\"utkenhaus recently proposed a mapping from any quantum protocol which uses pure quantum states, unitary operations, and projective measurements to a corresponding coherent state protocol \cite{PhysRevA.95.032337}. In this section we review the general mapping, and in the next sections we apply it to two existing protocols to develop new coherent state appointment scheduling protocols.

The general coherent state mapping proceeds as follows. Define a function {$f_{\alpha}: \mathds{C}^n \rightarrow L^2(\mathds{R})^{\otimes n}$} as
\begin{align}\label{cohfunc}
f_{\alpha}\left( \sum_{i=1}^n \lambda_i \ket{i}\right) = \bigotimes_{i=1}^n \ket{\lambda_i \alpha}_i.
\end{align}
Given a vector $\ket{\psi}\in \mathds{C}^n$, we use the shorthand $\ket{\psi_\alpha}:=f_{\alpha}(\ket{\psi})$. From \eqref{cohfunc} it follows that $\ket{\psi_\alpha}$ will have total mean photon number $\abs{\alpha}^2$ for all unit vectors $\ket{\psi}\in \mathds{C}^n$. For any unitary $U\in \U(\mathds{C}^n)$, the unitary {$V\in \U(L^2(\mathds{R}))^{\otimes n})$} which transforms the modes as
{
\begin{align}\label{unmode}
a_j^\dg \rightarrow \sum_{i=1}^n U_{j, i} a_i^{\dg}
\end{align}
}
can be shown to satisfy $V\ket{\psi_\alpha}=f_{\alpha}(U\ket{\psi})$ for all $\ket{\psi}\in \mathds{C}^n$. Thus, $V$ does not change the total mean photon number $\abs{\alpha}^2$.


We note that Arrazola and L{\"u}tkenhaus also showed that if single photon detection is performed on each mode of $\ket{\psi_\alpha}$ then the probability distribution of the number of photons measured in each mode is equal to that obtained from repeated canonical basis measurements of $\ket{\psi}$, where the number of repetitions is drawn from a Poisson distribution with mean $\abs{\alpha}^2$.
Furthermore, Arrazola and L\"utkenhaus showed that due to the fact that the states always have total mean photon number $\abs{\alpha}^2$, they mostly reside in a O(log n) qubit ``typical subspace'' which includes the span of all Fock states with total photon number lying in a neighbourhood of $\abs{\alpha}^2$. In Appendix~\ref{qicohg} we adapt this result, and show that the fixed total mean photon number also implies that the mapping has low information leakage in the interactive communication setting.

\section{A second coherent state bitwise-$AND$ appointment scheduling protocol}\label{app2}
In this section we develop a coherent state appointment scheduling protocol which uses a bitwise-$AND$ subroutine protocol $\widetilde{\Pi}_A^{\prime \prime}$ that is the coherent state mapping of a protocol developed by the authors of~\cite{JainRS03}.  We then analyze its information leakage, and find that it is greater than $\widetilde{\Pi}_A$.

\subsection{Description of Protocol}

The only difference between our bitwise-$AND$ protocol developed in the main text and the protocol we describe here is in the subroutine $\widetilde{\Pi}_A$. Every other step in the protocol is identical to that of our original bitwise-$AND$ protocol, in both the ideal and experimental settings. We focus here only on the ideal setting. To describe our protocol $\widetilde{\Pi}_A^{\prime \prime}$, we first review the qubit $AND$ protocol due to Jain, Radhakrishnan and Sen (this protocol was recently described in \cite{braverman2015near, chailloux2017information}, and is reviewed below).

On inputs $x,y\in \{0,1\}$ given to Alice and Bob respectively, the following protocol computes $AND(x,y)$ in $r$ rounds for any even positive integer $r$.

First, let $\theta=\frac{\pi}{4r}$ and $\ket{v}=\cos(\theta) \ket{0}+\sin(\theta) \ket{1}$. Let $U_v$ be the unitary operator reflecting about the vector $\ket{v}$, i.e. $U_v \ket{0}=\cos(2\theta)\ket{0}+\sin(2\theta) \ket{1}$ and $U_v \ket{1}=\sin(2\theta)\ket{0}-\cos(2\theta) \ket{1}$. Let $U_0$ be the operator reflecting about $\ket{0}$, i.e. $U_0 \ket{0}=\ket{0}$ and $U_0 \ket{1}=-\ket{1}$.

The unambiguous qubit $AND$ protocol of the authors of \cite{JainRS03} proceeds as follows.

\begin{samepage}
\begin{framed}
\textbf{Qubit $AND$ protocol of Jain, Radhakrishnan and Sen}

First, Alice prepares a qubit-register $C$ initialized to the state $\ket{0}$. Then, on each round, Alice and Bob do the following:
\begin{enumerate}
\item On $x=0$ ($x=1$), Alice performs the identity map ($U_v$ map) on the register $C$ and sends it to Bob.
\item On $y=0$ ($y=1$), Bob performs the identity map ($U_0$ map) on the register $C$ and sends it to Alice.
\end{enumerate}
After $r$ rounds the state of register $C$ will be $\ket{0}$ $\left(-\ket{1}\right)$ if $AND(x,y)=0$ $(1)$. Alice measures $C$ in the standard basis to determine the result, which she communicates to Bob. Clearly this is an unambiguous two-bit $AND$ protocol with zero probability of an inconclusive outcome.
\end{framed}
\end{samepage}

To construct $\widetilde{\Pi}_A^{\prime \prime}$, we apply the coherent state mapping described in Appendix~\ref{mappingrev} to the above qubit protocol. In contrast to the qubit protocol, $\widetilde{\Pi}_A^{\prime \prime}$ has some probability $p$ of an inconclusive outcome, just like the protocol $\widetilde{\Pi}_A$.

Define a unitary $V_0$ as
\begin{align}
V_0 \ket{\alpha}\ket{\beta}=\ket{\alpha}\ket{-\beta},
\end{align}
which acts as a phase flip on the second mode. Clearly, $V_0 f_\alpha (\ket{\psi})=f_\alpha(U_0 \ket{\psi})$ for every state $\ket{\psi}$ used in the qubit protocol.

Define a unitary $R_\theta$ as
\begin{align}
R_\theta \ket{\alpha}\ket{\beta}=\ket{\cos(\theta) \alpha-\sin(\theta)\beta}\ket{\sin(\theta) \alpha+\cos(\theta)\beta}
\end{align}
which acts as a beamsplitter specified by angle $\theta$. Define a unitary {$V_v=R_\theta V_0 R_\theta^\dg$}, where ${R_\theta^\dg=R_{-\theta}}$. It can be shown that $V_v f_\alpha (\ket{\psi})=f_\alpha(U_v \ket{\psi})$ for every state $\ket{\psi}$ used in the qubit protocol.

The protocol $\widetilde{\Pi}_A^{\prime \prime}$ then proceeds as follows.
\begin{samepage}
\begin{framed}
\textbf{Coherent state mapping $\widetilde{\Pi}_A^{\prime \prime}$ of qubit $AND$ protocol}

First, Alice prepares a two-mode register $C$ in state $\ket{\alpha,0}$, for some $\alpha > 0$. On each of the $r$ rounds, Alice and Bob do the following:
\begin{enumerate}
\item On $x=0$ ($x=1$), Alice performs the identity map ($V_v$ map) on the register $C$ and sends it to Bob.
\item On $y=0$ ($y=1$), Bob performs the identity map ($V_0$ map) on the register $C$ and sends it to Alice.
\end{enumerate}
After $r$ rounds, Alice measures each mode of $C$ with single photon threshold detectors and communicates the result to Bob.
\end{framed}
\end{samepage}

In ideal implementations, after all unitaries are performed Alice ends up with $\ket{\alpha, 0}$ on inputs $(0, 0)$, $(0, 1)$ and $(1, 0)$, and with $\ket{0, - \alpha}$ on input $(1, 1)$.  Thus, she might detect a photon in the first mode only if the output to $AND$ is $0$ and she might detect a photon in the second mode only if the output to $AND$  is $1$. If she does not detect any photon, she tells Bob that the run was inconclusive. Note that Alice obtains a click with probability $1 - e^{-\abs{\alpha}^2}$ for any input. Thus, this protocol never outputs a wrong answer, and has some uniform probability $p = e^{- \abs{\alpha}^2}$ of outcome ``Inconclusive''. For clarity, we explicitly write down how the protocol evolves for different inputs:

\begin{samepage}
\textbf{Evolution of $\widetilde{\Pi}_A^{\prime \prime}$ for different inputs:}

\textbf{On (0, 0):}
\begin{tabular}{l l l l}
$\ket{\alpha, 0}$ & $\rightarrow_A \ket{\alpha, 0}$ & $\rightarrow_B \ket{\alpha, 0}$ & $\rightarrow_A \cdots $ 
\end{tabular}

\textbf{On (0, 1):}
\begin{tabular}{l l l l}
$\ket{\alpha, 0}$ & $\rightarrow_A \ket{\alpha, 0}$ & $\rightarrow_B  \ket{\alpha, 0}$ & $\rightarrow_A \cdots $ 
\end{tabular}

\textbf{On (1, 0):} 
\begin{tabular}{l l l}
		$\ket{\alpha, 0}$ & $\rightarrow_A \ket{\cos (2 \theta) \alpha, \sin (2 \theta ) \alpha}$
		&$\rightarrow_B  \ket{\cos (2 \theta) \alpha, \sin (2 \theta ) \alpha}$ \\
		 &$\rightarrow_A \ket{\alpha, 0} $
		 &$\rightarrow_B \ket{\alpha, 0} $ \\
& $\rightarrow_A \ket{\cos (2 \theta) \alpha, \sin (2 \theta ) \alpha}$
		&$\rightarrow_B  \ket{\cos (2 \theta) \alpha, \sin (2 \theta ) \alpha}$ \\
		 &$\rightarrow_A \ket{\alpha, 0} $
		 &$\rightarrow_B \ket{\alpha, 0} $ \\
& \vdots & \vdots
\end{tabular}

\textbf{On (1, 1):}
\begin{tabular}{l l l}
$\ket{\alpha, 0}$ & $\rightarrow_A \ket{\cos (2 \theta) \alpha, \sin (2 \theta ) \alpha}$ &\hspace{-.3in} $\rightarrow_B \ket{\cos (2 \theta) \alpha, - \sin (2\theta ) \alpha}$ \\
& $\rightarrow_A \ket{\cos (4 \theta) \alpha, \sin (4 \theta ) \alpha}$ & $\hspace{-.25in}\rightarrow_B \ket{\cos (4 \theta) \alpha, - \sin (4 \theta ) \alpha}$ \\
& \vdots & \vdots \\
& $\rightarrow_A \ket{\cos (2 r \theta) \alpha, \sin (2 r \theta ) \alpha}$ & $\hspace{-.15in}\rightarrow_B \ket{\cos (\frac{\pi}{2}) \alpha , - \sin (\frac{\pi}{2}) \alpha}$\\
& =$\ket{0,-\alpha} $
\end{tabular}
\end{samepage}
On (0,0) and (0,1) Alice and Bob's manipulations leave the state unchanged. On (1,0) Alice performs $V_v$ and Bob does nothing. Since $V_v$ is its own inverse, the state oscillates between two forms in this case. On (1,1) Alice and Bob's manipulations bring the state to $\ket{0,-\alpha}$ after $r$ rounds.

Just like in protocol $\widetilde{\Pi}_A$, after $r$ rounds Alice measures each mode of register $C$ with single photon threshold detectors and 
communicates the result to Bob.

As noted previously, the remainder of this new appointment scheduling protocol proceeds identically to our previous appointment scheduling protocol with $\widetilde{\Pi}_A$ replaced by $\widetilde{\Pi}_A^{\prime \prime}$.

\subsection{Information leakage analysis}

Here we analyze the information leakage of protocol $\widetilde{\Pi}_A^{\prime \prime}$. 
By the previous analysis, we need only bound $\QIC_0(\widetilde{\Pi}_A^{\prime \prime})$, as this implies a bound on the information leakage of the full appointment scheduling protocol. 
We prove the following:

\begin{lemma}
\begin{align*}
\QIC_0(\widetilde{\Pi}_A^{\prime \prime}) \leq r \left[h\left(\frac{1}{2} (1 - F'(r, \alpha))\right)\right],
\end{align*}
in which
\begin{align}\label{pi1expfid}
F'(r, \alpha)= \exp\left[-\abs{\alpha}^2\left[1-\cos \left(\frac{\pi}{2r}\right)\right] \right].
\end{align}
\end{lemma}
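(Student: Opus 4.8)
The plan is to follow the same template used to bound $\QIC_0(\widetilde{\Pi}_A)$ in Appendix~\ref{QIC_subroutine}, exploiting the fact that $\widetilde{\Pi}_A^{\prime \prime}$ again consists of pure-state messages with no pre-shared entanglement. First I would dispose of the final (measurement) message exactly as before: under any $\mu_0$ the last outcome is ``$0$'' with probability $1-e^{-|\alpha|^2}$ and inconclusive otherwise, independently of the inputs, so $\QIC_{2r+1}(\widetilde{\Pi}_A^{\prime\prime},\mu_0)=0$. For the first $2r$ (pure) messages I would start from \eqref{eq:purestateinfo}. Since Alice merely applies $V_v$ or the identity and keeps no ancilla, the register $A_i$ is trivial, and the $\bE_x$ contribution vanishes: on $X=0$ the message is the deterministic pure state $\ket{\alpha,0}$ for every $Y$, while on $X=1$ we have $Y=0$ forced under $\mu_0$, again a deterministic pure state (Lemma~\ref{prelim:lem:CQMIfacts}). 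The same reasoning kills the $Y=1$ part of the $\bE_y$ term.

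The decisive structural point, and the place where the analysis departs from that of $\widetilde{\Pi}_A$, is Bob's behaviour on $y=0$: here Bob applies the identity rather than refreshing the register, so his isometric memory $B_i$ stays trivial throughout and there is no accumulation of copies to telescope. Consequently each surviving term collapses to
\begin{align*}
\QIC_i(\widetilde{\Pi}_A^{\prime\prime},\mu_0)=\Pr[Y=0]\,H(C_i\mid Y=0)\leq H(C_i\mid Y=0),
\end{align*}
and by Lemma~\ref{ranktwofid} this is at most $h\!\left(\tfrac12(1-F_i)\right)$, where $F_i$ is the overlap between the two pure states carried by $C_i$ on $Y=0$ for $X=0$ versus $X=1$.

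Next I would read off the $F_i$ from the $(1,0)$ evolution. Because $V_v$ is its own inverse, on $Y=0$ the content of $C_i$ cycles through the rotated state $\ket{\cos(2\theta)\alpha,\sin(2\theta)\alpha}$ for $i\equiv 1,2 \pmod 4$ and back to $\ket{\alpha,0}$ for $i\equiv 0,3 \pmod 4$; crucially, since Bob does not refresh, the returned (even-$i$) messages still carry this $X$-dependence, in contrast to $\widetilde{\Pi}_A$ where they were input-independent. Thus $F_i=1$ (contributing $0$) in the latter case, and in the former $F_i=F'(r,\alpha)$, where the single-round overlap is computed from the coherent-state identity $|\bk{\gamma}{\delta}|=\exp(-\tfrac12|\gamma-\delta|^2)$ applied to $\ket{\alpha,0}$ and $\ket{\cos(2\theta)\alpha,\sin(2\theta)\alpha}$ with $2\theta=\tfrac{\pi}{2r}$, giving $F'(r,\alpha)=\exp[-|\alpha|^2(1-\cos(\tfrac{\pi}{2r}))]$. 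Since $r$ is even, exactly half of the indices $i\in\{1,\dots,2r\}$ lie in the residue classes $1,2 \pmod 4$, i.e. exactly $r$ of them, and summing yields $\QIC_0(\widetilde{\Pi}_A^{\prime\prime})\leq r\,h(\tfrac12(1-F'(r,\alpha)))$.

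I expect the main obstacle to be conceptual rather than computational: one must correctly charge the return messages. The temptation is to reuse the telescoping of Appendix~\ref{QIC_subroutine}, but that identity relied on Bob's refresh making the even messages input-independent and his memory grow by one copy per round. Here neither holds, so the $r$ ``rotated'' messages must each be accounted for separately, which is exactly what produces the extra factor of $r$ relative to the $h(\tfrac12(1-F(r,\alpha)))$ bound for $\widetilde{\Pi}_A$ and explains why $\widetilde{\Pi}_A$ leaks less.
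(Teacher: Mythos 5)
Your proposal is correct and follows essentially the same route as the paper: kill the final measurement message, reduce via \eqref{eq:purestateinfo} (with trivial $A_i$, $B_i$) to the $Y=0$ terms, observe that only the messages in odd rounds carry $X$-dependence (your residue-classes-mod-4 count of $r$ contributing messages is the paper's ``one for Alice and one for Bob in each of the $r/2$ odd rounds''), and bound each by $h\left(\tfrac{1}{2}(1-F'(r,\alpha))\right)$ via the two-pure-state overlap. Your added remark on why the telescoping argument of Appendix~\ref{QIC_subroutine} fails here (no refresh by Bob, hence no input-independent return messages) correctly identifies the source of the extra factor of $r$, matching the paper's comparison via \eqref{entropyinequality}.
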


This result implies the following bound on $\QIC \left(\Pi_D \right)$.
\begin{corollary}

The protocol $\Pi_D$ constructed from protocol $\widetilde{\Pi}_A^{\prime \prime}$ satisfies
\begin{align*}
\QIC \left(\Pi_D \right) & \leq s + \log s + 1 \\
	&\quad \quad + \frac{n}{ 1 - \exp(-|\alpha|^2)} \max \Bigg[\frac{2(2r+1)}{n}, \nonumber\\
 & \quad \quad \quad \quad r \; h\left(\frac{1}{2} (1 - F'(r, \alpha))\right)
+2(2r+1) \; h\left(\frac{2  \ln n }{s } + \frac{1}{n}\right)\Bigg]
\end{align*}
in which
\begin{align}\label{pi1expfid}
F'(r, \alpha)= \exp\left[-\abs{\alpha}^2\left[1-\cos \left(\frac{\pi}{2r}\right)\right] \right].
\end{align}
\end{corollary}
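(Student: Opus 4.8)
The plan is to obtain the corollary immediately from the lemma stated just above it together with the generic reduction in Lemma~\ref{QIC_general}. Protocol $\widetilde{\Pi}_A^{\prime\prime}$ has a uniform inconclusive probability $p=\exp(-\abs{\alpha}^2)$ on every input (Alice registers a click with probability $1-e^{-\abs{\alpha}^2}$ regardless of $(x,y)$, as recorded in its description), so once the lemma supplies $\QIC_0(\widetilde{\Pi}_A^{\prime\prime})\leq r\,h\!\left(\tfrac12(1-F'(r,\alpha))\right)$, substituting this bound and $p$ into Lemma~\ref{QIC_general} produces the stated inequality verbatim, including the $\tfrac{n}{1-\exp(-\abs{\alpha}^2)}$ prefactor and the $\max$ with $\tfrac{2(2r+1)}{n}$. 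Thus everything reduces to proving the lemma, i.e.\ bounding $\QIC_0(\widetilde{\Pi}_A^{\prime\prime})=\max_{\mu_0:\mu_0(1,1)=0}\QIC(\widetilde{\Pi}_A^{\prime\prime},\mu_0)$.

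For the lemma I would follow the template of Appendix~\ref{QIC_subroutine}, using one structural simplification specific to $\widetilde{\Pi}_A^{\prime\prime}$: in contrast to $\widetilde{\Pi}_A$, where Bob discards and re-prepares and must therefore accumulate the discarded modes in an isometric memory, here both parties apply only the unitaries $V_v$, $V_0$ and the identity to register $C$. Consequently no side registers arise, $A_i$ and $B_i$ are trivial, and conditioned on a fixed input the whole state sits in $C$ as a pure state. Starting from the pure-state expression \eqref{eq:purestateinfo} with trivial $A_i,B_i$, I would argue under $\mu_0$ that the Alice term vanishes — on $X=0$ the message is $\ket{\alpha,0}$ independently of $Y$, and on $X=1$ the constraint $\mu_0(1,1)=0$ forces $Y=0$, so $C_i$ is deterministic and pure (Lemma~\ref{prelim:lem:CQMIfacts}) — while in the Bob term only $Y=0$ survives, since $Y=1$ likewise forces $X=0$. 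The last (classical, input-independent) message contributes $0$, leaving $\QIC(\widetilde{\Pi}_A^{\prime\prime},\mu_0)\leq\sum_{i=1}^{2r}\Pr[Y=0]\,H(C_i\mid Y=0)$.

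The crux is then to evaluate $H(C_i\mid Y=0)$, the entropy of the rank-two ensemble formed by the $(0,0)$- and $(1,0)$-states of $C_i$ with their conditional weights. Because $V_v$ is an involution ($U_v$ being a reflection), on input $(1,0)$ the state of $C_i$ oscillates: it equals $\ket{\cos(2\theta)\alpha,\sin(2\theta)\alpha}$ during the two messages of each odd round and returns to $\ket{\alpha,0}$ during the two messages of each even round, whereas on $(0,0)$ it is always $\ket{\alpha,0}$. For the ``inactive'' (even-round) messages the two states coincide and $H(C_i\mid Y=0)=0$; for the ``active'' (odd-round) messages the overlap is that of the coherent states $\ket{\alpha,0}$ and $\ket{\cos(2\theta)\alpha,\sin(2\theta)\alpha}$, i.e.\ $\exp(-\tfrac12\abs{\gamma-\delta}^2)$ with $\gamma=(\alpha,0)$ and $\delta=(\cos(2\theta)\alpha,\sin(2\theta)\alpha)$, which equals $\exp\!\left(-\abs{\alpha}^2(1-\cos 2\theta)\right)=F'(r,\alpha)$ since $2\theta=\pi/2r$. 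Applying the rank-two bound of Lemma~\ref{ranktwofid} gives $H(C_i\mid Y=0)\leq h\!\left(\tfrac12(1-F'(r,\alpha))\right)$ per active message, and $\Pr[Y=0]\leq1$. Since among the first $2r$ messages exactly $r$ are active (two in each of the $r/2$ odd rounds, for even $r$), the sum is at most $r\,h\!\left(\tfrac12(1-F'(r,\alpha))\right)$, as claimed.

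I expect the main difficulty to be the careful bookkeeping that makes the vanishing-term and active/inactive arguments rigorous — in particular justifying that no quantum memory need be tracked, so that the cost genuinely splits into $2r$ single-message contributions rather than the single $r$-copy entropy that appears for $\widetilde{\Pi}_A$. This is precisely the reason $\widetilde{\Pi}_A^{\prime\prime}$ leaks more: here $r$ messages each pay $h(\tfrac12(1-F'))$ with $F'$ the single-copy overlap, whereas $\widetilde{\Pi}_A$ pays a single $h(\tfrac12(1-F))$ with $F=(F')^{r}$ the $r$-copy overlap, which is smaller.
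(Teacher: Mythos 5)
Your proposal is correct and takes essentially the same route as the paper: it reduces the corollary to Lemma~\ref{QIC_general} with uniform inconclusive probability $p=e^{-\abs{\alpha}^2}$ together with the bound $\QIC_0(\widetilde{\Pi}_A^{\prime\prime})\leq r\,h\left(\tfrac12(1-F'(r,\alpha))\right)$, and proves that bound exactly as the paper does, starting from \eqref{eq:purestateinfo} with trivial side registers $A_i,B_i$. Your identification of the $r$ ``active'' messages (Alice's and Bob's messages in each of the $r/2$ odd rounds, using that $V_v$ is an involution), the vanishing of all other terms under $\mu_0$, and the single-copy overlap computation $F'(r,\alpha)=\exp\left[-\abs{\alpha}^2\left(1-\cos(\pi/2r)\right)\right]$ bounded via Lemma~\ref{ranktwofid} all match the paper's argument, including the closing comparison with $\widetilde{\Pi}_A$.
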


Note that, by the chain rule and the data processing inequality, 
\begin{align}\label{entropyinequality}
\sum_{i=1}^r \left[h\left(\frac{1}{2} (1 - F_i)\right)\right] \geq h\left(\frac{1}{2} (1 - \prod_{i=1}^r F_i)\right)
\end{align}
for any $F_i \leq 1$, $i =1,\dots, r$, 
so this protocol has greater information leakage than our protocol developed in the main text, under the bounds we have used. 

Extending the analysis to the setting with experimental errors, we have found numerically that the corresponding protocol $\Pi_D$ running subroutine $\Pi_A^{\prime \prime}$ (a variant made robust to experimental errors) beats the classical lower bound by a factor of two under experimental imperfections $\eta=0.995$, $p_{dark}=4\times 10^{-8}$, and $\eta_{det}=0.9$. Note that the required transmissivity is much higher than the transmissivity $\eta=0.99$ needed to obtain a factor of two improvement in our protocol $\Pi_D$ running $\Pi_A^\prime$ in the main text.

Now we prove the lemma. 

We first handle the last message.
Under $\mu_0$, this last message is $0$ with probability $1- e^{-\abs{\alpha}^2}$ and inconclusive with probability $e^{-\abs{\alpha}^2}$. This holds independently of the input $(x, y)$. Thus, $\QIC_{2r+1} (\widetilde{\Pi}_A, \mu_0) = 0$.
This is the only non-pure message in the protocol.

The first $2r$ messages are pure, so we start from (\ref{eq:purestateinfo}), for which we have  trivial $A_i$ and $B_i$ registers here.
We first note that under $\mu_0$, if $X=1$ then $Y=0$, so $H(C_i | X =1) = 0$ because register $C_i$ is then in a pure state (Lemma~\ref{prelim:lem:CQMIfacts}). Similarly, if $Y=1$ then $X=0$ and $H(C_i | Y =1) = 0$ because register $C_i$ is then in a pure state.
Further,  on $X=0$, the pure state messages in $C_i$ is the same whether $Y=0$ or $Y=1$, so $H(C_i | X =0) = 0$ because register $C_i$ is then in a pure state. Similarly, in even rounds and for both Alice's and Bob's messages,  for $Y=0$, the pure state messages in $C_i$ is the same whether $X=0$ or $X=1$, so  $H (C_i | Y=0) = 0$ the corresponding $i$'s. 

The only non-zero terms thus correspond to odd rounds and $Y=0$.
In odd rounds, on $Y=0$, Alice's and Bob's messages are always the same. (In fact, this is the same message that appeared in the information cost under $\mu_0$ for our main protocol for $AND$.) Thus, the non-zero terms are always the same, as in $C_1$, and there are $r$ of them: one for Alice and one for Bob in each of the $r/2$ odd rounds. We get
\begin{align*}
\QIC (\widetilde{\Pi}_A^{\prime \prime }, \mu_0)  & = \sum_{i=1}^{2r} \QIC_i (\widetilde{\Pi}_A^{\prime \prime}, \mu_0)  \\
			& \leq  r \; H(C_1 | Y = 0)   \\
	&\leq r \; h\left(\frac{1}{2} (1 - F'(r, \alpha))\right),
\end{align*}
where $F'(r, \alpha)$ is the fidelity between the two possible states of register $C_{1}$ when $Y=0$, and is given by \eqref{pi1expfid}. This completes the proof.

\section{Coherent state version of distributed Grover search protocol}\label{cohgrov}
In this section we describe the distributed Grover search protocol of \cite{Buhrman:1998:QVC:276698.276713}, based on~\cite{boyer1996tight}, and then proceed to describe our implementation of the protocol's coherent state mapping (defined in Appendix~\ref{mappingrev}), for which we find a practical implementation of the oracle calls. We then proceed to show that, in the ideal setting,  the information leakage of this protocol is $\mathcal{O}(\sqrt{n} \log n)$, just like the original distributed Grover search protocol.

\subsection{Original distributed Grover search protocol}

In the distributed Grover search protocol of \cite{Buhrman:1998:QVC:276698.276713} Alice and Bob receive $x, y \in\{0,1\}^n$. We assume for now that they either have no intersection or intersect in $k$ unknown indices $a_1,\dots a_k$. These works describe a protocol which uses $\mathcal{O}(\sqrt{n} \log n)$ qubits of communication to either find a common intersection or determine with high probability that $x$ and $y$ do not intersect. In this section we review the protocol of \cite{Buhrman:1998:QVC:276698.276713} under the simplifying assumption $k\ll n$. (In order to ensure that this assumption is satisfied with high probability, Alice and Bob could perform a classical subsampling of, say, $\sqrt{n}$ dates as in protocol $\Pi_D$ of the main text before running the protocol we describe here.) We first consider the case in which $k$ is known, and briefly discuss the extension to unknown $k$ at the end of this section.

Now we describe the distributed Grover search protocol, which always outputs $``\emptyset"$ on non-intersecting inputs, and outputs $``\emptyset"$ on intersecting inputs with probability at most $\epsilon$ (with probability $1-\epsilon$ it finds an intersection).
\begin{samepage}
\begin{framed}
\textbf{Distributed Grover search appointment scheduling protocol}\\
First, Alice prepares the state
\begin{align}
\ket{s}=\frac{1}{\sqrt{n}}\sum_{i=1}^n \ket{i}.
\end{align}
Choose iteration number $r=\fl{\pi/(4\theta)}$, for $\theta$ satisfying $\sin^2 \theta=k/n$. Then the following is iterated $r$ times:
\begin{enumerate}
\item Alice and Bob jointly perform the oracle call unitary
\begin{align}
U_A=\mathds{1}-2\sum_{j=1}^k\kb{a_j}{a_j}
\end{align}
using the protocol outlined below.
\item Alice performs the inversion about the mean unitary
\begin{align}
U_S&=2\kb{s}{s}-\mathds{1}.
\end{align}
\end{enumerate}
Then, Alice measures the state in the canonical basis, obtaining some outcome $i \in [n]$, and sends $(i, x_i)$ to Bob. Bob then sends $y_i$ to Alice. If they find that $x_i=y_i=1$, they output this index. Otherwise, they repeat the protocol. If they repeat the protocol $K=\cl{\log(1/\epsilon)/\log(n/k)}$ times without finding an intersection, they output $``\emptyset"$.
\end{framed}
\end{samepage}
The iteration number $r$ is chosen as above because if  $x$ and $y$ intersect in $k$ indices, then the probability that Alice's measurement produces a non-intersecting index $i$ is no greater than $k/n$ under this choice (as discussed further below). The repetition number $K$ is chosen to attain error probability $\epsilon$.

In more details, the protocol evolves as follows. Let
\begin{align}
\ket{t}=\frac{1}{\sqrt{1-(k/n)}}\left( \ket{s}-\frac{1}{\sqrt{n}}\sum_{j=1}^k \ket{a_i}\right)
\end{align}
and
\begin{align}
\ket{\tilde{a}}=\frac{1}{\sqrt{k}}\sum_{j=1}^k \ket{a_j},
\end{align}
then as shown in \cite{boyer1996tight}, after $l$ applications of $U_S U_A$ the state is given by
\begin{align}\label{grovelution}
(U_S U_A)^l \ket{s}&=\sin((2l+1)\theta) \ket{\tilde{a}}+\cos((2l+1)\theta) \ket{t},
\end{align}
for $\theta$ defined as above. In \cite{boyer1996tight} it is shown that for $r=\fl{\pi/(4\theta)}$, the probability $\cos^2((2r+1)\theta)$ that Alice's measurement does not output an intersecting index $i$ satisfies $\cos^2((2r+1)\theta) \leq k/n$.

Now we detail how Alice and Bob jointly perform the oracle call unitary $U_A$.
Let $U_x, U_y \in \U(\mathds{C}^n \otimes \mathds{C}^2)$ act as
\begin{align}
U_x \ket{i}\ket{z}&=\ket{i}\ket{x_i \oplus z} \text{for all $i=1, \dots, n$}\\
U_y \ket{i}\ket{z}&=\ket{i}\ket{y_i \oplus z} \text{for all $i=1, \dots, n$},
\end{align}
(which Alice and Bob can implement, respectively), $W\in \U(\mathds{C}^2 \otimes \mathds{C}^2)$ is the swap operator which acts as
\begin{align}
W\ket{i}\ket{j} =\ket{j}\ket{i} \text{for all $i,j=1,2$},
\end{align}
and $V$ is the control-$U_y$ gate, where $U_y$ acts on the first two systems, and the state of the third system is the control.

\begin{samepage}
\begin{framed}
\textbf{Procedure to implement oracle call unitary $U_A$}

Alice prepares auxilliary qubits $\ket{0}\ket{-}$, so the state of her entire register is $\ket{\psi}\ket{0}\ket{-}$, where $\ket{\psi}\in \mathds{C}^n$ is the resultant state from the previous step in the appointment scheduling protocol.
\begin{enumerate}
\item  Alice applies $(U_x \otimes \mathds{1}_2)$ and sends the entire state to Bob.
\item  Bob applies $(\mathds{1}_n\otimes W)(V)(\mathds{1}_n \otimes W)$ and sends the entire state back to Alice.
\item Alice applies $(U_x \otimes \mathds{1}_2)$, and discards the qubits $\ket{0}\ket{-}$.
\end{enumerate}
\end{framed}
\end{samepage}

It is straightforward to show that
\begin{align}
(U_x \otimes \mathds{1}_2)(\mathds{1}_n\otimes W)(V)(\mathds{1}_n \otimes W)(U_x \otimes \mathds{1}_2) \ket{i} \ket{0} \ket{-}= (U_A \otimes \mathds{1}_2 \otimes \mathds{1}_2) \ket{i} \ket{0}\ket{-}
\end{align}
for all $i \in [n]$, so the above procedure implements $U_A$.

Thus, for each application of $U_A$, Alice and Bob exchange $2(\log(n) +2)$ qubits. For $k \ll n$, $U_A$ must be implemented at most $Kr=\mathcal{O}(\sqrt{n/k})$ times. For each repetition of the protocol, Alice sends Bob her measurement outcome $i$ (which is $\log n$ bits) along with $x_i$ (which is one bit), and Bob sends Alice $y_i$ (which is one bit). Thus, the amount of communication in these stages is upper bounded by $K(\log n+2)=\mathcal{O}(\log n)$ bits. Thus, the protocol uses a total of $\mathcal{O}(\sqrt{n/k}\log(n))$ qubits of communication. By the dimension bound, the information leakage of this protocol is also $\mathcal{O}(\sqrt{n/k}\log(n))$.

Now we consider the case in which $k$ is unknown to either party, but is known to be much less than $n$. The implementation of the unitaries $U_S$ and $U_A$ is independent of $k$, so they can still be applied, but the iteration number $r$ is a function of $k$ (and $n$), and must now be chosen in a different manner. The protocol proposed in \cite{boyer1996tight} uses a randomized algorithm to choose the iteration number $r$, and finds a common intersection (or determines no intersection with high probability) while maintaining the $\mathcal{O}(\sqrt{n}\log(n))$ behaviour.

\subsection{Coherent state distributed Grover search with practical oracle calls}

We proceed to describe the coherent-state mapping of Appendix~\ref{mappingrev} applied to the distributed Grover search protocol. We find a protocol for which the linear optics transformation $V_A$ corresponding to the oracle call $U_A$ uses only local phase shifters and the swapping of two modes. Unfortunately, the linear optics transformation $V_S$ corresponding to the inversion about the mean $U_S$ still requires a global transformation of the state. In Appendix~\ref{qicohg} we prove that this protocol has information leakage $\mathcal{O}(\sqrt{n} \log n)$, a nearly quadratic improvement over the classical information leakage lower bound of $\Omega({n})$ proven in \cite{braverman2013information} and \cite{1611.06650} for the zero-error and nonzero-error cases, respectively.

We again consider only the case in which $x$ and $y$ either have no intersection or intersect in $k$ unknown indices $a_1,\dots a_k$ for $k\ll n$. We suggest that this protocol could be adapted in similar fashion to \cite{boyer1996tight} if this is not the case.

We first describe the coherent state mapping of the distributed Grover search protocol in terms of $V_S$ and $V_A$. Let $\ket{\psi}=\sum_{i=1}^n \lambda_i \ket{i} \in \mathds{C}^n$ be an arbitrary pure state, which will help us describe the action of $V_S$ and $V_A$. The following coherent state mapping of the distributed Grover search protocol always outputs $``\emptyset"$ on non-intersecting inputs, and outputs $``\emptyset"$ on intersecting inputs with probability at most $\epsilon$ (with probability $1-\epsilon$ it finds an intersection).

\begin{samepage}
\begin{framed}
\textbf{Coherent state mapping of distributed Grover search protocol}\\
For some constant $\alpha \in \mathds{C}$ (which can be optimized over), Alice prepares the state
\begin{align}
\bigotimes_{i=1}^n \ket{\alpha/\sqrt{n}}_i
\end{align}
Choose iteration number $r=\fl{\pi/(4\theta)}$, for $\theta$ satisfying $\sin^2 \theta=k/n$. Then the following is repeated $r$ times:
\begin{enumerate}
\item Alice and Bob jointly perform the linear optics transformation $V_A$ corresponding to the oracle call $U_A$, which acts as
\begin{align}\label{VA}
V_A f_{\alpha}(\ket{\psi})=\bigotimes_{i=1}^n \ket{(-1)^{x_i \wedge y_i} \lambda_i \alpha}_i,
\end{align}
using the protocol outlined below.
\item Alice performs the linear optics transformation $V_S$ corresponding to the inversion about the mean $U_S$, which acts as
\begin{align}
V_S f_{\alpha}(\ket{\psi})=\bigotimes_{i=1}^n \ket{(2 v- \lambda_i)\alpha}_i,
\end{align}
for $\nu= (\lambda_1 + \dots + \lambda_n)/n$.
\end{enumerate}
Alice measures each mode with single photon threshold detectors. If no detectors click, she announces this and the parties repeat the protocol. Otherwise, she chooses a random index $i$ for which she received a click, and sends $(i, x_i)$ to Bob. Bob then sends $y_i$ to Alice. If $x_i=y_i=1$, the parties output this index. Otherwise, they repeat the protocol. If they repeat the protocol
\begin{align}
K=\cl{{\log(1/\epsilon)}\Bigg/{\log\left(\frac{1}{1-e^{-\abs{\alpha}^2\frac{k}{n}}+e^{-\abs{\alpha}^2}}\right)}}
\end{align}
times without finding an intersection, they output $``\emptyset"$.
\end{framed}
\end{samepage}


The iteration number $r$ is chosen as above because if  $x$ and $y$ intersect in $k$ indices, then the probability that Alice's measurement produces a non-intersecting index $i$ is no greater than $(1-e^{-\abs{\alpha}^2\frac{k}{n}})$ under this choice (as discussed further below). The repetition number $K$ is chosen to attain error probability $\epsilon$. The extra term $e^{-\abs{\alpha}^2}$ is the probability that no clicks occur.

In more details, the protocol evolves as follows. After $l$ applications of $V_SV_A$, coherent states in intersecting modes will have amplitude $\sin((2l+1)\theta)\frac{\alpha}{\sqrt{k}}$, and coherent states in non-intersecting modes will have amplitude $\cos((2l+1)\theta)\frac{\alpha}{\sqrt{n-k}}$. This follows directly from \eqref{grovelution} and the coherent state mapping. Thus, after $r$ iterations of $V_S V_A$, coherent states in intersecting modes will have mean photon number $\sin^2((2r+1)\theta)\frac{\abs{\alpha}^2}{k}\geq \frac{1-k/n}{k}\abs{\alpha}^2$ and coherent states in non-intersecting modes will have mean photon number $\cos^2((2r+1)\theta)\frac{\abs{\alpha}^2}{n-k}\leq \frac{k}{n(n-k)}\abs{\alpha}^2$. Thus,
at least one of the $n-k$ non-intersecting modes $i$ will click with probability no greater than
\begin{align}\label{pcinv}
1-e^{-\abs{\alpha}^2\frac{k}{n(n-k)}(n-k)}=1-e^{-\abs{\alpha}^2\frac{k}{n}}.
\end{align}
No clicks occur with probability
\begin{align}\label{pctot}
e^{-\abs{\alpha}^2}.
\end{align}
Thus, when $x$ and $y$ intersect in $k$ locations, the probability that Alice sends Bob a non-intersecting index $i$ or that no clicks occur is upper bounded by $1-e^{-\abs{\alpha}^2\frac{k}{n}}+e^{-\abs{\alpha}^2}$, which justifies the above choice of repetition number $K$.

Now we describe Alice and Bob's procedure to implement $V_A$.
\pagebreak
\begin{samepage}
\begin{framed}
\textbf{Procedure to implement linear optics transformation $V_A$ corresponding to oracle call $U_A$}

First, Alice prepares $n$ auxilliary modes initialized to $\ket{0}$, so the state of her entire register is
\begin{align}
\bigotimes_{i=1}^n (\ket{\lambda_i \alpha}\ket{0}),
\end{align}
where $\bigotimes_{i=1}^n \ket{\lambda_i \alpha}$ is the resultant state from the previous step in the coherent state protocol. Then,
\begin{enumerate}
\item For each $i$ in which $x_i=1$, Alice swaps the $i$-th pair of modes $\ket{\lambda_i \alpha}\ket{0}\rightarrow \ket{0}\ket{\lambda_i \alpha}$ (and otherwise applies the identity map), and sends the entire state to Bob.
\item For each $i$ in which $y_i=1$, Bob flips the sign of the second mode corresponding to index $i$ using a phase shifter, and sends the entire state back to Alice.
\item Alice repeats the first step: For each $i$ in which $x_i=1$, she swaps the $i$-th pair of modes $ \ket{0}\ket{\lambda_i \alpha}   \rightarrow \ket{\lambda_i \alpha}\ket{0}$ (and otherwise applies the identity map). Alice then discards the $n$ auxilliary modes.
\end{enumerate}
\end{framed}
\end{samepage}
It is straightforward to show that this procedure implements $V_A$ exactly.


%
%
%
%
%
%
%
%
%

\subsection{Information leakage of coherent state version of distributed Grover search protocol}\label{qicohg}

In this section we bound the information leakage of our coherent state version of the distributed Grover search protocol using the following more general result: the information leakage of any protocol for which, conditional on fixed inputs $x, y$, the two parties exchange pure coherent states in a superposition of $n$ modes with constant total mean photon number $\abs{\alpha}^2$ over $r'$ rounds is $\mathcal{O}(r' \log n)$. We will show that this more general result implies the information leakage of our coherent state distributed Grover search protocol is $\mathcal{O}(\sqrt{n} \log n)$, just as in the original protocol.

For any pure state protocol, the information leakage for round $i$ is given by
\begin{align}
\QIC_i (\Pi, \mu) &= H(C_i | Y B_i) + H(C_i | X A_i)\nonumber\\
&\leq 2 H(C_i), \label{qic_entropy}
\end{align}
where the inequality follows from the data processing inequality. We now bound the quantity \eqref{qic_entropy} when the states exchanged are coherent states in a superposition of $n$ modes with constant total mean photon number $\abs{\alpha}^2$ by projecting the state $\rho_i^{C_i}$ exchanged on round $i$ onto telescoping neighborhoods of the total mean photon number $\abs{\alpha}^2$. Define the following partition of the nonnegative integers into disjoint sets:

\begin{align}
\Gamma_{0}&=\left\{k\in \mathds{Z}^+: \abs{k-\abs{\alpha}^2} \leq \Delta-1 \right\}\label{gamma0}\\
\Gamma_{j}&=\left\{k\in \mathds{Z}^+: j\Delta\leq \abs{k-\abs{\alpha}^2} \leq (j+1)\Delta-1\right\}\label{gamma1}\text{for every positive integer $j$}.\nonumber
\end{align}

For each $j$, let $\Pi_j$ be the projection onto the space of Fock states with total photon number lying in the set $\Gamma_j$. Then the set $\{\Pi_0,\Pi_1,\dots\}$ forms a measurement. Let $E_1$, $E_2$ be identical classical registers containing the measurement outcome. Define an isometry
{
\begin{align}
V=\sum_{j=0}^{\infty} \Pi_j \otimes \ket{j} \otimes \ket{j} \in \U(L^2(\mathds{R}))^{\otimes n},L^2(\mathds{R}))^{\otimes n} \otimes E_1 \otimes E_2)
\end{align}
}
Applying $V$ to $\rho_i^{C_i}$ yields

\begin{align}
H(C_i)_{\rho_i}&= H(C_i E_1 E_2)_{V\rho_i V^{\dagger}}\nonumber\\
&= H(E_1)+H(C_i | E_1)+H(E_2 | E_1 C_i)\nonumber\\
&\leq 2 H(E_1)+H(C_i | E_1)\nonumber\\
&\leq 2 H(E_1)+\sum_{j=0}^\infty \text{Pr}(E_1=j) \log\dim (\Pi_j),\label{cohdim1}
\end{align}

where the first equality follows from isometric invariance of entropy, the second from the chain rule, the first inequality from the data processing inequality and the fact that $H(E_2) = H(E_1)$, and the second inequality from the dimension bound along with the property that conditioning on a classical register is taking the average.

We now bound the quantity \eqref{cohdim1}. We first treat the term $H(E_1)$. 
It can be shown that $H(E_1)$ is no greater than the entropy of the $\abs{\alpha}^2$ Poisson distribution, which is finite and constant in $n$ (in fact, it is well-approximated by $\frac{1}{2} \log( 2\pi e \mumax)$ when $\mumax \gg 1$ \cite{doi:10.1137/1030059}).

Now we treat the second term of \eqref{cohdim1}. We make the choice $\Delta\geq (e^2-1)\abs{\alpha}^2$ because it simplifies the asymptotic analysis. In practice, one can optimize over $\Delta$. Under this choice, using Chernoff bounds,
\begin{align}
\Pr(E_1=j)&\leq e^{-\abs{\alpha}^2}\left(\frac{e\abs{\alpha}^2}{\abs{\alpha}^2+j\Delta}\right)^{\abs{\alpha}^2+j\Delta}\nonumber\\
&\leq e^{-j \Delta}\hspace{1in} \text{for all } j\geq0.
\end{align}
Using the same technique as was used to prove Theorem 1 of \cite{PhysRevA.89.062305} it can also be shown that
\begin{align*}
\log \dim(\Pi_0) &\leq (\abs{\alpha}^2+\Delta-1)\log (\abs{\alpha}^2+\Delta+n-2)+\log(2\Delta-1)\\
\log \dim(\Pi_j) & \leq (\abs{\alpha}^2+(j+1)\Delta-1)\log(\abs{\alpha}^2+(j+1)\Delta+n-2)+\log(2\Delta).
\end{align*}
Using these bounds it is straightforward to show that the second term of \eqref{cohdim1} is $\mathcal{O}(\log n)$. Thus, after $r$ rounds the total information cost is $\mathcal{O}(r\log n)$.

Now we apply this bound to the coherent state version of the Grover search protocol. This is a pure state protocol, and every state has total mean photon number $\abs{\alpha}^2$. This follows from $V_S \ket{\psi_\alpha} = f_\alpha(U_S \ket{\psi})$ and $V_A \ket{\psi_\alpha} = f_\alpha(U_A \ket{\psi})$ for every state $\ket{\psi}$ used in the original protocol, and that Alice and Bob's manipulations of the state to jointly perform $V_A$ do not change the total mean photon number. Each state communicated between Alice and Bob is a tensor product of $n$ coherent states. For $K$ repetitions, by straightforward application of Lemma~\ref{QICdiscard} ($\QIC$: increasing under discarding of side information), the fact that this protocol uses $Kr=\mathcal{O}(\sqrt{n/k})$ rounds of quantum communication, and the above information cost bound, the information cost of this stage is $\mathcal{O}(\sqrt{n/k}\log n)$.

For each repetition of the protocol, Alice sends Bob her measurement outcome $i$ (which is $\log n$ bits) along with $x_i$ (which is one bit), and Bob sends Alice $y_i$ (which is one bit). Or, if Alice received no clicks she uses one bit to tell Bob. Thus, the amount of communication in these stages is upper bounded by $K(\log n+2+1)=\mathcal{O}(\log n)$ bits, which also upper bounds the information leakage of these stages by the dimension bound. Thus, in total, this protocol has information leakage $\mathcal{O}(\sqrt{n/k}\log(n))$.

\subsection{Limiting the interaction}

We now wish to limit the interaction in the protocols. We limit how many modes must interact together to $r^2$, and we limit the number of rounds of interaction these modes undergo to $O(r)$.

We show how to adapt the protocol of the previous section, call it $\Pi_S$, achieving $O(\sqrt{n} \log n)$ leakage, to a protocol achieving $\frac{n}{r} \log r$ leakage, up to logarithmic terms, for $r \leq \sqrt{n}$. This protocol only requires interfering $r^2$ modes at once, and these modes are only exchanged for $r$ rounds.

\begin{framed}
\textbf{Protocol} $\hat{\Pi}_S$ on inputs $x, y \in \{0, 1 \}^n$:

\begin{itemize}
\item Divide inputs into $\frac{n}{r^2}$ blocks of size $r^2$
\item Run protocol $\Pi_S$ on each block
\item For each block outputting $i \in [n]$, exchange $x_i$ and $y_i$
\item If there exists at least one such pair $x_i$, $y_i$ such that $AND (x_i, y_i) = 1$, output smallest such $i$.
\item Else, output $\emptyset$.
\end{itemize}
\end{framed}

First, note that the error is at most the same as in $\Pi_S$ run on instances of size $r^2$. Second, by discarding quantum side-information (Lemma~\ref{QICdiscard}), information leakage is at most $\frac{n}{r^2}$ times that of $\Pi_S$ run on instances of size $r^2$, which is $\sqrt{r^2} \log r = r \log r$. Hence, the total leakage is $\frac{n}{r} \log r$. 

Note that we still need to interfere $r^2$ modes, and the effect of dark counts on $r^2$ modes combine, so that this is both more challenging experimentally and has worse error propagation than our protocol. Hence, we argue that our main protocol is more practical than this one.

\bibliographystyle{unsrt}
\bibliography{appt_sch}

\end{document}